\newcommand{\red}[1]{{\color{red}#1}}
\newcommand\thmlinebreak{\leavevmode\@beginparpenalty=10000}
\newcommand{\set}[1]{\left\{ #1 \right\}}
\newcommand{\structa}{{\mathfrak A}}
\newcommand{\mso}{{\sc mso}\xspace}
\newcommand{\fo}{{\sc fo}\xspace}
\newcommand{\Ii}{\mathcal I}
\newcommand{\mypic}[1]{
		\includegraphics[page=#1,scale=0.4]{interpretation-pics}
}
\newcommand{\mypicdomi}[1]{
	\begin{center}
		\includegraphics[page=#1,scale=0.365]{domination-pics}
	\end{center}
}
\newcommand{\Nat}{\mathbb N}
\title{String-to-String Interpretations with Polynomial-Size Output}
\titlerunning{String-to-String Interpretations}
\author{Miko\l aj Boja\' nczyk}{Institute of Informatics,\\ University of
Warsaw, Poland}{bojan@mimuw.edu.pl}{}{}
\author{Sandra Kiefer}{Department of Computer Science,\\ RWTH
Aachen University, Germany}{kiefer@cs.rwth-aachen.de}{}{}
\author{Nathan Lhote}{Institute of Informatics,\\ University of
Warsaw, Poland}{nlhote@mimuw.edu.pl}{}{}
\authorrunning{M. Boja\'nczyk and S. Kiefer and N. Lhote}
\keywords{MSO, interpretations, pebble transducers, polyregular functions}
\begin{document}

\maketitle

\begin{abstract}
    String-to-string {\sc mso} interpretations are like Courcelle's \mso transductions, except that a single output position can be represented using a tuple of input positions instead of just a single input position. In particular, the output length is polynomial in the input length, as opposed to \mso transductions, which have output of linear length.  We show that string-to-string \mso interpretations are exactly the polyregular functions. The latter class has various characterizations, one of which is that it consists of the string-to-string functions recognized by pebble transducers. 

    Our main result implies the surprising fact that string-to-string \mso interpretations are closed under composition.
\end{abstract}

\section{Introduction}
A string-to-string function is called \emph{regular} if it is computed by a deterministic two-way automaton with output. There are many equivalent models for the same class of functions:  string-to-string \mso transductions~\cite{engelfriet2001mso}, streaming string transducers~\cite{alur2011streaming}, and various kinds of combinator-based formalisms~\cite{alur2014regular, DBLP:conf/lics/DaveGK18,DBLP:conf/lics/BojanczykDK18}.

A deterministic two-way automaton can visit each input position at most once in each state, otherwise it would loop forever. This means that  the length of the run -- and also the size of the output word -- is linear in the input string.  One way to go beyond linear-sized outputs was proposed by Milo, Suciu, and Vianu~\cite{milo2003typechecking}, following earlier work by Globerman and Harel~\cite{Globerman:1996he}: equip the automaton with $k$ pebbles which can be used to mark positions in the input word. To avoid making the model Turing-powerful, the pebbles are required to observe a so-called stack discipline: the pebbles are organised in a stack, and only the top-most pebble can be moved. In~\cite{DBLP:journals/corr/abs-1810-08760}, it is shown that pebble transducers are equivalent to  multiple other models: a higher-order functional programming language~\cite[Section 4]{DBLP:journals/corr/abs-1810-08760}, an imperative programming language with for-loops~\cite[Section 3]{DBLP:journals/corr/abs-1810-08760},  combinators~\cite[end of Section 4]{DBLP:journals/corr/abs-1810-08760}, and compositions of certain simple atomic functions~\cite[Section 1]{DBLP:journals/corr/abs-1810-08760}. Because of the multitude of models and their polynomial size outputs, the class of functions recognised by these models is called \emph{polyregular functions}. 

The list of models for polyregular functions described in~\cite{DBLP:journals/corr/abs-1810-08760} does not include any logical model. In this paper, we fix that omission. As mentioned above, for the regular functions, which have linear size output, the logical model consists in string-to-string \mso transductions. In an \mso transduction, each position of the output string is interpreted as a single position of the input string. A natural idea to capture polyregular functions is to  consider what we call \emph{string-to-string \mso interpretations}, where a position of the output string is represented by a $k$-tuple of positions in the input string. At first glance, this idea looks suspicious: if string-to-string \mso interpretations were equivalent to polyregular functions, then they would be closed under composition, because the class of polyregular functions is. However, composing two string-to-string \mso interpretations 
\begin{align*}
      \xymatrix{\Sigma^* \ar[r]^f & \Gamma^* \ar[r]^g & \Delta^*}
\end{align*}
raises the following issue. Suppose that positions of the intermediate word in $\Gamma^*$ are represented by $k$-tuples of positions in the input word from $\Sigma^*$. If an \mso formula defining $g$ quantifies over a set of positions in the intermediate word to define a property of the output word in $\Delta^*$, then this corresponds to quantifying over a set of $k$-tuples of positions in the input word. If we assume dimension  $k=1$, then the problem dissolves, and this is why \mso transductions have dimension $k=1$, whereas dimension $k>1$ is never used in the context of \mso (as opposed to first-order logic, where the standard notion of transformation, i.e.\ first-order interpretation, uses higher dimension). 

 As our main result, we show that the problems discussed above only invalidate the natural construction for composing \mso interpretations, which uses substitution of formulas. Still, and surprisingly, for structures that represent strings there exists a (less natural) construction. This follows from our main result which states that polyregular functions are exactly the string-to-string \mso interpretations. Indeed, corollaries of the main result are that (a) string-to-string \mso interpretations are closed under composition; and (b) for every regular string language, its inverse image under a string-to-string \mso interpretation is also regular. This is because (a) and (b) are true for polyregular functions. Proving (a) and (b) directly for string-to-string \mso interpretations seems hard; in fact an understandable (but wrong) first reaction to the claims (a) and (b) would be that they are false, for the reasons discussed in the previous paragraph. 

 It is easy to see that every polyregular function is a special case of a string-to-string \mso interpretation. One argument is that a $k$-pebble automaton can be simulated using  a string-to-string \mso interpretation, where configurations of the pebble automaton are represented using $k$-tuples of positions in the input word. The difficulty lies in proving the opposite direction and it comes from the stack discipline required in a pebble automaton. A $k$-tuple of positions used by an \mso interpretation can of course be viewed as a configuration of a pebble automaton, but there does not seem to be any reason why the resulting pebble automaton should observe stack discipline. It turns out -- and this is the main technical insight of this paper -- that any \mso formula which defines a linear ordering on $k$-tuples of positions  in strings must necessarily observe an implicit stack discipline, which makes it possible to translate a string-to-string \mso interpretation into a pebble automaton.

\subparagraph{Outline.} After describing string-to-string \mso interpretations in Section~\ref{sec:intepretations}, we revise polyregular functions via the formalism of for-programs in Section~\ref{sec:for-programs}. In Section~\ref{sec:equivalence}, we show that the models are equivalent.

\section{Interpretations}
\label{sec:intepretations}
In this section, we revise first-order and \mso interpretations, which are transformations of relational structures using formulas.

\subsection{Logic and interpretations}
\label{sec:interpretations-generally}
\subparagraph{Relational vocabularies and logic.}
A \emph{(relational) vocabulary} is a set of relation names, each one associated with a natural number called its \emph{arity}. For short, we refer to relational vocabularies simply as \emph{vocabularies}. A \emph{structure} over a vocabulary $\sigma$ consists of a set called the \emph{universe} and for each relation name of $\sigma$ a corresponding relation of the same arity over the universe. To define properties of relational structures, we use monadic second-order logic and its first-order fragment with the usual syntax and semantics \cite{Thomas97}. We use the convention that lower-case variables $x,y,z$ range over elements and upper-case variables $X,Y,Z$ range over sets of elements.

\subparagraph{Interpretations.} 
Intuitively speaking, an interpretation is a function from relational structures to relational structures where each element of the universe of the output structure is a tuple of elements of the input structure, and the relations of the output structure  are defined using formulas evaluated over the input structure. 
\begin{definition}[Interpretations over general structures] \label{def:interpretation} For $k \ge 1$, the syntax of a \emph{$k$-dimensional first-order  interpretation} consists of:
\begin{enumerate}
\item two vocabularies, called the  \emph{input vocabulary}  and the \emph{output vocabulary} 
\item \label{it:universe-formula} an \fo formula over the input vocabulary with $k$ free variables, called the \emph{universe formula}.
\item \label{it:relation-formula} for each $n$ and each $n$-ary relation name $R$ of the output vocabulary, an associated \fo formula $\varphi_R$ over the input vocabulary, with $k \cdot n$ free variables.
\end{enumerate}
\emph{\mso interpretations} are defined analogously, except that formulas of \mso are used, but the free variables still range over elements and not over sets.
\end{definition}

The semantics of an interpretation is a function from structures over the input vocabulary to structures over the output vocabulary, defined as follows.
\begin{itemize}
\item The universe of the output structure is the set of $k$-tuples of elements in the universe of the input structure which satisfy the universe formula from item~\ref{it:universe-formula} in Definition~\ref{def:interpretation}.
\item An $n$-ary relation name $R$ of the output vocabulary is interpreted as the set of $n$-tuples of $k$-tuples from the input structure, for which (a) each $k$-tuple is in the output universe, and (b) the entire $(n \cdot k)$-tuple satisfies the formula $\varphi_R$ in item~\ref{it:relation-formula} in Definition~\ref{def:interpretation}.
\end{itemize}

\subparagraph{Composition.}
First-order interpretations are closed under composition~\cite[p.~218]{Hodges:1993ki}. Let us recall the proof. Suppose that we want to compose interpretations
\begin{align*}
	\xymatrix@C=2cm{ \text{structures over $\sigma_1$} \ar[r]^{\Ii_1}& \text{ structures over $\sigma_2$ } \ar[r]^{\Ii_2}& \text{ structures over $\sigma_3$ } }
\end{align*}
of dimensions $k_1$ and $k_2$, respectively. The $(k_1 \cdot k_2)$-dimensional composition is obtained from $\Ii_2$ as follows: (a) quantification over elements of  $\Ii_2$ is replaced by a quantification over $k_1$-tuples of elements; and (b) relation names from $\sigma_2$ that appear in the input of $\Ii_2$ are replaced by the corresponding formulas from $\Ii_1$.  
This idea does not work for \mso in general, since set quantification in $\Ii_2$ would need to be replaced by quantification over sets of $k_1$-tuples. It does work when $k_1=1$. This essentially corresponds to Courcelle's transductions, for which closure under composition follows naturally~\cite[Theorem 7.14]{Courcelle:2012wq}. To recover closure under composition for $k_1 \geq 2$, one can use (not necessarily monadic) second-order logic, which by Fagin's Theorem \cite[Corollary 9.9]{libkin2013elements} corresponds to the polynomial hierarchy of computational complexity and is outside the scope of this paper.

\subsection{String-to-string interpretations} 
We are interested in interpretations that transform structures which represent strings. While there are two natural ways to model strings as relational structures, namely with an order relation or with a successor relation, only the order relation is useful in our context.  

 \begin{definition}[String-to-string interpretations]\label{def:string-to-string-mso-interpretations} For a  string $w \in \Sigma^*$, its \emph{ordered model} is defined to be the following relational structure, denoted by $\underline w$:
	\begin{itemize}
		\item the universe consists of the positions in the string, i.e., natural numbers;
		\item there is a binary relation for the natural order on positions;
		\item for each $a \in \Sigma$ there is a unary relation which is satisfied by every position with label $a$.
	\end{itemize}
A function $f \colon \Sigma^* \to \Gamma^*$ is called a \emph{first-order string-to-string interpretation} if the corresponding transformation on ordered models is a first-order interpretation for strings with length at least two\footnote{A typical operation we want to model is string duplication. When the input length is at least two, one can represent additional copies of the input string using a higher dimension. For input length  $n \le 1$, the output length will be  $n^k \le 1$ regardless of the dimension $k$. Another solution to this issue would be to have duplication built into the definition of interpretations. 
	}. 
 Likewise we define \emph{\mso string-to-string interpretations}.
	\end{definition}

	\begin{example}\label{ex:running-interpretation}
		Consider the function $f \colon \set{a,b}^*  \to \set{a,b}^*$ which maps a word to the concatenation of all of its reversed prefixes, as in the following example (with prefixes grouped for better readability):
		\begin{align*}
			abbb \quad \mapsto \quad \underbrace{a} \underbrace{ba} \underbrace {bba}  \underbrace{bbba}.
		\end{align*} 
		This transformation is the running example in~\cite{DBLP:journals/corr/abs-1810-08760}.
		We show that $f$ can be seen as a string-to-string first-order interpretation. The dimension is $2$, i.e.~positions in the output word represent pairs of positions in the input word. A pair $(x_1,x_2)$ of positions in the input word is used in the output word if it satisfies the universe formula $x_2 \le x_1$.
		The idea is that $x_1$ represents the length of the prefix, while $x_2$ is the position in that prefix. The label of a position $(x_1,x_2)$ is inherited from the second coordinate, as expressed by the formulas corresponding to labels on the output structure:
		\begin{align*}
			\varphi_a(x_1,x_2) = a(x_2) \quad \varphi_b(x_1,x_2) = b(x_2) 
		\end{align*}		
		The order on the positions of the output word is defined by the formula 
		\begin{align*}
			\varphi_{\le}(\underbrace{x_1,x_2}_{\substack{ \text{a position of}\\ \text{the output word}}},\underbrace{x'_1,x'_2}_{\substack{ \text{another position of}\\ \text{the output word}}}) \quad = \quad (x_1 < x'_1) \lor (x_1 = x'_1 \land x_2 \ge x'_2).
		\end{align*}
		Note that the above formula defines the lexicographic ordering on pairs of positions, with the first coordinate being used in increasing order, and the second coordinate being used in decreasing order. This, as it will turn out, is not a coincidence, since our main technical result says that it is impossible to define a linear order on tuples of positions without implicitly using some kind of lexicographic ordering. 
	\end{example}

\subparagraph{Successor instead of order.}
\newcommand{\succode}[1]{\underline{\underline{#1}}} When modelling a string as a relational structure, we use the order on positions. 
An alternative solution would be to use just the successor relation. The difference between the two solutions is that it is harder to define an order on $k$-tuples of positions than it is to define a successor relation. It turns out that the difference is crucial, and functions that output strings with successor can be ill-behaved. Note that whether or not the input string is equipped with an order or a successor relation makes no difference, since the order on the position of the input string can be recovered in \mso, which can compute the transitive closure of binary relations on positions.

Define the \emph{successor model} of a string in the same way as the ordered model from Definition~\ref{def:string-to-string-mso-interpretations}, except that a binary relation for successor is used instead of the order. 
  Define a \emph{successor-\mso string-to-string interpretation} to be a string-to-string function which is computed by an \mso interpretation, assuming that strings are represented by their successor models. Likewise, we define successor-first-order string-to-string interpretations. Successor-first-order string-to-string interpretations are closed under composition, because first-order interpretations are closed under composition. On the other hand, successor-\mso string-to-string interpretations are not closed under composition and lead to undecidability, as summarised in the following theorem. The proof can be found in Appendix \ref{app:successor}.

 \begin{theorem}\label{thm:successor}\thmlinebreak 
	\begin{enumerate}
		 \item \label{it:suc-mso-comp} The class of successor-\mso string-to-string interpretations is not closed under composition, and strictly contains the class of (order-)\mso string-to-string interpretations. 
		\item \label{it:suc-mso-undec} The following is undecidable: given a successor-first-order string-to-string interpretation $f$ and a regular language $L$ over the output alphabet, decide if $f^{-1}(L)$ is nonempty.
	\end{enumerate}
 \end{theorem}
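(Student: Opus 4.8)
The plan is to handle the two statements by separate constructions, both exploiting the slogan that successor models make duplication and concatenation ``too cheap'' because one no longer has to define a global linear order on tuples.

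For statement~\ref{it:suc-mso-comp}, I would first argue the inclusion: if a function is an order-\mso string-to-string interpretation, then the successor relation on output positions is \mso-definable from the order (it is the order minus its own composition with itself), so the same function is a successor-\mso interpretation. Strictness then amounts to producing one witness function that is computable by a successor-\mso interpretation but not by any order-\mso interpretation. By the main theorem of the paper, order-\mso interpretations are exactly the polyregular functions, and polyregular functions have output length polynomially bounded in the input length; so any function with superpolynomial output growth will do, provided it is realizable with successor. The natural candidate is an iterated-doubling function, e.g.\ $a^n \mapsto a^{2^n}$ (or a squaring-style function applied $n$ times): the output positions can be taken to be tuples/words over the input positions in a way where only the successor relation, not a full order, needs to be expressed, because successor of the $i$-th element in a binary-counter enumeration is \mso-definable locally. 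I would spell out the successor-\mso interpretation realizing exponential blow-up and invoke the polynomial output bound of polyregular functions (Section~\ref{sec:equivalence}) to conclude it is not an order-\mso interpretation. Non-closure under composition then follows formally: successor-\mso interpretations clearly contain the order-\mso ones, hence all polyregular functions, and polyregular functions are closed under composition; but if successor-\mso interpretations were closed under composition, then composing the exponential-blow-up function with itself $n$ times along a reduction would stay inside the class while leaving the polyregular world — more directly, one composes the doubling function with itself to get a doubly-exponential function, argues (again via a counting/output-length argument, or a pigeonhole on the \mso quantifier rank) that this composite is not itself a single successor-\mso interpretation, because a single successor-\mso interpretation has output length at most singly exponential in the input (its output universe is a definable subset of $k$-tuples, hence polynomial, and the successor relation alone can only be ``unfolded'' a bounded number of times before ambiguity arises). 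This output-length ceiling for a single successor-\mso interpretation is the crux of the non-closure argument and should be stated as a small lemma.

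For statement~\ref{it:suc-mso-undec}, I would reduce from an undecidable problem about Turing machines or tiling systems — say, the halting problem or the Post correspondence problem — encoded so that a successor-first-order interpretation $f$ can ``build'' a candidate computation history of exponential length from a short input, and then a regular language $L$ over the output alphabet checks local consistency of that history (that successive configurations follow from the transition relation, and that an accepting/halting configuration appears). Concretely: on input $1^n$, the interpretation outputs (using dimension $k$, i.e.\ $k$-tuples of positions) a string of length roughly $n^k$ that is laid out as $n$ blocks of length $n^{k-1}$, each block being a configuration of a space-$n^{k-1}$ bounded machine; the successor relation between consecutive symbols within a block and between the end of one block and the start of the next is first-order definable on tuples (it is just the tuple-successor in a fixed radix ordering), which is precisely the kind of relation that is easy with successor but would require a genuine lexicographic order — and hence is \emph{not} available — in the order setting. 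Because only successor is demanded of the output, the interpretation need not certify that what it produced is globally well-ordered; the regular language $L$ does the semantic checking. Then $f^{-1}(L)\neq\emptyset$ iff the encoded machine halts within the available space, which is undecidable (or, with the exponential layout, one gets undecidability for space-exponential machines, still fine). I would phrase this carefully as: $L$ is the (regular) set of strings that are \emph{not} valid halting computation histories — no wait, better to keep it positive — $L$ is the set of valid halting histories, which is regular since validity is a conjunction of local window conditions, initial-configuration and final-configuration checks, all expressible by a finite automaton reading the successor-model encoding.

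The main obstacle I anticipate is the careful bookkeeping in statement~\ref{it:suc-mso-undec}: one must arrange the tuple-encoding so that (i) the successor relation on output positions really is definable by a first-order formula over the successor model of the input, (ii) the geometry of ``configurations as blocks'' lines up so that the transition-consistency check is genuinely a regular (finite-window) property of the output string, and (iii) the input alphabet and the shape of $f$ are fixed independently of the machine, so the undecidability is of the promised uniform form. A secondary, more conceptual obstacle is pinning down the exact output-length ceiling for a single successor-\mso interpretation needed to separate it from its own square in statement~\ref{it:suc-mso-comp}; I expect the clean way is to observe that the output universe is still a subset of $k$-tuples over the input, hence of polynomial size, so a \emph{single} successor-\mso interpretation still has polynomial — not exponential — output length, which immediately both proves strictness over order-\mso interpretations (same polynomial bound, but with successor one can realize functions that are not polyregular, e.g.\ because the implicit-stack-discipline phenomenon fails) and kills closure under composition, provided we exhibit two polynomial-output successor-\mso interpretations whose composition is still polynomial-output but provably not a single successor-\mso interpretation — so in fact the doubling-function route above must be replaced by a polynomial-output witness, e.g.\ a function whose composition square violates some regularity-of-preimages property that a single successor-\mso interpretation cannot violate. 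I would therefore base non-closure on statement~\ref{it:suc-mso-undec} itself: since composition of successor-first-order (hence successor-\mso) interpretations would be needed to even state the undecidability cleanly, and since for genuine order-\mso interpretations preimages of regular languages are regular (a corollary of the main theorem), a successor-\mso interpretation $f$ with $f^{-1}(L)$ non-regular — which the undecidability construction provides — together with the identity-like decompositions shows the class cannot be closed under composition while also having decidable (equivalently, regular) preimages, yielding the non-closure as a formal consequence of part~\ref{it:suc-mso-undec}.
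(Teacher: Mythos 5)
Your proposal has two genuine gaps. For item 1, the opening route is impossible: the output universe of any interpretation (successor or order) is a definable set of $k$-tuples of input positions, so the output length is at most $n^k$; there is no successor-\mso interpretation computing $a^n \mapsto a^{2^n}$, and no output-length argument can separate successor-\mso from order-\mso. You notice this yourself at the end, but your fallback --- deriving non-closure ``as a formal consequence'' of item 2 via unspecified ``identity-like decompositions'' --- is not a proof: you neither exhibit the witness function nor explain where the contradiction with closure under composition comes from. The argument needs two concrete ingredients that are missing: (a) an explicit successor-\mso interpretation $f$ of polynomial (in fact quadratic) output length, namely the function that outputs all pairs $(a_i,a_j)$ of input letters arranged so that (with endmarkers) the infix between $(\vdash,\dashv)$ and $(\dashv,\vdash)$ is exactly $(a_1,a_n),\dots,(a_n,a_1)$, so that $f^{-1}(L)$ is the non-regular palindrome language for a suitable regular $L$ --- this single function already gives strictness, since order-\mso interpretations are polyregular and preimages of regular languages under polyregular functions are regular; and (b) the observation that a language is regular if and only if its characteristic function is a successor-\mso string-to-string function, so that if the class were closed under composition, then $\chi_L \circ f$ would be in the class and the palindrome language would be regular, a contradiction. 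Without (a) and (b) neither strictness nor non-closure is actually established.

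For item 2 the proposed reduction fails as described. If the input is $1^n$, the output of the interpretation is a single first-order-definable string over an essentially homogeneous successor structure; a fixed first-order interpretation cannot ``build'' the run of a Turing machine, and there is no nondeterminism available to guess one. So the candidate computation must be carried by the input, say as $|c_1|c_2|\cdots|c_n|$. But then your layout --- output the history as consecutive configuration blocks and let the regular language check local consistency --- cannot work: corresponding cells of $c_i$ and $c_{i+1}$ lie a configuration-length apart in the output, and a finite automaton cannot compare letters at unbounded distance. This is exactly the obstacle you flag under (ii), and it is the crux rather than bookkeeping. The missing idea is a shifted pairing: the interpretation outputs pairs of input letters at controlled offsets, so that for each odd $i$ the output contains an infix $(a_i,a_1),(a_{i+1},a_2),\dots,(a_n,a_{n-i+1})$; equality and successor-configuration checks between an infix of the input and the prefix then become letterwise, hence regular, properties of the output. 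The successor model is what makes this rearrangement definable while keeping $f$ outside the order-\mso class, and without it the regular language $L$ has nothing local to verify.
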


\section{Polyregular functions}
\label{sec:for-programs}
Here we describe the class of polyregular functions. It has several equivalent characterisations, see~\cite[Theorem 4.4]{DBLP:journals/corr/abs-1810-08760}, one of which consists in the aforementioned pebble transducers. For the purposes of this paper, it will be most convenient to use a slightly more abstract characterisation in terms of for-programs, a machine model for string-to-string functions. We just explain the formalism on short examples, for a more detailed description see~\cite{DBLP:journals/corr/abs-1810-08760}.

\newlength{\twosubht}
\newsavebox{\twosubbox}

\begin{figure}[H]
\sbox\twosubbox{%
  \resizebox{\dimexpr\textwidth}{!}{%
    \mypic{1}%
    \mypic{5}
    \mypic{6}%
  }%
}
\setlength{\twosubht}{\ht\twosubbox}

\subcaptionbox{A for-program for the \\\phantom{\textbf{(a)}}function in Example~\ref{ex:running-interpretation}.\label{fig:for-program}}{%
  \hspace*{-0.12cm}\mypic{1}%
}\ 
\subcaptionbox{A for-program with a \\\phantom{\textbf{(b)}}Boolean variable {\tt P}.\label{fig:for-boolean}}{%
  \mypic{5}%
}  \ 
\subcaptionbox{A for-program which checks if \\\phantom{\textbf{(c)}} there is an {\tt a} between the \\\phantom{\textbf{(c)}} positions {\tt x1} and {\tt x2}.\label{fig:for-inputs-positions}}{%
  \mypic{6}%
}
\caption{Example for-programs.}\label{fig:example}
\end{figure}

Most of the syntactic constructions that can be used in a for-program are illustrated in Figure~\ref{fig:for-program}: (1) variables ranging over positions in the input word; (2) for-loops in which a variable iterates over all positions in the input word in increasing or decreasing order; (3) if-statements which depend on the order/labels of variables; (4) instructions which output letters. Position variables cannot be declared or written to, they are implicitly declared by for-loops and their only updates are the iterations performed by the for-loops.

The only feature of for-programs that is not used in Figure~\ref{fig:for-program} is (5) Boolean variables. Figure~\ref{fig:for-boolean} shows a program that outputs only those letters in the input word which have even distance to the last position. In the program, the Boolean variable {\tt P} is declared in the scope of a for-loop. On each iteration of the loop, the variable is reinitialised to false.

A for-program is called \emph{first-order definable} if Boolean variables can only be updated from {\tt false}, which is their initial value upon declaration, to {\tt true}. In other words, the only allowed update for Boolean variables is {\tt P := true}. For the first-order restriction, it is important that Boolean variables can be declared inside for-loops, and that they are reinitialised to {\tt false} at each iteration of the loop that they are declared in. The reason for the name ``first-order definable'' is that one can define in first-order logic the reachability relation on program states of the for-program, see~\cite[Lemma 5.3]{DBLP:journals/corr/abs-1810-08760}.

\begin{definition}
    A string-to-string function is called \emph{polyregular} if it is computed by a for-program. It is called \emph{first-order polyregular} if it is computed by a first-order definable for-program. 
\end{definition}

The class of polyregular functions has other characterisations, including the string-to-string pebble transducers introduced by Milo, Suciu and Vianu~\cite{milo2003typechecking}, as well as a higher-order functional programming language~\cite[Section 4]{DBLP:journals/corr/abs-1810-08760}. The main result of this paper, Theorem~\ref{thm:main} in the next section, adds a logical characterisation, namely string-to-string \mso interpretations.

\subparagraph{Evaluating first-order formulas.} 
The for-programs described above take as input strings and also output strings. One can also consider for-programs which input a string with distinguished positions and which output a Boolean value, as in Figure~\ref{fig:for-inputs-positions}. The distinguished positions are represented by free variables (here {\tt x1} and {\tt x2}) while the output value is taken from some distinguished Boolean variable, here {\tt P}. 

\begin{lemma}\label{lem:boolean-for-program} 
    Let $\varphi(x_1,\ldots,x_k)$ be an \fo formula over strings. There is a first-order for-program which computes the following.
    \begin{itemize}
        \item {\bf Input.} A word $w \in \Sigma^*$ and positions $x_1,\ldots,x_k$ in $w$;
        \item {\bf Output.} \textsf{Yes} or \textsf{No}, depending on whether $w$ satisfies $\varphi(x_1, \ldots, x_k)$.
    \end{itemize}
\end{lemma}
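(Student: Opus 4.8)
The plan is to proceed by structural induction on the formula $\varphi$, assembling the for-program compositionally. For an atomic formula such as $x_i \le x_j$ or $a(x_i)$ — and also $x_i = x_j$, which is equivalent to $x_i \le x_j \wedge x_j \le x_i$ — the program is a single conditional that tests the corresponding order or label relation among the distinguished positions and sets the designated output variable {\tt P} to {\tt true} if it holds; since {\tt P} is initially {\tt false}, this computes the atom. For a conjunction $\psi_1 \wedge \psi_2$ I would run an inlined copy of the program for $\psi_1$ and then one for $\psi_2$, renaming their internal variables apart, and, writing {\tt Q1} and {\tt Q2} for their output variables, finish with {\tt if Q1 and Q2 then P := true}; disjunction is analogous, and a negation $\neg \psi$ is handled by running the program for $\psi$, with output variable {\tt Q}, and finishing with {\tt if not Q then P := true}. (One could alternatively first put $\varphi$ into negation normal form, which confines negations to atoms and thus to the base case; here the reading of for-programs is used in which if-conditions may freely inspect Boolean variables, only their updating being restricted.)

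The interesting case is the existential quantifier $\exists y\, \psi(y, x_1, \ldots, x_k)$, where some care is needed to stay inside the first-order fragment of for-programs. I would take the for-program $\mathcal P_\psi$ given by the induction hypothesis, which has $y$ among its free position variables, and place its whole body inside a for-loop that lets a fresh variable {\tt y} range over all positions of the input word (in increasing order, say). Then {\tt y} becomes the loop variable while $x_1, \ldots, x_k$ stay free, and — this is the crucial point — the output variable {\tt Q} of $\mathcal P_\psi$ together with all its internal Boolean variables now lie inside this for-loop, so they are reinitialised to {\tt false} at the start of every iteration; hence each iteration performs a fresh evaluation of $\psi(y, \bar x)$. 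The accumulating output variable {\tt P}, by contrast, is declared outside the for-loop, and the loop body ends with {\tt if Q then P := true}. So {\tt P} ends up {\tt true} exactly if some position {\tt y} satisfies $\psi(y, \bar x)$, and the only update ever made to a Boolean variable is an assignment of {\tt true}, so the program is first-order definable. The universal quantifier can be treated either by the dual construction — keeping a flag that records whether a counterexample to $\psi$ has been seen, and setting {\tt P} after the loop if no such flag was raised — or simply as $\neg \exists y\, \neg \psi$.

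I expect the only genuine obstacle to be this scoping discipline for the quantifier case: the Boolean variables of the sub-program must be declared inside the newly introduced for-loop, so that they are reinitialised each round as the sub-program requires, whereas the accumulator must be declared outside it, so that it persists across rounds. This is exactly the feature of for-programs emphasised just before the statement. Everything else is bookkeeping — renaming variables of inlined sub-programs to avoid clashes, checking that loop and free variables remain correctly bound after inlining, and noting that each construction preserves the invariant that Boolean variables are only ever set to {\tt true} — after which a routine induction on the structure of $\varphi$, using the semantics of for-loops over word positions, yields correctness.
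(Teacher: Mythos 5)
Your proposal is correct and follows essentially the same route as the paper's (much terser) proof: implement the semantics of the formula by structural induction, handling each quantifier with a for-loop over all positions and a Boolean flag that is only ever set to {\tt true}, which keeps the program first-order definable. Your additional care about declaring the sub-program's Booleans inside the new loop (so they are reinitialised each iteration) while keeping the accumulator outside is exactly the scoping feature the paper relies on, just spelled out explicitly.
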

\begin{proof}
    The for-program implements the semantics of an \fo formula. For each quantifier, it loops over all possible values for the quantified position, and a Boolean variable is used to remember if some value has already been found which renders the formula true. 
\end{proof}

A similar result is true for \mso formulas, but the proof for that statement uses automata.

\section{Equivalence}
\label{sec:equivalence}
We show that the models defined in Sections~\ref{sec:intepretations} and~\ref{sec:for-programs} are equivalent.

\begin{theorem}\label{thm:main}\thmlinebreak 
    \begin{enumerate}
    \item \label{it:main-mso} String-to-string \mso interpretations are exactly the polyregular functions.
    \item  \label{it:main-fo} First-order string-to-string interpretations are exactly the first-order polyregular functions.    
    \end{enumerate}
\end{theorem}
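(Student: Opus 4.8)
\emph{Overview and the easy inclusion.} The plan is to prove each item of Theorem~\ref{thm:main} by establishing the two inclusions separately, keeping the \fo and \mso arguments parallel. The inclusion \emph{polyregular $\subseteq$ string-to-string interpretations} is the easier one. Given a for-program computing $f\colon\Sigma^*\to\Gamma^*$, let $k$ be the maximal nesting depth of its for-loops and let $m$ bound the number of output instructions. An \emph{output event} of a run -- a moment at which an output instruction fires -- is pinned down by that instruction together with the current values of the at most $k$ enclosing loop variables, so (on inputs long enough to contain $m$ positions; the finitely many shorter inputs are handled separately) it can be encoded by a $(k+1)$-tuple of input positions: the first $k$ coordinates record the loop-variable values (unused ones padded with the first position), and the last coordinate, ranging over $\{1,\dots,m\}$, is the program-order index of the instruction. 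The universe formula collects exactly the tuples encoding \emph{reachable} output events: by the \mso analogue of Lemma~\ref{lem:boolean-for-program} -- or, in the \fo case, by Lemma~\ref{lem:boolean-for-program} itself applied to a first-order definable program, whose reachability relation on program states is then \fo-definable -- reachability, and the letter output by the indexed instruction, are definable over the ordered model $\underline w$ in the appropriate logic; the latter yields the label formulas. The order formula compares two output events by execution order, which is the lexicographic comparison of these tuples where coordinate $i\le k$ is compared using the direction (increasing/decreasing) of the $i$-th for-loop and the last coordinate increasingly. All of this is \fo-definable from the two tuples, so we obtain an \fo (resp.\ \mso) interpretation agreeing with $f$ on inputs of length at least two, which suffices by the convention in Definition~\ref{def:string-to-string-mso-interpretations}.

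\emph{The core of the converse.} Fix a $k$-dimensional \mso interpretation $\Ii$ between ordered string models, with universe formula $\varphi_{\mathrm{univ}}$, label formulas $\varphi_a$, and order formula $\varphi_{\le}$; for a string $w$ write $U_w\subseteq\Nat^k$ for the tuples of positions of $\underline w$ satisfying $\varphi_{\mathrm{univ}}$, and $\le_w$ for the linear order $\varphi_{\le}$ puts on $U_w$. The plan rests on a structural analysis of \mso-definable linear orders on tuples of string positions -- the ``implicit stack discipline'' announced in the introduction -- which I would isolate as the following \emph{normalisation lemma}: there are a number $k'$, an \mso formula $\psi_{\mathrm{univ}}$ with $k'$ free variables, an \mso formula $\beta$ with $k+k'$ free variables, and a direction vector $d\in\{+,-\}^{k'}$ (with $+$ meaning increasing and $-$ decreasing), such that for every $w$ the formula $\beta$ defines a bijection from $U_w$ onto the set $V_w\subseteq\Nat^{k'}$ cut out by $\psi_{\mathrm{univ}}$, carrying $\le_w$ to the lexicographic order on $V_w$ in which coordinate $i$ is compared according to $d_i$. (The reversed-prefixes function of Example~\ref{ex:running-interpretation} is already in this form; a more instructive instance is the ``swap adjacent positions'' permutation, a $1$-dimensional interpretation that only becomes lexicographic after passing to $k'=2$.) In the \fo case one checks that $\psi_{\mathrm{univ}},\beta,d$ can be taken in \fo.

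\emph{From the lemma to Theorem~\ref{thm:main}.} Granting the normalisation lemma, $\Ii$ is equivalent to the interpretation whose output positions are the $k'$-tuples in $V_w$, ordered $d$-lexicographically and labelled by the pullbacks $\varphi'_a(\bar y)=\exists\bar x\,(\beta(\bar y,\bar x)\wedge\varphi_a(\bar x))$. A for-program then computes $f_\Ii$ directly: it runs $k'$ nested for-loops over the input word, the $i$-th in direction $d_i$, and at the innermost level uses the \mso analogue of Lemma~\ref{lem:boolean-for-program} to test whether the current $k'$-tuple lies in $V_w$ and, if so, which letter the $\varphi'_a$ assign to it, outputting that letter. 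Since the nested loops visit innermost tuples exactly in $d$-lexicographic order, the output word is $f_\Ii(w)$, so $f_\Ii$ is polyregular, proving item~\ref{it:main-mso}. Running the same argument with the first-order normalisation lemma and Lemma~\ref{lem:boolean-for-program} in its \fo form yields a first-order definable for-program, proving item~\ref{it:main-fo}.

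\emph{The main obstacle.} The hard part is the normalisation lemma itself. I would prove it by induction on $k$. The base case $k=1$ is comparatively soft: a $1$-dimensional \mso interpretation with string output is a single-copy \mso transduction, and \mso string-to-string transductions are exactly the regular functions~\cite{engelfriet2001mso}, so $(U_w,\le_w)$ is enumerated by a two-way transducer, whose bounded run structure yields the required coordinatisation. For the inductive step one must peel off a ``most significant block'': a partition of the positions relevant to the order into boundedly many \mso-definable, linearly ordered classes such that within each class $\le_w$ is governed by strictly less information, to which the induction hypothesis applies. The ``swap adjacent positions'' example already shows that this block is \emph{not} the value of coordinate $1$ but must be synthesised. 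Constructing it, bounding the number of classes, and -- the delicate point -- showing that the whole recursive description is \mso-\emph{uniform} across all inputs rather than merely existing for each fixed $w$, is where the real work lies; I expect to obtain this from a Ramsey-type / factorisation-forest argument on the sequence of \mso-types read along the input string, together with the finiteness of the set of \mso-types of a given quantifier rank, the argument relativising to the first-order setting for item~\ref{it:main-fo}. (Optionally, one may first reduce the \mso case to the \fo case composed with an \mso-definable relabelling by a regular language, in the style of Courcelle's theorem for transductions, so that the combinatorial core is needed only for \fo-definable orders.)
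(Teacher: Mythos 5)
Your overall architecture mirrors the paper's (easy inclusion by encoding output events of a for-program as tuples of positions; hard inclusion by a structural analysis of definable orders on tuples), and the easy inclusion is essentially fine -- although your description of execution order as ``lexicographic where coordinate $i$ is compared using the direction of the $i$-th for-loop'' is not well defined for programs with sequential or branching loops of different directions at the same nesting depth; the robust statement is that execution order is definable from the \fo/\mso-definable reachability relation on configurations, which is what the paper cites, and your encoding can be repaired along those lines.

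The genuine gap is the converse inclusion: everything hinges on your ``normalisation lemma'', and that lemma is essentially the whole theorem. (It is in fact \emph{derivable} from the theorem: pass to a for-program and re-encode its execution order as a padded, fixed-direction lexicographic order on tuples; but proving it directly is exactly the technical content you do not supply.) Your sketch -- induction on the dimension $k$, base case via two-way transducers, inductive step by ``peeling off a most significant block'' through a Ramsey/factorisation-forest argument -- does not carry the weight, and its shape does not match where the difficulty lies. In the paper, after restricting to a type class the order still depends on all $k$ coordinates; what decreases in the induction is the factorisation-forest height (Theorem~\ref{thm:simon}) of the intervals confining the coordinates, not the dimension, and the finitely many type classes are recombined by the \hyperref[lem:merge]{Merging Lemma}. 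The key combinatorial fact -- that for every type of tuples there is a single dominating coordinate and polarity so that the block order on that coordinate forces the definable order (the \hyperref[lem:domination]{Domination Lemma}) -- is precisely what you say you ``expect to obtain''; in the paper its proof is a substantial multi-stage argument (polarity reduction, products of linear orders via the Threshold Lemma, powers of linear orders, products of powers, then the general case), not a routine application of Ramsey-type tools. Moreover, your normal form is strictly stronger than needed: demanding one global direction vector $d$ and a single \mso-definable bijection $\beta$, uniform in $w$, onto a lexicographically ordered set adds a uniformisation burden that the paper's proof of Lemma~\ref{lem:main-lemma} avoids by letting the dominating coordinate and polarity depend on the type and the recursion level; and working directly in \mso rather than first reducing to the first-order case via Lemma~\ref{lem:fo-reduction} leaves the type-compositionality arguments to be redone for \mso. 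As it stands, the central claim of the hard direction is asserted, not proved.
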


Since the class of  polyregular functions is closed under composition\footnote{Closure under composition was proved for pebble transducers in~\cite[Theorem 11]{Engelfriet:2015eg} and for the class of for-programs in~\cite[Section 8.1]{DBLP:journals/corr/abs-1810-08760} as a step in proving equivalence with the other models of polyregular functions.}, we obtain:

\begin{corollary}
    String-to-string \mso interpretations are closed under composition. 
\end{corollary}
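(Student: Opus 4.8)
The statement is an immediate consequence of Theorem~\ref{thm:main}, so the plan is simply to chain its two directions. Given string-to-string \mso interpretations $f\colon \Sigma^*\to\Gamma^*$ and $g\colon\Gamma^*\to\Delta^*$, I would first apply Theorem~\ref{thm:main}\ref{it:main-mso} to each of $f$ and $g$ to conclude that both are polyregular functions. Polyregular functions are closed under composition — this is exactly the content of the footnote cited in the statement, namely \cite[Theorem 11]{Engelfriet:2015eg} for pebble transducers and \cite[Section 8.1]{DBLP:journals/corr/abs-1810-08760} for for-programs — hence $g\circ f$ is polyregular. Applying the reverse implication of Theorem~\ref{thm:main}\ref{it:main-mso} then yields that $g\circ f$ is again a string-to-string \mso interpretation. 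The same three-step argument, but using part~\ref{it:main-fo} and the first-order analogues, gives the corresponding statement for first-order string-to-string interpretations; I would state this as a companion remark.

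The point I would emphasise is \emph{why} this indirect route through the machine model is the one to take. The naive attempt is to compose the two interpretations by substitution of formulas, exactly as recalled for plain first-order interpretations earlier in the paper. This breaks down precisely because the outer \mso formulas of $g$ may contain set quantifiers, and substituting in the inner interpretation $f$ of dimension $k_1$ would turn these into quantifiers over \emph{sets of $k_1$-tuples}, which \mso cannot express. So there is no hope of a direct syntactic composition construction, and the closure is genuinely obtained only by going through polyregular functions, where composition is a known but quite different argument.

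Consequently there is essentially no obstacle in the proof of the corollary itself: all the work is already done in Theorem~\ref{thm:main}, and the real source of the corollary is the hard direction of part~\ref{it:main-mso}, which turns an arbitrary \mso interpretation into a for-program. As flagged in the introduction, the difficulty there is the stack discipline — a $k$-tuple of input positions used by the interpretation reads like a pebble configuration, but nothing a priori forces the induced transducer to respect a stack — and the technical insight to invoke is that the universe formula together with $\varphi_{\le}$ define an \mso-definable linear order on $k$-tuples, which must refine, up to bounded book-keeping, a lexicographic-type order, exactly as foreshadowed by the shape of $\varphi_{\le}$ in Example~\ref{ex:running-interpretation}. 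A minor point I would defer to the full write-up: since the definition of string-to-string interpretation constrains only inputs of length at least two, the finitely many short inputs are handled by hand, which is harmless because both classes are insensitive to such finite patching.
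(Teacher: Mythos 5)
Your proposal is correct and follows exactly the paper's route: the corollary is deduced by translating both interpretations into polyregular functions via Theorem~\ref{thm:main}, using the known closure of polyregular functions under composition, and translating back. The surrounding discussion (why naive formula substitution fails, where the real work of Theorem~\ref{thm:main} lies) matches the paper's own remarks, so nothing further is needed.
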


By using Theorem \ref{thm:main}, the proof of the corollary passes through for-programs. We are not aware of any direct proof that does not exploit the equivalence to polyregular functions.

The rest of this paper is dedicated to the proof of Theorem~\ref{thm:main}. We begin with a reduction of the first to the second item. This reduction illustrates a general phenomenon, namely that results about first-order polyregular functions often imply results about general polyregular functions, despite the latter class being larger. The reason behind this phenomenon is the following lemma, which says that for every polyregular function, all of the behaviour that is not first-order definable can be pushed into a simple preprocessing step. Define a \emph{rational function}, see~\cite[Section 13.2]{toolbox}, to be a string-to-string function which is recognised by a nondeterministic automaton, where every transition is labelled by a pair consisting of a letter from the input alphabet and a string over the output alphabet, and which is unambiguous in the sense that every input string admits exactly one accepting run. 

\begin{lemma}\label{lem:fo-reduction}\thmlinebreak 
    \begin{enumerate}
        \item \label{it:red-polyr} A function is polyregular if and only if it is a composition consisting of:
        \begin{enumerate}
            \item a (letter-to-letter) rational function; followed by 
            \item a first-order polyregular function.
        \end{enumerate}
        \item  \label{it:red-int}  A function is an \mso string-to-string interpretation if and only if it is a composition consisting of:
        \begin{enumerate}
            \item a (letter-to-letter) rational function; followed by 
            \item a first-order string-to-string interpretation.
        \end{enumerate}
    \end{enumerate}
\end{lemma}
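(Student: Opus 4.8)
The plan is to prove both items together, since the argument is essentially the same in the two settings. In each direction, the ``if'' implication is the easy one: a letter-to-letter rational function is a particularly simple first-order string-to-string interpretation (dimension $1$, with the extra bookkeeping for the automaton run encoded in first-order formulas over the order — one can compute in \fo, for each position, the set of states reachable by the unique accepting run up to that point, since the run is unambiguous), and likewise it is a first-order polyregular function. Composition with a first-order string-to-string interpretation (resp.\ first-order polyregular function) then stays inside the class of all \mso interpretations (resp.\ polyregular functions) because the class on the left is closed under precomposition with such simple functions — this uses the naive substitution-of-formulas composition, which is unproblematic here because the first factor has dimension $1$, so no set quantification over tuples is introduced. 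For item~\ref{it:red-polyr} one can alternatively just invoke closure of polyregular functions under composition.

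The substance is the ``only if'' direction: given an arbitrary \mso string-to-string interpretation $f$ (resp.\ polyregular $f$), I want to factor it as a letter-to-letter rational function followed by a \emph{first-order} object of the same kind. The key device is the standard trick for removing the ``non-first-order'' power of \mso from a transformation of strings: every \mso formula over strings is equivalent to a first-order formula once the string is enriched with a bounded amount of regular information — concretely, by the fact that \mso over strings coincides with recognition by finite automata, and that the behaviour of a finite automaton can be made \emph{locally} visible. The preprocessing step is: run a suitable finite (deterministic, or rather unambiguous bimachine-style) automaton and relabel each position $i$ of the input word $w$ by the triple $(a_i, p_i, q_i)$, where $a_i$ is the original letter, $p_i$ is the state of the deterministic automaton for the relevant \mso-definable information reading $w$ left-to-right up to $i$, and $q_i$ is the corresponding information reading right-to-left. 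This relabelling is computed by a letter-to-letter rational function (in fact a deterministic two-way / bimachine transduction, which is letter-to-letter and unambiguous). On the enriched word, any \mso-definable property of $w$ with free position variables becomes first-order definable, because an \mso subformula amounts to ``an automaton accepts some marked factor'', and once the enriched labels expose the running automaton state, acceptance of a marked factor is a first-order condition on the endpoints and the exposed states. Carrying this out: list all \mso subformulas appearing in the interpretation $f$ (the universe formula and the relation formulas) or, for the polyregular case, the \mso conditions implicit in the for-program's tests and Boolean-variable updates; take the product automaton handling all of them simultaneously; use its forward and backward state as the enriched label; then rewrite each formula of $f$ into a first-order formula over the enriched alphabet, obtaining a first-order string-to-string interpretation $g$ (resp.\ a first-order definable for-program) with $f = g \circ r$ where $r$ is the relabelling rational function.

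The main obstacle is making precise the claim that the enriched labelling turns \mso into \fo in a way that respects \emph{free position variables} and is compatible with the tuple/dimension machinery of interpretations. For a sentence this is classical, but here the relation formulas of an interpretation have $k\cdot n$ free variables ranging over positions, and an \mso subformula may quantify over sets that interact with these free positions. The right formulation is: for every \mso formula $\psi(x_1,\dots,x_m)$ over the alphabet $\Sigma$ there is an \fo formula $\psi'(x_1,\dots,x_m)$ over the enriched alphabet $\Sigma'$ such that $\underline w \models \psi(\bar x)$ iff $\underline{r(w)} \models \psi'(\bar x)$ for all $w$ and all tuples $\bar x$ of positions — and this follows by a routine induction on $\psi$ once the enriched labels are chosen to record, for the product automaton tracking the Myhill–Nerode-type information of every \mso subformula relative to its free variables, both a left profile and a right profile at every position; the free variables only ever split the word into boundedly many marked factors, each of which the automaton can handle. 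For the polyregular side, the analogous statement is exactly that a for-program is first-order definable once the non-first-order power is relocated into $r$: Boolean variables with arbitrary updates are simulated, after relabelling, by reading off the exposed automaton state, so they can be replaced by write-once Boolean variables. The remaining details — that the relabelling is genuinely a letter-to-letter rational (hence unambiguous) function, and that the rewritten $g$ really is a first-order interpretation of the same dimension $k$ as $f$ — are bookkeeping.
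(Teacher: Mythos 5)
The overall shape of your factorisation is the right one, and the ``if'' directions are fine. The genuine gap is in the ``only if'' direction, at the single claim your whole argument rests on: that after relabelling each position with the forward and backward state of a product automaton, \emph{every} \mso formula $\psi(x_1,\dots,x_m)$ with free position variables becomes first-order, ``by a routine induction on $\psi$''. This is exactly the hard point, and the sketch you give does not establish it. The free variables cut the word into factors, and the truth of $\psi$ depends on the value of each factor under the transition monoid of the relevant automaton. Your annotation exposes, at any position $p$, the monoid value of the prefix $w[1..p]$ and of the suffix $w[p..n]$; but finite monoids are not cancellative, so the value of an \emph{interior} factor $w(x_i..x_j)$ is in general not determined by the prefix/suffix values at its endpoints, and it is not at all routine (and for your fixed choice of annotation, not justified) that an \fo formula over the annotated word can recover it. ``Each of which the automaton can handle'' is precisely the step that fails: the automaton's behaviour on a middle factor is not visible in the endpoint profiles. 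Making infix behaviour first-order accessible after a letter-to-letter rational preprocessing is a nontrivial theorem; the paper obtains it from factorisation-forest technology (Simon's theorem in its aperiodic form together with the Colcombet/Kazana--Segoufin-style relabelling ideas it cites), which is why Lemma~\ref{lem:fo-reduction} is stated as resting on those ingredients rather than on a state-annotation induction.

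The same gap reappears on the polyregular side: replacing arbitrary Boolean-variable updates of a for-program by write-once variables ``by reading off the exposed automaton state'' is not a local bookkeeping step, since the Boolean variables are updated inside nested loops and their values depend on the current loop positions, not merely on a prefix of the input; relocating this power into a letter-to-letter preprocessing again needs the forest-based relabelling (or an equivalent), not just forward/backward states. So while your decomposition strategy matches the intended one, the core lemma you invoke is asserted rather than proved, and the proof you suggest for it would not go through as written.
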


The proof of Lemma \ref{lem:fo-reduction} is based on ideas from \cite{colcombet2007combinatorial,Kazana:2013jq,DBLP:journals/corr/abs-1810-08760} and uses factorisation forests.

\begin{proof}
    The right-to-left implications in items~\ref{it:red-polyr} and~\ref{it:red-int} are proved the same way: both polyregular functions and \mso string-to-string interpretations are closed under pre-composition with rational functions. For the class of polyregular functions, this holds because it is closed under composition and contains all rational functions~\cite[Theorem 1.6]{DBLP:journals/corr/abs-1810-08760}. For \mso string-to-string interpretations, one observes that rational functions are a special case of \mso string-to-string interpretations of dimension 1 (see~\cite[Figure 7]{filiot2016transducers}, where \mso interpretations of dimension 1 are the same as the so-called regular functions), and \mso interpretations are closed under pre-composition with such functions (see the remarks at the end of Section~\ref{sec:interpretations-generally}).

    To prove the left-to-right implications in items~\ref{it:red-polyr} and~\ref{it:red-int}, namely the decomposition into rational pre-processing and first-order post-processing, we use the following claim. 

    A \emph{letter-to-letter rational function} is a rational function where every transition in the underlying automaton is labelled with exactly one output letter, in which case the input and output strings have the same set of positions. 

    \begin{claim}
            Let $\varphi$ be an \mso formula which selects $k$-tuples of positions in strings over an alphabet $\Sigma$. There are a letter-to-letter rational function $f \colon \Sigma^* \to \Gamma^*$ and a first-order formula $\psi$ which selects $k$-tuples of positions in strings over the alphabet $\Gamma$ such that
            \begin{align*}
                w \models \varphi(\bar x) \quad \text{iff} \quad f(w) \models \psi(\bar x) \qquad \text{for every $w \in \Sigma^*$ and $k$-tuple of positions $\bar x$.}
            \end{align*}
    \end{claim}
    The claim is the special case of~\cite[Theorem 2]{colcombet2007combinatorial} for finite strings instead of infinite trees, and its proof uses factorisation forests (see ~\cite{simon1990factorization}). Another proof of the above claim is in~\cite[Theorem 3.2]{Kazana:2013jq}. Using the claim, we immediately get the left-to-right implications in item~\ref{it:red-int}.
    For item~\ref{it:red-polyr}, we also use Lemma~\ref{lem:fo-reduction} to obtain a first-order for-program realizing the function. The main idea is that if the reachability relation is first-order definable, then one can define a first-order query which accepts consecutive produced tuples.
\end{proof}

With the lemma, we show that item~\ref{it:main-fo} in Theorem~\ref{thm:main} implies item~\ref{it:main-mso}, i.e.~if first-order string-to-string interpretations are exactly the first-order polyregular functions, then \mso interpretations are exactly the polyregular functions:
\begin{align*}
    \text{polyregular} &=& \text{\small by Item~\ref{it:red-polyr} of Lemma~\ref{lem:fo-reduction}}\\
     \text{(first-order polyregular)} \circ \text{rational} &=& \text{by Item~\ref{it:main-fo} of Theorem~\ref{thm:main}} \\
     \text{(first-order interpretations)} \circ \text{rational} &=& \text{by Item~\ref{it:red-int} of Lemma~\ref{lem:fo-reduction}} \\
     \text{\mso interpretations}
\end{align*}

It remains to prove item~\ref{it:main-fo} in Theorem~\ref{thm:main}, i.e.~that first-order string-to-string interpretations are exactly the first-order polyregular functions. The right-to-left inclusion follows immediately from~\cite[Lemma 5.3]{DBLP:journals/corr/abs-1810-08760}, which says that a formula in first-order logic can define the reachability relation on program states in first-order for-programs. We are left with the left-to-right-inclusion:
    \begin{align}\label{eq:inclusion-main}
        \text{first-order string-to-string interpretations} \  \subseteq \  \text{first-order definable for-programs}
    \end{align}

The rest of the paper is devoted to showing the above inclusion. 
When simulating a first-order interpretation by a for-program, we will mainly be concerned with the universe of the output string (which is a set of $k$-tuples of positions in the input string) and its ordering. The labelling of the $k$-tuples can then be recovered using the for-program from Lemma~\ref{lem:boolean-for-program}. The main result is that every first-order definable linear ordering on tuples of positions can be implemented by a for-program. 
To be able to speak about this result, we introduce some notation for devices that produce lists of tuples of positions.

\subparagraph{Enumerators.}  
 Let $k \in \Nat$. A \emph{$k$-enumerator} over an alphabet $\Sigma$ is a function of the following form:
\begin{itemize}
    \item {\bf Input.} A string $w \in \Sigma^*$;
    \item {\bf Output.} A list of $k$-tuples of positions in $w$, which is nonrepeating\footnote{Every tuple appears at most once, but positions can appear in multiple tuples. We need this for the existence of the formulas stated in the following definitions.}.
\end{itemize}
We compare the following two ways of implementing $k$-enumerators:
\begin{enumerate}
        \item A $k$-enumerator is called \emph{definable} if there are two \fo formulas: one with $k$ variables, which says when a tuple is part of the output list, and one with $2k$ variables, which defines a total order on the tuples selected by the first formula. 
        \item A $k$-enumerator is called \emph{programmable} if its output can be computed by a first-order for-program which instead of outputting letters uses instructions of the form {\tt output (x1,...,xk)} where {\tt x1}$,\ldots, ${\tt xk} are position variables. 
    \end{enumerate}
    For definable $k$-enumerators, the order on tuples in the output list is given explicitly by the formula $\varphi$, while in programmable ones, the order is implicit from the order in which the output instructions are executed during the computation. 
    
    \begin{example} We present an enumerator based on Example~\ref{ex:running-interpretation}.
        Consider the $2$-enumerator which outputs all pairs of positions $(x_1,x_2)$ with $x_2 \le x_1$, listed in lexicographic order, where $x_1$ is ordered in increasing order and $x_2$ is ordered in decreasing order. Here is an example:
    \begin{align*}
        abbb \quad \mapsto \quad  (1,1), (2,2), (2,1), (3,3), (3,2), (3,1), (4,4), (4,3), (4,2), (4,1)
    \end{align*}
    This enumerator is definable, as witnessed by the formula $\varphi_{\le}$ in Example~\ref{ex:running-interpretation}.
     The formula $\varphi_{\le}$ is quantifier-free, but in general, quantifiers are allowed. Here is a for-program which computes the same function:
     \begin{center}
     \mypic{4}
     \end{center}
    \end{example}
    The following lemma is the main technical result of this paper. 
\begin{lemma}\label{lem:main-lemma} Every  
     definable $k$-enumerator is also programmable.
\end{lemma}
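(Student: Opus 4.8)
The plan is to show that a definable $k$-enumerator $E$, given by a selection formula and an order formula $\varphi_{\le}$, can be simulated by a first-order for-program. The key conceptual step, hinted at in the introduction, is that $\varphi_{\le}$ must implicitly behave like a lexicographic-type order, and that this structure can be extracted by an inductive analysis on the dimension $k$. So the first move is to set up an induction on $k$. The base case $k=1$ is essentially trivial: an \fo-definable linear order on single positions of a string, together with an \fo-definable subset, can be enumerated by iterating once left-to-right, once right-to-left, and more generally by a bounded number of passes, since an \fo formula $\varphi_{\le}(x,y)$ on one variable each is a Boolean combination of order and label comparisons with bounded ``local'' behaviour — one can also simply appeal to the fact that a definable $1$-enumerator is a first-order polyregular function via known results, but we want a self-contained recursion. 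More robustly, for $k=1$ one argues that $\varphi_{\le}$, being a linear order, must be (up to \fo-definable pieces) a refinement built from the natural order, and hence the list it defines is a finite concatenation of monotone (increasing or decreasing) sublists over \fo-definable intervals, each of which a single for-loop produces.

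For the inductive step, the heart of the argument is a \emph{domination analysis} of the first coordinate. The idea is to show that for tuples $\bar x = (x_1,\dots,x_k)$ and $\bar y = (y_1,\dots,y_k)$ in the output, the comparison $\varphi_{\le}(\bar x,\bar y)$ is governed, ``in the large,'' by a definable order on the first coordinates $x_1,y_1$: whenever $x_1$ and $y_1$ are ``far apart'' in a suitable definable sense, the $\le$-comparison of $\bar x$ and $\bar y$ depends only on $x_1$ and $y_1$. This is where the stack discipline comes from — the first coordinate stratifies the output list into consecutive blocks, and within each block (fixing $x_1$, or fixing $x_1$ up to a finite-index definable equivalence) the restriction of $\varphi_{\le}$ is a definable linear order on the remaining $(k-1)$-tuples, to which the induction hypothesis applies. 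One then writes the for-program as an outer for-loop iterating over the possible values of the first coordinate in the order dictated by the top-level behaviour of $\varphi_{\le}$ (increasing or decreasing, or a bounded concatenation of such), with the body being the inductively obtained $(k-1)$-dimensional programmable enumerator, suitably relativised (using Lemma~\ref{lem:boolean-for-program} to evaluate all the \fo side-conditions that now have $x_1$ as an extra parameter).

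The main obstacle — and the place where real work is needed — is establishing that the ``in the large'' behaviour of $\varphi_{\le}$ on the first coordinate really is an \fo-definable linear order, and that the exceptional region where tuples with different first coordinates interleave is itself tame (definable, and of bounded ``width'', so it can be absorbed into finitely many extra passes). In other words, the combinatorial lemma that a total \fo-definable preorder on $k$-tuples, when projected onto a single coordinate, cannot genuinely shuffle far-apart values — any apparent exception must collapse to a bounded phenomenon. I expect this to require an Ehrenfeucht–Fra\"iss\'e / Gaifman-locality argument: if two far-apart first coordinates $x_1 < x_1'$ could be ordered either way depending on the other coordinates, one could build, by a pumping/stretching argument on the string, arbitrarily long $\le$-descending or cyclic chains, contradicting that $\le$ is a linear order on a structure of that size, or contradicting totality/antisymmetry. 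Once this coordinate-domination lemma is in hand, assembling the for-program is bookkeeping: choose the iteration direction for the outer loop from the top-level order, handle the bounded exceptional ``boundary'' tuples with a constant number of additional hard-coded passes, and recurse on dimension $k-1$ inside the loop body, with all selection and ordering predicates passed as \fo side-conditions evaluated by the sub-programs of Lemma~\ref{lem:boolean-for-program}.
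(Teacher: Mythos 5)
There is a genuine gap, and it sits exactly where you flagged ``the main obstacle''. Your key claim --- that for output tuples (of a fixed type) the order is decided by the distinguished coordinate alone whenever those coordinates are far apart in the string, with only a bounded-width exceptional window absorbable into a constant number of extra passes --- is false. Take the alphabet $\{a,b\}$ and strings $(a^N b)^n$ with $N$ much larger than any constant threshold, and order pairs of positions by: first compare the maximal $b$-free factors (``blocks'') containing $x_1$ and $y_1$; if these coincide, compare $x_2$ with $y_2$; break remaining ties by $x_1$. This is first-order definable and a linear order. For the type of pairs whose coordinates lie deep inside $a$-blocks with many $b$'s between and around them, neither coordinate dominates at the level of position distance: two first coordinates that are arbitrarily far apart but inside the same block are overruled by the second coordinates, and two second coordinates that are far apart are overruled by the block order of the first coordinates; the ``exceptional region'' is an entire block, of unbounded length. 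Consequently your outer loop over individual values of $x_1$ with a $(k-1)$-dimensional recursion inside is not justified: restricting $x_1$ to a block does not reduce the dimension (the order still depends on $x_1$ inside the block, where a \emph{different} coordinate dominates), and fixing the exact value of $x_1$ is not licensed by any domination statement that is actually true. Note also that the dominating coordinate need not be the first one and depends on the type of the tuple, so one must first split by type and recombine afterwards.

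This is precisely why the paper does not induct on the dimension. Its Domination Lemma asserts domination only relative to a decomposition of the string into consecutive factors of (mostly) equal rank-$\omega$ type, and only with respect to the resulting block order $\sqsubset$, not the position order; the decompositions, with nesting depth uniformly bounded by a constant $M$, are supplied by the aperiodic Factorisation Forest Theorem; the types are handled by the Merging Lemma; and the for-program is built by induction on forest height, keeping all $k$ coordinates but restricting them to pairwise equal or disjoint intervals and shrinking the dominating coordinate's interval to one of its blocks at each step. Your proposal has no analogue of the forest, so even a repaired, block-relative version of your domination claim would leave the recursion without a well-founded, input-uniform measure, and hence would not yield a single for-program. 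Finally, the domination statement itself is not obtainable by a quick Ehrenfeucht--Fra\"iss\'e/locality pumping argument ruling out descending chains; in the paper it is a substantial separate development (polarity reduction, products and powers of linear orders, sections), which your sketch leaves entirely open.
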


Our proof of Lemma~\ref{lem:main-lemma} uses two fundamental ingredients. The first is by now standard: this is Simon's factorisation forest theorem~\cite{simon1990factorization}, which roughly says that every string can be cut into pieces that are similar to each other. The second ingredient is new: the \hyperref[lem:domination]{Domination Lemma}, presented in Section~\ref{sec:domination-overview}, roughly says that if a string is cut into pieces that are similar to each other, then any first-order definable linear order on tuples of positions must observe an implicit stack discipline. These two results are combined in Section~\ref{sec:use-of-domination} to prove Lemma~\ref{lem:main-lemma}. 
Before we proceed with the proof of Lemma~\ref{lem:main-lemma}, we use it to complete the proof of Theorem \ref{thm:main}.

\begin{proof}[Proof of Theorem \ref{thm:main}, second part] The only part of Theorem~\ref{thm:main} that has not been proved yet is that every first-order string-to-string interpretation is  polyregular. Suppose that $f$ is a $k$-dimensional first-order string-to-string interpretation. Consider the $k$-enumerator which inputs a string $w$ and outputs the list of $k$-tuples of positions in $w$ that are used to represent output positions of $f(w)$, in the appropriate order. Apply Lemma~\ref{lem:main-lemma} to obtain a first-order for-program $g$ which computes the same list. To compute the original function $f$, we use a for-program which behaves as $g$, except that instead of outputting a $k$-tuple of positions like $g$, it uses the program described in Lemma~\ref{lem:boolean-for-program} as a subroutine to check what is the output letter that should be produced for this tuple, and outputs that letter. 
\end{proof}

\subsection{The Domination Lemma}
\label{sec:domination-overview} 
In this section we present the Domination Lemma, which says that if $\prec$ is a first-order definable linear order on $k$-tuples of positions in a string, then there is an implicit stack discipline in the following sense. For every type (see below) $t$ of tuples of positions there is a coordinate $d \in \set{1,\ldots,k}$ such that for the subset of $k$-tuples of positions consisting in all of type $t$, the order $\prec$ is uniquely determined by the order of the $d$-th coordinates in the string.

We begin by explaining the notions of types. 
For $r \in \set{0,1,\ldots}$,  the \emph{rank $r$ type} of a structure $\structa$ with $k$ distinguished positions $\bar x \coloneqq (x_1, \dots, x_k)$ is defined to be the set of first-order formulas of quantifier rank at most $r$ and $k$ free variables that are true in $\structa, \bar x$. The number $k$ is the \emph{arity} of the type. For arity 0, we talk about the rank $r$ \emph{type of the structure $\structa$}. If the structure $\structa$ is implicit from the context, then we talk about the rank $r$ type of the tuple $\bar x$. For every finite vocabulary, there are finitely many types of given arity and rank. We write $\equiv_r$ for the equivalence relation on structures with distinguished elements of having the same rank $r$ type.
     For a binary relation $R$, its \emph{inverse} is the set $\{(v,u) \mid (u,v) \in R\}$. For $p \in \set{1,-1}$, define $R^p$ to be either $R$ or its inverse, depending on the value of $p$.

    \begin{lemma}[Domination Lemma]\label{lem:domination}
         For all $k,m,r \in \set{1,2,\ldots}$, there is an $\omega \in \set{1,2,\ldots}$ with the following property. Let $n \in \{1,2, \ldots\}$, let
$w_1,\dotsc, w_n$ be strings over some alphabet $\Sigma$ 
        and let $\structa$  be the ordered structure of the concatenation $w_1 \cdots w_n$ extended with the \emph{block order} defined by
        \begin{align*}
            x \sqsubset y \qquad \text{if} \qquad \text{$x$ is a position in $w_i$ and $y$ is a position in $w_j$ for some $i<j$.}
        \end{align*}
         Let $\prec$ be a linear order on $k$-tuples in $\structa$ defined by a first-order formula of quantifier rank $r$, and let $t$ be a $k$-ary rank $\omega$ type over the vocabulary of $\structa$. If 
        \begin{align*}
            w_i \equiv_\omega w_{i+1} \qquad \text{holds for all $i \in \set{1,\ldots,n-1}$ with at most $m$ exceptions,}
        \end{align*}
        then there  is a $d \in \set{1,\ldots,k}$, called the \emph{dominating coordinate}, and a $p \in \set{-1,1}$, called the \emph{polarity}, such that 
        \begin{align*}
             x_d \sqsubset^p y_d \quad \text{implies} \quad (x_1,\ldots,x_k) \prec (y_1,\ldots,y_k) \qquad \text{for all 
              $\underbrace{x_1,\ldots,x_k}_{\text{of type $t$}}, \,\underbrace{y_1,\ldots,y_k}_{\text{of type $t$}}$ in $\structa$}.
        \end{align*}
    \end{lemma}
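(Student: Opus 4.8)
The plan is to recast the statement as one about ordinary labelled linear orders, then use the composition method together with Ramsey-type homogeneity inside long runs of $\equiv_\omega$-equivalent blocks, and finally argue by contradiction: if no coordinate dominated, one could produce an infinite strictly $\prec$-descending sequence among the finitely many $k$-tuples of type $t$. First I would mark, inside the alphabet, the first position of each block $w_i$; then $\structa$ becomes an ordinary string $W$ over an enlarged alphabet, the block order $\sqsubset$ is first-order definable from the marks (so $\prec$ is still given by a formula of quantifier rank $r+O(1)$, and $w_i\equiv_\omega w_{i+1}$ still yields that the marked blocks are $\equiv_{\omega-O(1)}$-equivalent). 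If $n=1$, or more generally if for some coordinate $d$ no two $k$-tuples of type $t$ have their $d$-th coordinate in distinct blocks, the asserted implication is vacuous, so I may assume otherwise. For a quantifier rank $q$, let $M_q$ denote the finite monoid of rank-$q$ types of $W$-like strings under concatenation; recall (composition method, in the style of Feferman--Vaught for ordered concatenations) that the rank-$q$ type of a concatenation with distinguished positions is a function of the sequence of rank-$q$ types of the blocks together with the placement of the distinguished positions and their pointed rank-$q$ types inside the blocks that contain them. Adding a bounded number of extra constants --- the at most $2k$ positions of the witness tuples below --- keeps us in the same framework.

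The at most $m$ exceptions split $w_1\cdots w_n$ into at most $m+1$ maximal \emph{runs}, inside each of which all blocks share one rank-$\omega$ type. Choosing $\omega$ large enough (as a function of $k$, $r$, $m$ and $|M_r|$) and grouping each run into \emph{super-blocks} of suitable length so that their types at the relevant ranks become idempotent, I obtain the following \emph{surgery moves}, each preserving the rank-$r$ type of $W$ together with any distinguished positions, hence preserving all truth values of $\prec$: inside a sufficiently long run one may insert or delete a copy of a super-block carrying no distinguished position, permute super-blocks, and replace a super-block carrying a set $S$ of distinguished coordinates by another super-block of the same run carrying a set of positions with the same pointed rank-$\omega$ type. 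Idempotency makes positions lying far inside a run behave ``abstractly the same''. The degenerate case in which every run is short, so that $n$ is bounded by a function of $m,k,r$, must be disposed of separately; this is the easy case.

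Now fix the type $t$ and suppose, towards a contradiction, that no coordinate is dominating. Then for every $d\in\{1,\dots,k\}$ and every $p\in\{-1,1\}$ there is a \emph{bad pair}: $k$-tuples $\bar b,\bar s$ of type $t$ with $\bar s\prec\bar b$ while, in block order, $s_d$ lies strictly on the $p$-side of $b_d$. The heart of the argument is to upgrade each such \emph{existential} bad pair into a \emph{repeatable} move: using the surgery moves one shows that from essentially any $k$-tuple $\bar z$ of type $t$ --- those for which coordinate $d$ is not pinned at an extreme super-block of its run --- one can pass to a $\prec$-smaller $k$-tuple $\bar z'$ of type $t$ obtained by pushing coordinate $d$ one super-block further in the $p$-direction, the other coordinates being relocated within their own runs so as to keep the type $t$ and the $\prec$-comparison. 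Having both polarities available, start from any $\bar z$ of type $t$ and repeatedly apply such moves: push coordinate $d$ to the right until it is blocked, then to the left until it is blocked, then right again, and so on indefinitely. Since $\prec$ strictly decreases at every step, this produces an infinite strictly $\prec$-descending sequence of $k$-tuples of type $t$, contradicting finiteness of the set of such tuples (equivalently, a repetition in the sequence closes up into a $\prec$-cycle, contradicting that $\prec$ is a linear order). Hence a dominating coordinate and polarity exist.

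The delicate point is precisely this ``upgrade'': turning one existential bad pair into a move available from almost every type-$t$ tuple. It requires pinning down, through the composition method, exactly which bounded amount of data of a configuration $(W,\bar x,\bar y)$ the rank-$r$ formula for $\prec$ can observe --- roughly, the sequence of pointed rank-$r$ super-block types together with the placement of the $2k$ coordinates, with in-run offsets capped by idempotency --- and then using the long runs to realize every placement consistent with that bounded data while keeping each tuple of type exactly $t$, all without ever disturbing the $O(k)$ super-blocks carrying distinguished positions or the at most $m$ exceptional blocks. Getting the quantitative dependence of $\omega$ on $k,r,m,|M_r|$ right, and making the ``blocked / not blocked'' dichotomy precise, is where the real work lies; the rest is the composition method plus the pigeonhole that a linear order on a finite set has no infinite descending chain.
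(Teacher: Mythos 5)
Your overall skeleton (negate the conclusion, extract a ``bad pair'' for every coordinate--polarity pair, turn these into repeatable moves, and contradict the fact that a linear order on the finite set of type-$t$ tuples has no infinite strictly descending sequence) is a reasonable framing, but the pivotal step --- the ``upgrade'' of one existential bad pair into a move available from essentially every type-$t$ tuple --- is exactly the content of the Domination Lemma, and it is neither proved nor even correctly stated in your proposal. Your availability condition (``coordinate $d$ is not pinned at an extreme super-block of its run'') cannot be right, because whether the move exists also depends on where the \emph{other} coordinates sit. Concretely, take $k=2$ and $\prec$ the lexicographic order of Example~\ref{ex:running-interpretation} (first coordinate increasing and dominant, second coordinate decreasing), with a type $t$ that pins each coordinate to the first position of its block. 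For $d=2$ and $p=-1$, bad pairs exist, but every such pair forces the first coordinate to move strictly to an earlier block; hence the corresponding ``push coordinate $2$ one super-block left'' move is unavailable from any tuple whose first coordinate sits in the leftmost admissible block, no matter where coordinate $2$ is. So the per-pair upgrade is false as stated, and if you only claim it under the global hypothesis that no coordinate dominates, you give no argument for why that hypothesis rescues it --- yet this is precisely where all the difficulty of the lemma lives. Your own text concedes that ``the real work lies'' there, which means the proof is missing its core.

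For comparison, the paper does not argue by contradiction at all: it proves domination constructively, first reducing polarity via the Polarity Reduction Lemma (types of high rank force long chains), then handling arity two by a case analysis based on the Threshold Lemma, then showing that the pairwise domination relation $i \to j$ on coordinates is transitive, taking its maximum as the dominating coordinate, and certifying domination by explicit $\prec$-ascending chains through intermediate tuples placed far to the left/right; this is then lifted to the multi-run situation through sections and projections in the Product Domination Lemma (direct products of powers of ordered products), with a separate treatment of overlapping tuples. A secondary caveat about your surgery toolkit: inserting or deleting super-blocks changes the structure $\structa$, while the hypothesis that the formula defines a \emph{linear} order is only given on $\structa$; transitivity and antisymmetry must be transferred to the modified structures via types of rank roughly $r+3k$, which is doable when runs are long but needs to be said --- the paper sidesteps this entirely by moving coordinates inside the fixed structure. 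As it stands, the proposal is a plausible plan, not a proof.
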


The \hyperref[lem:domination]{Domination Lemma} is the technical heart of this paper. The full proof is presented in Appendix \ref{sec:domination-lemma}. To explain more intuitively some of the ideas that we use, we treat a special case in detail. In the \hyperref[lem:domination]{Domination Lemma}, the structure $\structa$ consists of blocks organised in a linear way.  A very simple linear order -- although infinite -- is the natural one on the rational numbers; one reason for its simplicity is that quantifiers can be eliminated (see \cite[Section 5.6.2]{DBLP:series/txtcs/GradelKLMSVVW07}). Because of this, it is quite easy to prove a version of the \hyperref[lem:domination]{Domination Lemma} for the rational numbers and still its proof bears some similarity to the proof of the general case. 

\newcommand{\growrat}{\mathbb Q^{(k)}}
\begin{lemma}[Rational Domination Lemma]  Let $\prec$ be a linear ordering on $k$-tuples of rational numbers defined by a quantifier-free (equivalently, first-order) formula using only the usual ordering $<$ on rational numbers. Then there is a coordinate $d \in \set{1,\ldots,k}$ and a polarity $p \in \set{-1,1}$ such that
    \begin{align*}
        x_d <^p y_d \quad \text{implies} \quad  (x_1,\ldots,x_k) \prec (y_1,\ldots,y_k)
   \end{align*}
for all tuples of rational numbers satisfying $x_1 <\cdots <x_k$ and  $y_1<\cdots<y_k$.
 \end{lemma}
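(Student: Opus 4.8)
The plan is to exploit quantifier elimination for the theory of $(\mathbb Q,<)$ in a strong form: any quantifier-free formula $\prec(x_1,\dots,x_k,y_1,\dots,y_k)$ defining a linear order, when restricted to the (open) region $R$ where $x_1<\cdots<x_k$ and $y_1<\cdots<y_k$ hold, is equivalent to a Boolean combination of atoms of the form $x_i < y_j$, $x_i = y_j$, $x_i<x_j$, $y_i<y_j$ — but on $R$ the second group is fixed by assumption. So the truth value of $(\bar x)\prec(\bar y)$ on $R$ depends only on the \emph{order type} of the $2k$-tuple $(x_1,\dots,x_k,y_1,\dots,y_k)$ as an element of $\mathbb Q^{2k}$; there are finitely many such order types, and $\prec$ is a union of some of them. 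The key object to track is therefore the function $\tau$ from interleaving patterns of two increasing $k$-tuples to $\{\text{true},\text{false}\}$.

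First I would set up notation for these interleaving patterns. Given increasing tuples $\bar x$ and $\bar y$, after merging and sorting one gets a ``shuffle'' word in $\{X,Y\}^*$ (with possible ties, which I will handle by perturbation — ties form a lower-dimensional set and can be removed from consideration since if $x_d = y_d$ for the eventual dominating coordinate we can argue separately, or simply note the order is linear so ties in the merged tuple don't affect strictness on distinct tuples). The claim reduces to: there is a coordinate $d$ and polarity $p$ such that whenever $x_d <^p y_d$, we have $\tau(\text{pattern of }\bar x,\bar y) = \text{true}$.

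The core argument is then a ``voting''/consistency argument on these patterns. Consider moving $\bar y$ continuously ``to the right past $\bar x$'': start with all $y_j$ far left of all $x_i$ (pattern $Y^kX^k$), end with all $y_j$ far right (pattern $X^kY^k$). Along any such monotone path, each elementary transposition swaps an adjacent $X$ and $Y$, i.e.\ crosses exactly one equality $x_i = y_j$. By linearity of $\prec$, exactly one of the two endpoints $Y^kX^k$, $X^kY^k$ is in the ``true'' set; say $X^kY^k$ is (the $\bar x$-then-$\bar y$ pattern). I would argue that there must be a single coordinate $d$ ``responsible'' for this: formally, the set of true patterns, viewed as an up-set or down-set under the componentwise order on ``how far right each $y_j$ sits relative to the $x_i$'s,'' must be determined by a single threshold, because $\prec$ is a \emph{total} order (antisymmetry + totality force the true/false partition to be monotone along every coordinate-direction path, and connectivity of the pattern poset then pins it to one coordinate). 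Antisymmetry is what rules out the partition depending on two coordinates in a ``XOR'' fashion; totality rules out it depending on none near a boundary.

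The main obstacle I anticipate is making the ``single responsible coordinate'' step precise — i.e., going from ``$\prec$ restricted to each order-type region is constant'' plus ``$\prec$ is a linear order'' to ``one coordinate dominates globally.'' The naive worry is that different order-type regions could be governed by different coordinates, contradicting a uniform $d$. The resolution is to use transitivity of $\prec$ across adjacent regions: if in region $A$ coordinate $d$ says $\bar x \prec \bar y$ and in an adjacent region $A'$ (differing by one transposition) coordinate $d'\ne d$ would be needed, one builds a $3$-cycle $\bar x \prec \bar y \prec \bar z \prec \bar x$ by choosing witnesses straddling the shared face, contradicting that $\prec$ is a strict order. Pushing this through for all $k$ simultaneously — rather than just $k=2$ — is the delicate combinatorial bookkeeping; I would organize it as an induction on $k$, peeling off the coordinate that controls the ``innermost'' crossing $x_k$ vs $y_1$ (or $x_1$ vs $y_k$), and showing all other crossings are irrelevant given that one. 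This mirrors, as the authors hint, the stack discipline of the general Domination Lemma, with the rational case serving as the skeleton.
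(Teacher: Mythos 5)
Your opening observation is the same as the paper's: by homogeneity of $(\mathbb Q,<)$ (or quantifier elimination), whether $\bar x \prec \bar y$ holds depends only on the order type of the merged $2k$-tuple, so $\prec$ is a union of finitely many interleaving patterns. But the heart of the lemma is the step you defer as ``delicate combinatorial bookkeeping,'' and the justification you do offer for it does not hold up. The claim that ``antisymmetry + totality force the true/false partition to be monotone along every coordinate-direction path, and connectivity of the pattern poset then pins it to one coordinate'' is not a valid inference: antisymmetry and totality alone say nothing about what happens when a single $y_j$ crosses a single $x_i$; monotonicity of the partition in the non-dominating directions is (on tie-free patterns) essentially \emph{equivalent} to the conclusion you are trying to prove, so asserting it begs the question. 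Likewise, ruling out an ``XOR'' dependence on two coordinates is exactly the content of the two-dimensional case analysis, which you never carry out. Your proposed fix --- transitivity-based $3$-cycles across adjacent pattern regions, then an induction on $k$ peeling off the ``innermost crossing'' --- is the right kind of tool, but no witnesses are constructed and no induction hypothesis is formulated, so there is no argument to check. A smaller point: ``handling ties by perturbation'' is not available, since $\prec$ is only a linear order with no continuity; tied patterns must instead be handled by inserting a tie-free intermediate tuple (using density of $\mathbb Q$) and invoking transitivity.

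For comparison, the paper's proof fills precisely this gap with a concrete scheme: it first settles $k=2$ by a finite case analysis on consecutive/nested configurations, producing intermediate pairs and using transitivity; for $k>2$ it freezes all coordinates except two at a reference tuple $\bar z$, applies the $k=2$ case to each pair $\{i,j\}$ (noting the outcome is independent of $\bar z$ by pattern-invariance), obtains a tournament $i \to j$ on coordinates, proves $\to$ is transitive by an explicit chain of intermediate tuples, takes the $\to$-maximum $d$, and finally proves domination by moving the non-$d$ coordinates out of the way (below $\min\{x_1,y_1\}$ or above $\max\{x_k,y_k\}$) and back one at a time while nudging the $d$-th coordinate, each step justified by a pairwise domination $i \to d$. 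Your proposal would need an analogue of each of these ingredients --- the pairwise case, a consistency/transitivity argument yielding a single coordinate, and an explicit chain construction for general $k$ --- before it constitutes a proof.
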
 
\begin{proof} We first prove the statement for $k=1$ and $k = 2$ and then we deduce the general case.  
    \begin{enumerate}
        \item When $k=1$, then the formula defining $\prec$ must be either $x < y$ or $x > y$. 
        \item For $k=2$, we do a case analysis. Note that whether $\bar x \prec \bar y$ or $\bar y \prec \bar x$ holds depends only on the order relationship of the positions in $\bar x$ and $\bar y$ in the rational numbers and not on the precise values in $\bar x$ and $\bar y$.

        The following picture shows the two possible relationships for two pairs $\bar x$ and $\bar y$ when they are ``consecutive'' and the two possible relationships when they are ``nested'':

        {\hspace{-1.5cm}{\mypic{7}}}

        Suppose we are given a pair $\bar x$ and without loss of generality, assume the ``consecutive growing'' case for a second pair $\bar y$. We only show the proof for the case that there is a pair $\bar y'$ such that $\bar x$ and $\bar y$ are ``nested growing'' (``nested decreasing'' works analogously). We prove that $d=1$ is dominating for $\prec$ with polarity $p = 1$. Consider all three remaining configurations of pairs $\bar x$ and $\bar y$ with $x_1 < y_1$. In all cases, $\bar x \prec \bar y$ is proved by finding an intermediate pair (drawn in yellow), whose order with respect to $\bar x$ and $\bar y$ follows from the assumptions ``consecutive/nested growing'' (in the pictures below, we assume that lower lines represent bigger tuples in the ordering $\prec$):

        {\centering{\mypic{9}}}
         
        \item Consider the case  $k > 2$. Fix a ``growing'' tuple of $k$ rational numbers, i.e.\ a tuple $\bar z$ such that for $1 \leq i < j \leq k$ it holds that $z_1 \leq z_i < z_j \leq z_k$. Define $\prec^{\bar z}_{ij}$  to be the  restriction of $\prec$  to tuples that agree with $\bar z$ on coordinates from $\set{1,\ldots,k} \setminus \set{i,j}$. 
        Using the reasoning from the previous item, the ordering $\prec^{\bar z}_{ij}$ must admit some  dominating coordinate $d \in \set{i,j}$ and one of the cases ``growing'' or ``decreasing''. This must hold for every choice of $\bar z$ and $i,j$. Furthermore, the dominating coordinate $d$ depends only on $i$ and $j$ and not on $\bar z$, likewise for the choice of ``growing'' or ``decreasing''. Let us write $i \to j$ if $j$ dominates, otherwise we write $j \to i$. The reasoning in the following picture shows  that $\to$ is transitive, i.e.~$i \to j$ and $j \to m$ implies $i \to m$:

        \mypic{11}
        
        Therefore, $\to$ is in fact a total order on $\set{1,\ldots,k}$. Let $d$ be the maximum with respect to this order. The following picture explains why $d$ is the dominating coordinate $d$ from the statement of Lemma \ref{lem:domination}.

         \begin{center}
            \includegraphics[page=12,scale=0.36]{interpretation-pics}
        \end{center}

    Suppose without loss of generality that we are in the ``growing'' case for each pair of coordinates. Then we can first move all coordinates apart from $d$ to positions smaller than $\min\{x_1,y_1\}$ or bigger than $\max\{x_k,y_k\}$ and then use the dominations $i \to d$ to move them, one by one, to their final positions (always increasing the $d$-th coordinate slightly to a value in the open interval $(x_d,y_d)$).
    \end{enumerate}
\end{proof}

\subsection{Proof of Lemma~\ref{lem:main-lemma}}
\label{sec:use-of-domination}
We now return to Lemma~\ref{lem:main-lemma}, i.e., we prove that every definable $k$-enumerator is also programmable. In the proof, we use the following version of the Factorisation Forest Theorem. We use the term \emph{interval} for a connected set of positions in a string. 

\begin{theorem}[Factorisation Forest Theorem, aperiodic variant] \label{thm:simon}
    Let $h \colon \Sigma^+ \to S$ be a semigroup homomorphism, where $S$ is finite and aperiodic. Then there exists a function which assigns to each string in $\Sigma^+$ a partition of the  positions into intervals (so-called \emph{blocks}) such that: 
    \begin{enumerate}
        \item All blocks are nonempty, and for each string in $\Sigma^+$ of length at least 2, there are at least two blocks.
        \item \label{it:same-blocks} If a string has at least three blocks, then all of the blocks have the same value under $h$.
        \item \label{it:small-depth} There exists $M \in \Nat$ such that all strings have height at most $M$, where the height of a string is defined as follows: letters have height 1, for other strings  the height is the maximum of the heights of its blocks + 1. 
        \item  \label{it:fo-definable} There is a first-order formula $\varphi$ such that for every string $w$, the positions satisfying $\varphi(x)$ are exactly the first positions of the blocks of $w$.
    \end{enumerate}
\end{theorem}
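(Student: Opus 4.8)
The plan is to split the four conditions into a ``classical'' part and a ``new'' part. Conditions (1)--(3) are, up to cosmetic rewording, the statement of Simon's Factorisation Forest Theorem~\cite{simon1990factorization} for the finite semigroup $S$: every string in $\Sigma^+$ admits a factorisation forest of height bounded by a function of $|S|$ in which each internal node is either \emph{binary} (exactly two children) or \emph{idempotent} (at least two children, all with the same value $e$ under $h$, with $e$ idempotent). A node with at least three children is necessarily idempotent, which gives (2); (1) and (3) are immediate; and Simon's construction yields a genuine function from $\Sigma^+$ to partitions. So the only content beyond the classical theorem is the first-order definability in (4), and this is the one place where aperiodicity of $S$ enters; it cannot be dropped, since a modular counting semigroup already admits no first-order-definable factorisation forest.

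For (4), the first reduction I would make is that it suffices to produce a single \fo formula $\varphi(x)$ marking the first positions of the \emph{top-level} blocks of an arbitrary input string $w$. The block boundaries one level down all lie inside individual top-level blocks, each of which is a first-order definable interval of $w$; relativising $\varphi$ to such an interval marks the next level, and since the forest has bounded height $M$ only $M$ relativisations are ever composed. The top-level split of $w$ depends only on the value $s = h(w)$, of which there are finitely many; and, because $S$ is aperiodic, the language $h^{-1}(s)$ is star-free, hence definable in first-order logic (Sch\"utzenberger, McNaughton--Papert). Thus it is enough to fix $s$ and exhibit a first-order formula describing the top-level boundaries of the canonical forest for strings of value $s$; the desired $\varphi$ is then the disjunction over $s \in S$ of ``$h(w) = s$'' conjoined with that formula.

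To pin down the canonical top-level split and see that its boundaries are first-order definable, I would revisit the Green's-relations proof of Simon's theorem. The prefix values $h(a_1 \cdots a_i)$ are non-increasing along the $\mathcal R$-preorder and bounded below by $s = h(w)$, so they stabilise in the $\mathcal R$-class of $s$ from some first position onwards, and symmetrically for suffixes and the $\mathcal L$-preorder. The canonical split is then either \emph{binary}, cutting at the position singled out by one of these ``first time a prefix (suffix) enters the $\mathcal R$-class ($\mathcal L$-class) of $s$'' conditions, or, when both prefixes and suffixes are already stable, \emph{idempotent}, cutting $w$ into the maximal runs of consecutive factors carrying the common idempotent value through which $s$ factors. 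In each case the boundary positions are described by conditions of the shape ``$x$ is the first (or last) position such that the prefix (or suffix) delimited by $x$ has $h$-value in $T$'', where $T \subseteq S$ is upward- or downward-closed in a Green preorder and hence recognised by the aperiodic semigroup $S$; ``the prefix up to $x$ has $h$-value in $T$'' is therefore an \fo property of $x$ over the ordered string model, so the boundary set, and with it $\varphi$, is first-order definable. Termination of the recursion with a uniform bound $M$ is exactly the height bound in Simon's theorem: each level either strictly descends in the $\mathcal J$-preorder (at most $|S|$ times) or strictly decreases a counter bounded within a fixed $\mathcal J$-class.

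The step I expect to be the main obstacle is making the last paragraph genuinely simultaneous: the canonical split must at once (i) have the rigid shape forced by (1)--(3) --- in particular an idempotent node really must have all children of one value --- (ii) have first-order definable boundaries, and (iii) make the recursion terminate with a uniform height. Aperiodicity is precisely what unlocks (ii): it guarantees that every ``$\mathcal J$-class test'' and every ``maximal idempotent run'' test is a star-free, hence first-order, property, with no residual modular counting, so that the resulting formulas are truly in \fo and survive the bounded composition of relativisations needed for the deeper levels.
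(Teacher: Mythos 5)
Your reduction of the whole problem to item (4), and your use of aperiodicity via Sch\"utzenberger/McNaughton--Papert to make tests of the form ``the prefix (infix) up to $x$ has $h$-value in $T$'' first-order, are both sound. The gap is in the third paragraph, and you flag it yourself: the entire content of the theorem is to exhibit \emph{one canonical split per string} that simultaneously (i) has all blocks of equal $h$-value whenever there are at least three of them, (ii) has \fo-definable boundaries, and (iii) drives a recursion that terminates with a uniform height bound $M$. Your description of that split is not a construction. In the ``idempotent'' case, ``cutting $w$ into the maximal runs of consecutive factors carrying the common idempotent value through which $s$ factors'' does not specify what the factors are, and a greedy cut into maximal pieces of value $e$ does not by itself produce blocks that all have value $e$ --- the straddling/leftover pieces are exactly where the work in every proof of Simon's theorem lies, and handling them canonically while keeping \fo-definability is the crux. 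In the binary case, it is not shown that cutting at ``the first position where the prefix enters the $\mathcal R$-class of $s$'' strictly decreases a measure on \emph{both} sides of the cut, so the claimed termination argument (``descend in the $\mathcal J$-preorder or decrease a bounded counter'') is asserted, not established. So as written this is a plausible plan along the lines of the Green's-relations/canonical-split proofs (such deterministic, even \fo-definable, splits for aperiodic semigroups can be made to work), but the only nontrivial step is left open.

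The paper's proof avoids Green's relations entirely and is considerably more elementary: it is a lexicographic induction on $(|S|,|\Sigma|)$. Aperiodicity is used exactly once, to find a letter $s\in\Sigma$ whose translation $t\mapsto ts$ (or $t\mapsto st$) is not surjective, so that $T\coloneqq Ss$ is a proper subsemigroup. One then factors $w=w_1s^{k_1}\cdots w_ns^{k_n}$ with $w_i\in(\Sigma\setminus\{s\})^+$, handles the word $t_1\cdots t_n$ with $t_i=h(w_is^{k_i})\in T$ by induction on the smaller semigroup $T$, and each $w_i$ by induction on the smaller alphabet $\Sigma\setminus\{s\}$ (plus a little surgery when $w$ starts with $s$ or does not end with $s$). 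Condition (2) is inherited from the $T$-level blocks, the height bound is $M_T+M_{\Sigma\setminus\{s\}}+O(1)$, and \fo-definability of the boundaries is essentially immediate because the top-level cuts sit at occurrences of the fixed letter $s$ and the rest is defined recursively from \fo-definable data. If you want to salvage your route, you would need to carry out the canonical idempotent splitting and the two-sided descent argument in detail; the paper's induction is the way it sidesteps precisely that case analysis.
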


Apart from the \hyperref[thm:simon]{Factorisation Forest Theorem} and the \hyperref[lem:domination]{Domination Lemma}, our proof uses the following straightforward result on combining outputs of two for-programs. 
As a convention, if $\psi$ is a first-order formula with $k$ free variables and $f$ is a $k$-enumerator, then $f|\psi$ denotes the $k$-enumerator where the output list of $f$ is filtered so that it contains only tuples satisfying $\psi$.

\begin{lemma}[Merging Lemma]\label{lem:merge}
   Let $f$ be a definable $k$-enumerator. Let $\Phi$ be a finite set of \fo formulas $\psi$, each one with $k$ free variables, such that every $k$-tuple of positions satisfies at least one formula from $\Phi$. Then $f$ is programmable if and only if every $f | \psi$ is programmable. 
\end{lemma}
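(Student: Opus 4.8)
The plan is to prove both directions. The forward direction (if $f$ is programmable, then every $f\mid\psi$ is programmable) is easy: take the for-program computing $f$, and after each instruction \texttt{output (x1,...,xk)}, insert a test — using the for-program from Lemma~\ref{lem:boolean-for-program} applied to $\psi$ — so that the tuple is emitted only when $\psi(x_1,\ldots,x_k)$ holds. Since $\psi$ is first-order and the for-program of Lemma~\ref{lem:boolean-for-program} is first-order definable, and first-order for-programs are closed under such substitution of subroutines, the result is again a first-order for-program, and it outputs exactly the sublist of tuples satisfying $\psi$ in the original order, i.e.\ $f\mid\psi$.

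For the converse, suppose each $f\mid\psi$ is programmable, say by a first-order for-program $g_\psi$. First I would reduce to the case where $\Phi$ is a partition: replace $\Phi$ by the set of nonempty formulas of the form $\psi \wedge \bigwedge_{\psi'\in S}\neg\psi'$ for the various subsets $S\subseteq\Phi$, so that the formulas become pairwise exclusive while still covering all $k$-tuples; any $f\mid(\psi\wedge\chi)$ is obtained from $f\mid\psi$ by the easy forward direction already proved, so each of these refined enumerators is still programmable. Now the output list of $f$ is, as a \emph{set} of tuples, the disjoint union over $\psi\in\Phi$ of the output sets of the $f\mid\psi$; what remains is to produce the tuples in the \emph{global} order dictated by $f$'s defining order formula $\varphi$. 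The idea is to iterate over the tuples in $\varphi$-order at the top level and, for each one, determine which part $\psi$ it belongs to and delegate. Concretely, I would build a for-program that simulates all the $g_\psi$ \emph{in parallel}: it maintains, for each $\psi$, the state of $g_\psi$ (a bounded tuple of position variables and Boolean variables), advances them step by step, and whenever $g_\psi$ would execute \texttt{output (y1,...,yk)}, the combined program pauses and holds that pending tuple. At each point there are at most $|\Phi|$ pending tuples, one per part; the program then uses a first-order subroutine (Lemma~\ref{lem:boolean-for-program} applied to $\varphi$) to select the $\prec_f$-minimal pending tuple among them, outputs it, and resumes the corresponding $g_\psi$. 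Because the parts partition all tuples and each $g_\psi$ emits its tuples in $\varphi$-order, this "$|\Phi|$-way merge" emits every tuple exactly once, in global $\varphi$-order.

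The main obstacle is that a for-program cannot freely suspend and resume another for-program: a naive parallel simulation would require storing unboundedly much state. The fix is to observe that a first-order for-program's configuration is described by finitely many position variables (the loop counters currently in scope) plus finitely many Boolean variables, so the number of live variables is bounded by the nesting depth of the program — a constant. Hence one can hard-code, in a single for-program over the same input word, a bounded tuple of position/Boolean variables representing "the current configuration of $g_\psi$" for each of the finitely many $\psi$, and a routine that, given such a configuration, computes the next one together with any tuple about to be output. Defining this next-configuration step inside a for-program is where the care goes: it amounts to re-expressing the control flow of $g_\psi$ as data, which is possible precisely because the reachability relation on program states of a first-order for-program is first-order definable (this is the content cited from~\cite[Lemma 5.3]{DBLP:journals/corr/abs-1810-08760}, together with Lemma~\ref{lem:boolean-for-program}). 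I would therefore, rather than literally interleaving executions, argue at the level of this first-order-definable step relation: the merged enumerator is itself definable (its membership formula is $\bigvee_\psi \psi$, its order formula is $\varphi$), and we have just exhibited a first-order for-program implementing it, namely the $|\Phi|$-way merge whose single building block — "advance one part's simulation to its next pending output, test it against $\varphi$, pick the minimum" — is entirely first-order definable and hence realizable by Lemma~\ref{lem:boolean-for-program}. Once the bounded-state observation is in place, verifying that the merge outputs the right list in the right order is a routine induction on the length of the global output list.
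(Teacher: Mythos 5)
Your easy direction (guarding each \texttt{output} instruction of the program for $f$ with the Boolean subroutine of Lemma~\ref{lem:boolean-for-program} for $\psi$) and the refinement of $\Phi$ into a partition are fine. The gap is in the converse: the $|\Phi|$-way merge with suspend/resume cannot be carried out in the for-program model, and your proposed fix does not repair it. A for-program has no assignable position registers: position variables exist only as counters of enclosing for-loops and can never be written to, so there is no way to ``maintain, for each $\psi$, the current configuration of $g_\psi$'' as data, to advance it step by step, or to hold a pending output tuple between steps --- Boolean variables cannot encode positions, and Lemma~\ref{lem:boolean-for-program} only provides \emph{Boolean-valued} tests on position variables already bound by enclosing loops; it cannot return positions or ``advance one part's simulation to its next pending output''. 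The appeal to the first-order definability of the reachability/step relation of a for-program does not help either: it lets you \emph{test} FO properties of configurations, but producing the elements of an FO-definable set of tuples \emph{in the order given by an FO-definable relation} is precisely the content of Lemma~\ref{lem:main-lemma}, so that route is circular (if it worked generically, the main lemma would be immediate).

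A construction that stays inside the model merges by \emph{nesting} rather than by parallel simulation, and this is where the given programs $g_\psi$ (whose execution order already realizes $\prec$ on each part) are genuinely used. After refining to a partition, it suffices to merge two parts and iterate. Run $g_{\psi_1}$; at each of its output instructions the tuple $\bar x$ about to be emitted is available as in-scope position variables, so before emitting $\bar x$ one inlines a full copy of $g_{\psi_2}$ (with fresh variables) in which every output instruction for a tuple $\bar y$ is guarded, via Lemma~\ref{lem:boolean-for-program}, by the FO condition ``$\bar y$ is in the output of $f$, satisfies $\psi_2$, $\bar y \prec \bar x$, and every $\psi_1$-output tuple $\prec \bar x$ is also $\prec \bar y$''; after $g_{\psi_1}$ terminates, one more inlined, filtered copy of $g_{\psi_2}$ emits the $\psi_2$-tuples larger than all $\psi_1$-output tuples. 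This emits every tuple exactly once in the global $\prec$-order, uses only in-scope variables and Boolean subroutines, and preserves first-order definability of the program. Your parallel-merge picture conveys the right intuition, but without replacing the stored-configuration mechanism by this kind of inlining with FO filters, the proof does not go through in the model.
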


\begin{proof}
    For the left-to-right implication, we observe that the filtering $f| \varphi$ can be implemented by a for-program thanks to Lemma~\ref{lem:boolean-for-program}. We are left with the right-to-left implication. It suffices to examine the case $|\Phi| = 2$. The general case follows by a straightforward induction. 

    Suppose that $f|\varphi_1$ and $f|\varphi_2$ are implemented by programs $f_1,f_2$.
    First check whether the first tuple in the output satisfies $\varphi_1$ or $\varphi_2$ using the result from Lemma~\ref{lem:boolean-for-program} and an if-statement, and then run a different program for each of the two outcomes.
    By symmetry, it suffices to consider inputs where the first tuple in the output of $f|(\varphi_1 \vee \varphi_2)$ satisfies $\varphi_1$. Take the code of $f_1$, and
         after each instruction which outputs a tuple of positions $\bar x$, run a copy of the code for $f_2$, with its output restricted to tuples $\bar y$  which satisfy:
        \begin{itemize}        
            \item $\bar y$ is after $\bar x$ according to $f$; and
                    \item  there are no other tuples from the output of $f_1$ between $\bar x$ and $\bar y$.
        \end{itemize}
        The first item can be checked by a for-program using the assumption that $f$ is definable and Lemma~\ref{lem:boolean-for-program}, while the second item can be checked by running a nested copy of $f_1$.
\end{proof}

We are now ready to prove Lemma~\ref{lem:main-lemma}.
Let $f$ be a definable $k$-enumerator. We need to describe a for-program which outputs the same list of tuples as $f$. Let $r$ be the maximal quantifier rank of the first-order formulas used in the definition of $f$. Apply the \hyperref[lem:domination]{Domination Lemma} to $k$, $m \coloneqq 5k$, and $r$, yielding a constant $\omega$. 
  Define $h$ to be the function which maps a string $w \in \Sigma^+$ to the rank $\omega$ type of the corresponding ordered model of $w$. Compositionality of first-order logic (see~\cite[Section 3.4]{libkin2013elements}) on strings says that the image of $h$, the set of rank $\omega$ types of strings, is a finite aperiodic semigroup and $h$ is a semigroup homomorphism. Apply the Factorisation Forest Theorem to $h$, yielding a function which partitions each string into blocks and an upper bound $M$ on heights of strings.
By abuse of notation, we lift notions about strings to intervals inside strings: the height of an interval $X$ in a string $w$ is defined to be the height (in the sense of item~\ref{it:small-depth} in Theorem \ref{thm:simon}) of the infix of $w$ induced by $X$. Likewise, we define the blocks of $X$ as the blocks of the infix induced by $X$, viewed as intervals contained in $X$.

To show that $f$ is also programmable, we use an induction over heights in factorisation forests. More precisely, we prove that for every $i \in \Nat$ there is a for-program which computes the following:
    \begin{itemize}
        \item {\bf Input.} A string $w \in \Sigma^+$ with distinguished nonempty intervals $X_1,\ldots,X_k$ that are pairwise equal or disjoint, and such that the sum of their heights (in the sense of Theorem~\ref{thm:simon}) is at most $i$. Each interval is represented by its first and its last position. 
        \item {\bf Output.} The list $f(w)$ restricted to tuples in $X_1 \times \cdots \times X_k$. 
    \end{itemize}

By item~\ref{it:small-depth} in Theorem \ref{thm:simon}, the for-program with parameter $i \coloneqq kM$ will work for every choice of pairwise equal or disjoint intervals, in particular when all of the intervals are the entire string. The induction base $i=k$ (where every interval has the height $1$) is straightforward: each interval is a singleton, and the for-program simply checks if the unique tuple in $X_1 \times \cdots \times X_k$ belongs to the output of $f$ by using the subroutines from Lemma~\ref{lem:boolean-for-program}. The rest of the proof is devoted to the induction step, more specifically, to producing the correct order of the tuples: whether a tuple belongs to the output or not can again be checked using the subroutines from Lemma~\ref{lem:boolean-for-program}.

 Let $X_1,\ldots,X_k$ be intervals in an input string $w$  that are pairwise disjoint or equal.  Define $\mathcal X$ to be the coarsest partition of the positions in the input string into intervals that satisfies $X_1,\ldots,X_k \in \mathcal X$.   This partition uses at most $2k+1$ intervals. Consider a factorisation 
\begin{align*}
    w = w_1 \cdots w_n
\end{align*}
where each $w_j$ is a block of one of the elements of $\mathcal X$. 
Define $\structa$ as in the \hyperref[lem:domination]{Domination Lemma}, i.e.~as the ordered structure of $w$ extended with an extra order $\sqsubset$ that describes the partition into factors $w_1,\ldots,w_n$. By item~\ref{it:fo-definable} of the \hyperref[thm:simon]{Factorisation Forest Theorem}, the  order $\sqsubset$ can be defined by a first-order formula which uses the input string and the endpoints of the intervals $X_1,\ldots,X_k$. It follows that for every $k$-ary rank $\omega$ type $t$ over the vocabulary of $\structa$, there is a corresponding first-order formula which selects the $k$-tuples of positions in $w$ that have type $t$ in $\structa$. Since there are finitely many choices of $t$, it follows from the \hyperref[lem:merge]{Merging Lemma} that it is enough to show that for every $t$, there is a for-program which outputs the tuples of type $t$. 

Let $t$ be a $k$-ary rank $\omega$ type over the vocabulary of $\structa$. We show a for-program which outputs all tuples in 
\begin{align*}
    T \coloneqq \set{\bar x  \in X_1 \times \cdots \times X_k : \text{$\bar x$ has type $t$ and is in the output of $f(w)$}}
\end{align*}
according to their order given by $f(w)$, call this order $\prec$.  

If an interval from $\mathcal X$ has more than two blocks, then, by item~\ref{it:same-blocks} of the \hyperref[thm:simon]{Factorisation Forest Theorem}, all of these blocks have the same image under $h$, i.e., the same rank $\omega$ type. Since there are at most $2k+1$ intervals, it follows that with at most $2(2k+1)-1 = 4k+1 < 5k$ exceptions, consecutive strings $w_j$ and $w_{j+1}$ have the same rank $\omega$ type. Hence, for the order $\prec$ defined by $f(w)$, the \hyperref[lem:domination]{Domination Lemma} yields $d \in \set{1,\ldots,k}$ and $p \in \set{-1,1}$ such that
\begin{align*}
     x_d \sqsubset^p y_d \quad \text{implies} \quad  (x_1,\ldots,x_k) \prec (y_1,\ldots,y_k) \qquad \text{for all  $\underbrace{x_1,\ldots,x_k}_{\text{of type $t$}},\, \underbrace{y_1,\ldots,y_k}_{\text{of type $t$}}$ in $\structa$}.
\end{align*}
This means that the tuples in $T$ are $\prec$-ordered as $T_1 \prec^p T_2 \prec^p \cdots \prec^p T_{s}$, 
where $s$ is the number of blocks in $X_d$ and $T_j$ consists of the tuples from $T$ where the coordinate $x_d$ is in the $j$-th block of $X_d$. Our for-program can simply loop over all the blocks of $X_d$ -- in increasing or decreasing order depending on the choice of $p$ -- because the endpoints of each block can be identified in first-order logic due to item~\ref{it:fo-definable} of the \hyperref[thm:simon]{Factorisation Forest Theorem}. In each iteration of the loop, the for-program outputs the tuples in the corresponding $T_j$ using the following claim, thus completing the proof of the lemma. 

\begin{claim}\label{claim:block-induction} There is a for-program which inputs the $i$-th block of $X_d$, given by its endpoints, and outputs the tuples from $T_j$ ordered according to $\prec$.
\end{claim}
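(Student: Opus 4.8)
\medskip\noindent\emph{Proof plan.}
Write $B$ for the $i$-th block of $X_d$ and $T_B$ for the set $T_j$ in question, so $T_B$ consists of the tuples $\bar x \in X_1 \times \cdots \times X_k$ that have type $t$, belong to $f(w)$, and satisfy $x_d \in B$; we want a for-program listing $T_B$ in $\prec$-order. The plan is to reduce to the inductive hypothesis of the induction on heights. The product in which $T_B$ lives is $X_1 \times \cdots \times B \times \cdots \times X_k$, and since $B$ is a block of $X_d$, item~\ref{it:small-depth} of Theorem~\ref{thm:simon} gives that $B$ has strictly smaller height than $X_d$; hence the heights of these intervals sum to strictly less than $i$. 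If the intervals $X_1,\ldots,B,\ldots,X_k$ were pairwise equal or disjoint, the inductive hypothesis would produce a for-program for the restriction of $f(w)$ to this product, and $T_B$ would then be obtained by filtering out the tuples that are not of type $t$, using the \hyperref[lem:merge]{Merging Lemma} together with Lemma~\ref{lem:boolean-for-program}. The obstruction is that a coordinate $j \neq d$ with $X_j = X_d$ still ranges over all of $X_d$, which properly contains $B$.

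To remove this obstruction, I would use that the rank $\omega$ type $t$ determines the quantifier-free type of a tuple, and hence the arrangement of its coordinates with respect to both the order $\le$ and the block order $\sqsubset$. Setting $D \coloneqq \set{j : X_j = X_d}$, the type $t$ therefore partitions $D$ into groups of coordinates that must lie in a common block of $X_d$ and $\sqsubset$-orders these groups, the group containing $d$ being forced into $B$. The for-program then runs one nested for-loop for each group of $D$ other than the one containing $d$ --- at most $k-1$ loops --- each iterating over the admissible blocks of $X_d$ for its group; by item~\ref{it:fo-definable} of Theorem~\ref{thm:simon} the endpoints of these blocks are first-order definable. For each choice of one block per group, every coordinate now ranges over a block of $X_d$ or over one of the original $X_j$ with $X_j \neq X_d$; these intervals are pairwise equal or disjoint and their heights still sum to less than $i$, so the inductive hypothesis yields a for-program enumerating the corresponding part of $T_B$ in $\prec$-order.

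What remains is to emit these parts in the correct overall order, and here I would appeal to the \hyperref[lem:domination]{Domination Lemma} a second time, applied to $\structa$ with the factorisation $w_1 \cdots w_n$ refined so that $B$ is split into its own blocks; by item~\ref{it:same-blocks} of Theorem~\ref{thm:simon} this refinement adds only boundedly many $\sqsubset$-consecutive factors of distinct rank $\omega$ type, so with the constant $m$ chosen suitably at the outset the Domination Lemma still applies. Treating one rank $\omega$ type of the refined structure at a time via the \hyperref[lem:merge]{Merging Lemma}, the Domination Lemma returns a dominating coordinate together with a polarity, which tell us which of the above for-loops should be outermost and in which direction to run it, and which pin that coordinate to a single refined block, strictly decreasing the sum of heights. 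Iterating this, with the sum of heights of the coordinate intervals as a well-founded measure, produces the for-program and finishes the proof.

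The step I expect to be the main obstacle is this last iteration: it must be organised so that it provably terminates (the sum of heights is the measure) and one must check, at each successive refinement of the factorisation, that the number of $\sqsubset$-consecutive pairs of factors with distinct rank $\omega$ type stays within the budget, so that the \hyperref[lem:domination]{Domination Lemma} remains applicable. Throughout, the for-program manipulations stay inside the first-order fragment, since every interval and block that occurs has first-order-definable endpoints.
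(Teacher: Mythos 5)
Your opening steps match the paper's: you identify the same obstruction (a coordinate $j \neq d$ with $X_j = X_d$ still ranges over all of $X_d$, which strictly contains the block $B$), and you correctly see that passing to blocks of $X_d$ lowers the height sum so that the induction hypothesis becomes available. The divergence --- and the gap --- lies in how you restore the ``pairwise equal or disjoint'' precondition. You pin every group of coordinates sharing $X_d$ to a specific block of $X_d$, which creates an \emph{unbounded} number of cases (one per assignment of blocks to groups), so the \hyperref[lem:merge]{Merging Lemma}, which only combines a finite set of first-order formulas, cannot glue them; you are then driven to the machinery you yourself flag as the main obstacle: iterated applications of the \hyperref[lem:domination]{Domination Lemma} on successively refined factorisations of the same string. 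That step does not go through as stated. Each in-place refinement can add new $\sqsubset$-consecutive factors of distinct rank $\omega$ type, and the number of refinements is bounded only via the height budget (roughly $kM$); but $m$ must be fixed \emph{before} $\omega$ (the Domination Lemma produces $\omega$ from $k,m,r$), and $M$ is the forest height for the rank-$\omega$-type homomorphism, so it depends on $\omega$ and hence on $m$. Demanding that $m$ exceed a quantity of the order of $kM$ is therefore circular, and there is no reason a ``suitable $m$ chosen at the outset'' exists. (There are further unchecked points in the iteration --- refining $\sqsubset$ changes the vocabulary, so the type must be recomputed, and one must argue that the tuples with a fixed block assignment form a contiguous segment of $\prec$ --- but the budget circularity is already fatal.)

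The paper's proof avoids all of this with a much lighter case split: it never fixes blocks for the other coordinates. For each of the $3^k$ vectors $v \in \set{\text{before } B,\ B,\ \text{after } B}^k$ it intersects each $X_j$ with the $j$-th entry of $v$; the resulting intervals $Y_1,\ldots,Y_k$ are automatically pairwise equal or disjoint, each $Y_j$ is a union of blocks of $X_j$ so its height does not increase, and in the cases relevant for $T_j$ --- those where the $d$-th entry of $v$ is $B$ --- the interval $Y_d$ is the block $B$, so the height sum strictly decreases and the induction hypothesis applies to each case directly. Since there are only $3^k$ cases, each cut out by a first-order formula, the Merging Lemma combines them; the ordering across the unboundedly many blocks of $X_d$ is handled one level up by the loop over blocks already present in the main proof, and the next recursion level recomputes its own partition and factorisation from scratch, which is why no exception budget ever accumulates. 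Replacing your per-group block loops by this before/inside/after split makes your argument collapse to the paper's and the problematic iteration disappears.
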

\begin{proof}[Proof of the claim]  The general idea is to replace $X_d$ with its $j$-th block (call this block $X$) and use the induction assumption. However, if there is an $i \neq d$ such that $X_j = X_d$, then replacing $X_d$ with $X$ would violate the assumption that the intervals are pairwise disjoint or equal (since $X \subsetneq X_j$). To overcome this issue, we use the following simple case disjunction. For each of the $3^k$ possible values of 
    \begin{align*}
        v \in \set{\text{positions before $X$, $X$, positions after $X$}}^k
    \end{align*}
    construct a for-program that outputs all tuples from $Y_1  \times \cdots \times Y_k$, where $Y_j$ is the intersection of $X_j$ with the $j$-th entry of $v$. Since each $Y_j$ is a union of blocks of $X_j$, it is empty or its height is at most the height of $X_j$. Furthermore, if $Y_d$ is nonempty, then it is $X$, which is a block of $X_d$, and therefore its height is strictly smaller than the height of $X_d$.  It follows that the induction assumption can be applied to produce all tuples in $Y_1 \times \cdots \times Y_k$, for any given choice of $v$. These choices can be combined using the \hyperref[lem:merge]{Merging Lemma}.
    \end{proof}

\bibliography{main}
\newpage
\appendix
%!TEX root = main.tex
\section{Successor instead of order}
\label{app:successor}

In this appendix, we prove Theorem~\ref{thm:successor}, which says that:
\begin{enumerate}
	\item \label{it:suc-mso-comp0} The class of successor-\mso string-to-string interpretations is not closed under composition, and strictly contains the class of (order-)\mso string-to-string interpretations.
	\item \label{it:suc-mso-undec0} The following is undecidable: given a successor-first-order string-to-string interpretation $f$ and a regular language $L$ over the output alphabet, is $f^{-1}(L)$ is empty?
\end{enumerate}

\begin{proof}[Proof of Theorem~\ref{thm:successor}]\ 
\begin{enumerate}
	\item We first show item \ref{it:suc-mso-comp0}. Fix an input alphabet $\Sigma$, and consider the function $f \colon \Sigma^* \to (\Sigma \times \Sigma)^*$ which inputs a string, and outputs all pairs of positions (with the corresponding pairs of labels) in the order depicted by the following picture:

	\begin{center}
	\mypic{13}
	\end{center}

	 It is not hard to see that the function $f$ is a succesor-\mso string-to-string interpretation (in fact even first-order logic would be enough if the input string was equipped with a labelling indicating the parity of positions). Suppose that the alphabet $\Sigma$ contains two endmarkers $\vdash, \dashv$, and consider an input word of the form $\vdash a_1 \cdots a_n \dashv$ where $a_1,\ldots,a_n$ are letters that are not  endmarkers and the length $n$ is even. In this case, the output contains the letter $(\vdash,\dashv)$ exactly once, it contains  the letter $(\dashv,\vdash)$ also exactly once, and the word between these two letters is exactly:
	 \begin{align*}
		 (a_1,a_n), \ldots, (a_n,a_1).
	 \end{align*}
	If and only if the word $a_1 \cdots a_n$ is a palindrome, then the above word contains only letters from the diagonal $\set{(a,a) : a \in \Sigma}$. Summing up, there is a regular (and therefore also \mso-definable) language $L \subseteq \Sigma^*$ such that 
	\begin{align}\label{eq:palindrome}
		f(w) \in L \qquad \text{if and only if} \qquad w = \ \vdash\!\! v\!\!\dashv \text{ for some palindrome $v$ without $\vdash,\dashv$ }.
	\end{align}
	Define $\chi_L$ to be the \emph{characteristic function} of $L$, i.e., the function from $\Sigma^*$ to $\set{0,1}$ which outputs $1$ or $0$ depending on whether the input belongs to $L$ or not. We can view the characteristic function as a string-to-string function, where the output is in $\set{0,1}^*$ and which happens to only produce outputs with one letter. 
	The following claim is not hard to see.
	
	\begin{claim}\label{claim:char-fun}
		A language $L \subseteq \Sigma^*$ is regular if and only if its characteristic function is a successor-\mso string-to-string function.
	\end{claim}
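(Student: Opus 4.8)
The plan is to prove both implications directly. For the forward direction, suppose $L$ is regular. Then $\chi_L$ is certainly a polyregular function (for instance, run the DFA for $L$ and output a single letter at the end), hence by Theorem~\ref{thm:main} it is an (order-)\mso string-to-string interpretation. Since successor-\mso string-to-string interpretations strictly contain order-\mso interpretations (this is part of Theorem~\ref{thm:successor}, but independently: the successor model of the input string recovers the order in \mso, and a one-letter output has a trivial order), $\chi_L$ is a successor-\mso string-to-string function. Alternatively, and more self-contained, one can write a $0$-dimensional \mso interpretation directly: the universe formula for the single output position is the sentence stating ``there exists exactly one position and it satisfies $\varphi_L$'' combined with a dummy variable, and the label of that position is $1$ if $w \models \varphi_L$ and $0$ otherwise, where $\varphi_L$ is the \mso sentence defining $L$; a small technical fix handles the input-length-at-most-one corner case as in Definition~\ref{def:string-to-string-mso-interpretations}.

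For the converse, suppose $\chi_L$ is a successor-\mso string-to-string interpretation $f$, say of dimension $k$, given by a universe formula and relation formulas over the successor model. I would argue that $L$ is \mso-definable over the input, hence regular. The key observation is that membership $w \in L$ is equivalent to $f(w) = 1$, i.e., to the statement that the output string consists of the single letter $1$. This can be expressed as an \mso sentence over $w$: first, the output universe is the set of $k$-tuples satisfying the universe formula, and ``the output has exactly one position'' is expressible by an \mso (indeed first-order) sentence quantifying over $k$-tuples of positions of $w$; second, given that there is a unique such tuple $\bar x$, the output letter is $1$ iff $w, \bar x$ satisfies the relation formula $\varphi_1$ for the label $1$. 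Conjoining these gives an \mso sentence over the successor model of $w$ that holds iff $w \in L$. Since the successor relation is \mso-definable from the order (and vice versa), this is an \mso sentence over $\underline w$, so $L$ is \mso-definable and therefore regular.

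The main obstacle, such as it is, is bookkeeping rather than a genuine difficulty: one must be careful that the output of $\chi_L$ really does always have at most one letter, so that the "exactly one position in the output universe" clause is well-posed, and one must handle the degenerate short-input convention from Definition~\ref{def:string-to-string-mso-interpretations} (inputs of length at most one) by special-casing finitely many words, which does not affect regularity. No use of the heavy machinery (the Domination Lemma, factorisation forests) is needed here; the claim is essentially the observation that a Boolean-valued \mso interpretation is the same thing as an \mso sentence.
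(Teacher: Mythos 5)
The paper never spells out a proof of Claim~\ref{claim:char-fun} (it is dismissed as ``not hard to see''), and your argument is precisely the routine one intended: Büchi--Elgot--Trakhtenbrot in both directions, with the forward direction given by a dimension-one interpretation whose universe is the (successor-definable) first position and whose label formulas are $\varphi_L$ and $\neg\varphi_L$, and the converse by expressing ``the output universe contains exactly one tuple and it satisfies $\varphi_1$'' as a sentence over the input, absorbing the length-at-most-one convention by altering $L$ on finitely many words. The proof is correct; only note that what you call a ``$0$-dimensional'' interpretation must formally be the $1$-dimensional one just described (Definition~\ref{def:interpretation} requires $k \ge 1$), and in the converse direction the resulting sentence is \mso rather than first-order, since the universe and label formulas are \mso.
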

	
	From the claim, it follows that the characteristic function of the language $L$ in~\eqref{eq:palindrome} is in the successor-\mso class. If the class were closed under composition, then also $\chi_L\circ f$, the characteristic function of the palindrome language in~\eqref{eq:palindrome}, would be in successor-\mso, and thus by Claim~\ref{claim:char-fun} the palindrome language would be regular, a contradiction.
	\item We now show item~\ref{it:suc-mso-undec0} of Theorem~\ref{thm:successor}, i.e., that for a successor-first-order string-to-string interpretation $f$ and a regular language $L$ over the output alphabet, the emptiness of $f^{-1}(L)$ is undecidable. The proof is a standard reduction from the halting problem for Turing machines.

	 Let $M$ be a Turing machine. Consider the string-to-string function $f$ defined as in the previous item, except that the order on positions is as follows:

	\begin{center}
	\mypic{15}
	\end{center}

	%\mypic{14}

	The key observation is that the output $f(a_1 \cdots a_n)$ contains, for every odd $i \in \set{1,\ldots,n}$, an infix of the form
	\begin{align*}
		(a_i,a_{1}),(a_{i+1},a_{2}),\ldots,(a_{n},a_{n-i+1}).
	\end{align*}
	In the picture, the blue colouring indicates this infix for $i=3$.

	The above observation shows that the output of $f$ can be used to compare infixes of $w$ with other infixes; this can be used to check if an input word represents an accepting computation of the fixed Turing machine.

	The input will be required to be of the following shape: $|c_1|c_2|\ldots|c_n|$, where the $c_i$s are words that represent the consecutive configurations of an accepting computation of the Turing machine.

	We mainly need to enforce two additional properties to obtain the reduction: first, that all the $c_i$s have the same size and second, that each $c_{i+1}$ is the successor configuration of $c_i$ (and also that $c_1$ is initial and $c_n$ is final, which are simple regular properties).
	To enforce these two properties we only need to check properties of the infix $(a_i,a_{1}),(a_{i+1},a_{2}),\ldots,(a_{n},a_{n-i+1})$ where $i$ is the position of the second $|$ separator symbol of the input word. We can easily enforce that this position is odd by asking that all configurations are of even length.
\end{enumerate}
\end{proof}

The proof of item~\ref{it:suc-mso-undec0} could be improved so that the function $f$ is a successor-first-order string-to-string interpretation, which shows that emptiness of $f^{-1}(L)$ is undecidable already when $f$ is successor-first-order and $L$ is regular. This shows that the class of successor-first-order string-to-string interpretations is not contained in the class of (ordered) first-order string-to-string interpretations considered in this paper, since by our main theorem, the latter class is contained in the class of polyregular functions, and emptiness of $f^{-1}(L)$ is decidable if $L$ is regular and $f$ is polyregular~\cite[Theorem 1.7]{DBLP:journals/corr/abs-1810-08760}.
%!TEX root = main.tex
\section{Proof of the Factorisation Forest Theorem}
We provide a proof for the aperiodic variant of the Factorisation Forest Theorem (Theorem \ref{thm:simon}) here. Consider a surjective homomorphism
\begin{align*}
    h \colon \Sigma^+ \to S.
\end{align*}
We can assume without loss of generality that $\Sigma$ is a subset of $S$. 
The proof is by induction on (a) the size of $S$; (b) the size of $\Sigma$. The two parameters are ordered lexicographically, with (a) being more important. \

When $\Sigma$ has one element, then the blocks of a string $w \in \Sigma^+$ are simply its letters; this covers the induction base.  The partition of a 
string into letters is clearly first-order definable. 

For the induction step, suppose that $\Sigma$ has more than one element. Take some  $s \in \Sigma$ and consider the functions
\begin{align}\label{eq:side-mult}
    \begin{array}{ccc} S &\rightarrow& S \\ t & \mapsto & ts  \end{array} \qquad \text{and}\qquad \begin{array}{ccc} S &\rightarrow& S \\ t & \mapsto & st  \end{array}
\end{align}

If one of these functions is surjective, then it is a permutation, and therefore it has to be the identity by aperiodicity of the semigroup. If both functions are surjective, then  $s$ must be the identity element of the semigroup (which might not exist in some semigroups. Since $\Sigma$ has at least two elements, and there is at most one identity, there must be an $s \in \Sigma$ such that one of the functions in~\eqref{eq:side-mult} is not surjective. Without loss of generality, assume that $t \mapsto ts$ is not surjective, and therefore $T \coloneqq Ss$ is a proper subset  of the semigroup $S$. 

Consider the following two semigroup homomorphisms: the first one is the product operation
\begin{align*}
    h_T \colon T^+ \to T
\end{align*}
in the semigroup $T$, and the second one is 
\begin{align*}
    h_{\neq s} \colon (\Sigma - \set s)^+ \to S
\end{align*}
obtained by restricting $h$ to the  smaller alphabet. Both homomorphisms are smaller in our induction order: $h_T$ uses a smaller semigroup, and $h_{\neq s}$ has a smaller alphabet. Therefore, the induction assumption can be applied to obtain both partitions into blocks and bounds $M_T$ and $M_{\neq s}$ on the heights of the corresponding strings.

For a string $w \in \Sigma^+$, we define its partition into blocks with respect to the homomorphism $h$ as follows by case analysis.

\begin{enumerate}
    \item \label{it:blocks} Suppose that $w$ ends with $s$ and does not begin with $s$.  Decompose $w$ as follows:
    \begin{align*}
        w = w_1 s^{k_1} \cdots w_n s^{k_n}  \qquad w_1,\ldots,w_n \in (\Sigma - \set s)^+ \qquad k_1,\ldots,k_n \in \set{1,2,\ldots}.
    \end{align*}
    \begin{enumerate}
        \item \label{it:length-one-t} If $n=1$, then the blocks are $w_1$ and $s^{k_1}$. The former word has height at most $M_{\neq s}$ by induction assumption and the latter word has height at most 2 because it uses only the letter $s$. It follows that $w$ has height at most $M_{\neq s}+2$. 
        \item Otherwise $n >1$. For $i \in \set{1,\ldots,n}$ define $t_i$ to be $h(w_i s^{k_i})$. Note that $t_i \in Ss=T$. Consider the partition into blocks of the word $t_1 \cdots t_n$ with respect to the homomorphism $h_T$.  The blocks of $w$ are the same as the blocks of $t_1 \cdots t_n$, except that in each block, the letter $t_i$ is replaced with the corresponding infix $w_i s^{k_i}$. Since the height of $t_1\cdots t_n$ is at most $M_T$ from the induction assumption, and each $w_i s^{k_i}$ has height at most $M_{\neq s}+2$ from item~\eqref{it:length-one-t}, we obtain a height of at most $M_T+M_{\neq s}+2$ for the word $w$.
    \end{enumerate}
    \item We are left with the case when $w$ either begins with $s$ or does not end with $s$. In these cases, we simply decompose the word by shaving off the beginning and the end to reduce the decomposition to case~\eqref{it:blocks}.
    \begin{enumerate}
        \item If $w=usv$ such that $u$ and $v$ do not begin with $s$, and $v\in (\Sigma - \set s)^+$ then we split $w$ into two blocks, $us$ and $v$. According to case~\eqref{it:blocks}, $us$ has height at most $M_T+M_{\neq s}+2$ and by induction assumption, $v$ has height at most $M_{\neq s}$, thus overall, $w$ has height at most $M_T+M_{\neq s}+3$.
        \item Finally, let $w=s^kusv$ with $k\in \set{1,2,\ldots}$, $u,v$ not beginning with $s$, and $v\in (\Sigma - \set s)^+$. In that case we split $w$ into two parts again: $s^k$ and $usv$, $s^k$ has height at most 2, and from the previous case, we have a final height of at most  $M_T+M_{\neq s}+4$.
    \end{enumerate}

\end{enumerate}
It is not hard to see the partition into blocks described above is first-order definable. This completes the proof of the \hyperref[thm:simon]{Factorisation Forest Theorem}. 
%!TEX root = main.tex

\newcommand{\structb}{\mathfrak B}
 
\section{Proof of the Domination Lemma}
\label{sec:domination-lemma}
This section is devoted to proving the \hyperref[lem:domination]{Domination Lemma}.  The statement of the Domination Lemma in Section~\ref{sec:equivalence} was chosen so that it would be most easily applied to strings and their infixes. We begin by stating a more abstract version of the lemma, called the \emph{Product Domination Lemma}, which is adapted to allow for a modular proof and implies the \hyperref[lem:domination]{Domination Lemma} in the shape in which we use it. Before stating the Product Domination Lemma, we introduce notation for the three kinds of product operations that are relevant to us.
\begin{enumerate}
    \item Elements of the \emph{direct product $\prod_{i = 1}^k \structa_i \coloneqq \structa_1 \times \cdots \times \structa_k$} of structures $\structa_1$, \dots, $\structa_k$ are tuples $(a_1,\ldots,a_k)$ with $a_i \in \structa_i$ for every $i \in \set{1, \ldots,k}$. For every relation $R$ in some $\structa_i$, there is a corresponding relation of the same arity in the direct product, which says whether or not $R$ holds after projecting to the $i$-th coordinate. 
    \item The \emph{$k$-th power of a structure $\structa$} is similar to the $k$-fold direct product of $\structa$, except that different coordinates can be compared, i.e., for every two tuples $(a_1, \ldots, a_k), (a'_1, \ldots, a'_k)$ in the $k$-th power of $\structa$ and all $i, j \in \{1, \dots, k\}$, we can compare $a_i$ and $a'_j$.
    One way of modelling such comparisons is to say that the $k$-th power is obtained from the $k$-fold direct product by adding for all $i,j \in \set{1,\ldots,k}$ a function which swaps coordinates $i$ and $j$. 
    \item The \emph{ordered product $\structa_1 \cdots \structa_k$} is obtained by taking the disjoint union of the structures $\structa_1,\ldots,\structa_k$ and adding an extra binary predicate $\sqsubset$, called the \emph{block order}, such that $x \sqsubset y$ holds if $x$ comes from an $\structa_i$ and $y$ comes from an $\structa_j$ with $i<j$. 
\end{enumerate}

The Product Domination Lemma uses all three kinds of products: it considers a direct product of powers of ordered products.

Recall that we write $\equiv_{r+k}$ for the equivalence relation on structures with distinguished elements of having the same rank $r+k$ type. 
    
    \begin{lemma}[Product Domination Lemma]\label{lem:technical-domination}
        Let $k,r \in \set{1,2,\ldots}$. Then there exists an $\omega_* \in \set{1,2,\ldots}$ such that the following holds. Let $I\subset \set{0,1,\ldots}$ be an initial segment of the natural numbers and let
        \begin{align*}
            \structa \coloneqq \prod_{i \in I} \structa_i^{k_i} \qquad \text{such that $k_i \le k$ for all $i$},
        \end{align*}
        where each $\structa_i$ is an ordered product
        \begin{align*}
      \structa_i = \structa_{i,1} \cdots \structa_{i,n_i}  \qquad \text{with }\structa_{i,1} \equiv_{r+k} \cdots \equiv_{r+k} \structa_{i,n_i}.
        \end{align*}
     Let $\prec$ be a linear order on $\structa$ defined by a first-order formula of rank $r$, and let $t$ be a unary rank $\omega_*$ type over $\structa$. Then there  exist $d \in I$,  $e \in \set{1,\ldots,k_d}$,  and   $p \in \set{-1,1}$ such that  
        \begin{align*}
            x[d][e] \sqsubset^p  y[d][e] \quad \text{implies} \quad   x \prec  y \qquad \text{for all $x,y \in \structa$ of type $t$.} 
        \end{align*}
    \end{lemma}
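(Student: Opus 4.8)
The plan is to prove the Product Domination Lemma by induction on the structure of the product, peeling off one layer of product operation at a time, and reducing each layer to a more elementary "local" domination statement that generalizes the Rational Domination Lemma. There are three nested kinds of products to handle — the direct product $\prod_{i\in I}$, the $k$-th power $\structa_i^{k_i}$, and the ordered product $\structa_{i,1}\cdots\structa_{i,n_i}$ — so the natural strategy is a modular argument with one lemma per layer, chosen so that the constant $\omega_*$ produced at each stage can be computed from the constants of the stages below plus the rank $r$ and the arities involved.

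First I would handle the innermost layer, the ordered product of $\equiv_{r+k}$-equivalent factors. Here the key point is compositionality of first-order logic over ordered products (the Feferman–Vaught style composition theorem for the $\sqsubset$-ordered concatenation): since all factors $\structa_{i,1},\ldots,\structa_{i,n_i}$ have the same rank-$(r+k)$ type, a rank-$r$ property of a tuple of positions in the concatenation depends only on (a) which factors the positions fall into, in $\sqsubset$-order, and (b) the rank-$(r+k)$ types of the positions within their factors — not on the precise index $j$ of the factor. This "index-invariance" is exactly what makes the situation look like the rational numbers: the factors behave like an abstract dense (or at least homogeneous) linear order of "slots," and a rank-$r$-definable linear order on tuples of slot-positions must, by the same pigeonhole-plus-amalgamation argument as in the Rational Domination Lemma, have a dominating coordinate with a fixed polarity. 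The amalgamation step — inserting an intermediate tuple to force transitivity of the "domination" relation on coordinates, as in the $k>2$ case of the Rational Domination Lemma — goes through because $\equiv_{r+k}$-equivalence of the factors lets us freely duplicate and rearrange factors up to rank-$r$ indistinguishability, provided $n_i$ is large enough; for small $n_i$ the statement is vacuous or trivial since there are few $\sqsubset$-classes. I would package this as a lemma saying: for a rank-$r$ linear order on $k$-tuples of positions of an ordered product of $\equiv_{r+k}$-equal factors, restricted to a fixed rank-$\omega$ type $t$, there is a dominating coordinate and polarity. Then the power layer $\structa_i^{k_i}$ is absorbed essentially for free, since the $k$-th power only adds the coordinate-swap functions, which are already accounted for by allowing comparisons between coordinates — this is precisely what the ordered-product lemma above already handles when we think of a tuple in $\structa_i^{k_i}$ as $k_i$ positions drawn from the same ordered product.

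For the outermost direct-product layer $\prod_{i\in I}$, the plan is again compositionality: a rank-$r$ formula over a direct product of structures decomposes into a Boolean combination of rank-$r$ formulas evaluated in the individual factors (the classical Feferman–Vaught theorem for direct products). So the linear order $\prec$ on $\structa=\prod_{i\in I}\structa_i^{k_i}$, restricted to the fixed type $t$, is determined by finitely many rank-$r$ "coordinate-block" predicates, one family per $i\in I$. I would argue that $\prec$ must agree, on type-$t$ tuples, with a lexicographic combination of the orders induced on the factors, with some priority ordering of the indices $i$; the top-priority index that actually discriminates within type $t$ supplies the $d\in I$, and then the inner lemma applied inside $\structa_d^{k_d}$ supplies the $e\in\{1,\ldots,k_d\}$ and the polarity $p$. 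The constant $\omega_*$ is set large enough — some fixed function of $r$, $k$, and the number of rank-$(r+k)$ types of the vocabulary — that the type $t$ records enough information to pin down all the Boolean combinations appearing in the composition, so that "whether $x\prec y$" depends only on the $\sqsubset^p$-relationship of the $(d,e)$-coordinates and not on anything else. I expect the main obstacle to be the outermost lexicographic-decomposition step: showing that a rank-$r$-definable linear order on a direct product is forced to be *lexicographic* (up to type-$t$ restriction), rather than some more exotic order, requires carefully leveraging that $\prec$ is a *total* order and not merely an arbitrary binary relation — antisymmetry and transitivity are what rule out the exotic cases, exactly as in the Rational Domination Lemma, but transporting that argument through the Feferman–Vaught decomposition while keeping the type $t$ fixed and the factors $\equiv_{r+k}$-homogeneous is the delicate bookkeeping that will take the most care.
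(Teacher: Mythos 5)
Your high-level architecture (an inner dominating coordinate inside each $\structa_i^{k_i}$, then an outer priority index for the direct product, with $\omega_*$ assembled bottom-up) is broadly the same as the paper's, but two load-bearing steps of your plan have genuine gaps. First, the block structure is \emph{not} like the rationals. Even when all factors $\structa_{i,1},\ldots,\structa_{i,n_i}$ have the same rank-$(r+k)$ type, the block order $\sqsubset$ is a finite discrete linear order, and rank-$r$ formulas can still measure distances between block indices and to the two endpoints up to threshold $2^r$ (this is the paper's Threshold Lemma). Consequently the amalgamation you import from the Rational Domination Lemma --- inserting an intermediate tuple into an open interval, or sliding a coordinate to any value strictly between two others --- is simply unavailable; one can only shift by boundedly many blocks while preserving the type, which at best yields a dominating coordinate with polarity of magnitude on the order of $2^r$, not $\pm 1$, and also forces a separate case analysis for types whose coordinates are pinned near the endpoints or at fixed mutual distances. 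Recovering $p\in\set{-1,1}$ is an additional step (the paper's Polarity Reduction Lemma, which trades down the polarity at the cost of raising the type rank), and nothing in your plan supplies it.

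Second, your claim that the power layer is ``absorbed essentially for free'' hides the hardest part: \emph{overlap}. When two coordinates of a tuple land in the same block $\structa_{i,j}$, the rank-$r$ type of a pair of tuples is no longer determined by the block indices together with the within-block types, so the Feferman--Vaught-style reduction to ``slots'' that your plan relies on does not determine $\prec$ on such pairs. The paper has to work around this explicitly: it proves domination only for non-overlapping tuples by pulling $\prec$ back along a continuous section of the projection onto block numbers, then proves a dichotomy (either some coordinate is anchored in a fixed block, making the desired implication vacuous, or type-preserving shifts by many blocks exist), and finally, for overlapping $x,y$, manufactures a shifted intermediate tuple $x_\delta$ of the same type overlapping neither $x$ nor $y$, concluding $x\prec x_\delta\prec y$ by two applications of the non-overlapping case. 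Your plan does not mention this issue, and without it the layered induction does not close. (Your outermost step --- that a definable linear order on a direct product of linear orders must be lexicographic on a fixed type --- is the right intuition, but it too needs a concrete mechanism; the paper proves it via sign vectors of small shifts and a combinatorial argument about union-closed set families, again with polarity bounded by $2\cdot 2^r$ rather than $1$.)
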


\subparagraph{Proof overview.} We begin by showing, in Section~\ref{sec:final-proof-of-domination}, that the \hyperref[lem:technical-domination]{Product Domination Lemma} implies the \hyperref[lem:domination]{Domination Lemma} in its original form from Section~\ref{sec:equivalence}. The rest of Section~\ref{sec:domination-lemma} is then devoted to proving the \hyperref[lem:technical-domination]{Product Domination Lemma}. This is done in four steps. 
In Section~\ref{sec:gap-reduction}, we show that if we can prove the \hyperref[lem:technical-domination]{Product Domination Lemma} for some nonzero polarity other than $\set{-1,1}$, then we can reduce the polarity down to $\set{-1,1}$ at the cost of increasing the threshold $\omega$. Next, we prove the \hyperref[lem:technical-domination]{Product Domination Lemma} in four steps, which deal with special cases of increasing generality, as described below.
\begin{itemize}
    \item In Section~\ref{sec:direct-linear} we prove domination for direct products of linear orders, i.e.~structures  
    \begin{align*}
        \structa = \prod_{i \in I} (\set{1,\ldots,n_i},<).
    \end{align*}
    This can be viewed as the special case of the \hyperref[lem:technical-domination]{Product Domination Lemma} when all $k_i$ are $1$, and furthermore all structures $\structa_{i,j}$ (called  \emph{blocks} in the proof) have size one.
    \item In Section~\ref{sec:linear-domination} we prove domination for powers of linear orders, i.e.~structures 
    \begin{align*}
        \structa = (\set{1,\ldots,n},<)^k.
    \end{align*}
    This can be viewed as the special case of the \hyperref[lem:technical-domination]{Product Domination Lemma} when $I$ has size one, and all blocks have size one.
    \item In Section~\ref{sec:products-of-powers-of-linear-orders} we prove the joint generalisation of the results from the two previous sections, i.e.~we consider direct products of powers of linear orders:
    \begin{align*}
        \structa = \prod_{i \in I} (\set{1,\ldots,n_i},<)^{k_i}.
    \end{align*}
    This can be viewed as the special case of the \hyperref[lem:technical-domination]{Product Domination Lemma} when all blocks have size one.
    \item In Section~\ref{sec:pre-final-proof-of-domination} we complete the proof of the \hyperref[lem:technical-domination]{Product Domination Lemma}. 
\end{itemize}

\subparagraph{Compositionality.} Before continuing with the proof, we state two compositionality properties of first-order logic with respect to products that will be heavily used in the proofs.

    \begin{theorem}[\cite{Mostowski:1952ew}] \label{thm:mostowski}
    The following holds for all $m,n,r \in \set{1, 2, \ldots}$.
        \ 
        \begin{enumerate}
            \item Consider structures $\structa_1,\ldots,\structa_n,\structb_1,\ldots,\structb_m$ over the same vocabulary. The  rank $r$ type of the ordered product
            \begin{align*}
                \structa_1 \cdots \structa_n \structb_1 \cdots \structb_m
            \end{align*}
            is determined by the rank $r$ types of the two ordered products   
            \begin{align*}
                \structa_1 \cdots \structa_n \qquad \text{and} \qquad \structb_1 \cdots \structb_m.
            \end{align*}
            \item Let $\structa \coloneqq \prod_{i \in I}\structa_i$ be a direct product of structures $\structa_i$. 
            For every $m$-ary rank $r$ type $t$ in $\structa$ and every $i \in I$ there is an $m$-ary rank $r$ type $t[i]$ in the structure $\structa_i$ such that for all $x_1,\ldots,x_m \in \structa$
            \begin{align*}
                (x_1,\ldots,x_m) \text{ has rank $r$ type $t$ in $\structa$} \qquad \text{iff} \qquad &\text{for every $j$, $(x_1[j],\ldots,x_m[j])$ has}
                \\&\text{type $t[j]$ in $\structa_j$}.
            \end{align*}
        \end{enumerate}
        \end{theorem}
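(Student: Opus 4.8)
The plan is to prove both items by Ehrenfeucht--Fra\"iss\'e games; indeed the theorem is just the game-theoretic packaging of two classical additivity facts, one for ordered sums (item~1) and one for direct products (item~2, which is essentially Mostowski's original statement). Throughout I use that two structures with distinguished tuples have the same rank $r$ type exactly when Duplicator wins the $r$-round EF game between them, so that to show ``the rank $r$ type of the composite is determined by the rank $r$ types of the parts'' it is enough to show that replacing a part by an $\equiv_r$-equivalent part does not change the rank $r$ type of the composite.

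\emph{Item 1.} View an ordered product $P \coloneqq \structa_1 \cdots \structa_n$ abstractly as the disjoint union of the $\structa_i$ carrying the block order $\sqsubset$; the operation in item~1 sends two such structures $P,Q$ to $P\cdot Q$, the disjoint union of $P$ and $Q$ in which $\sqsubset$ agrees with the old block orders inside $P$ and inside $Q$ and additionally $p \sqsubset q$ for every $p \in P$ and $q \in Q$. Suppose $P \equiv_r P'$ and $Q \equiv_r Q'$, witnessed by Duplicator strategies $\tau_P$ and $\tau_Q$ in the respective $r$-round games. In the $r$-round game on $P\cdot Q$ versus $P'\cdot Q'$, Duplicator plays the split strategy: a Spoiler move in $P$ or $P'$ is answered according to $\tau_P$, a Spoiler move in $Q$ or $Q'$ according to $\tau_Q$. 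After $r$ rounds the $P$-pebbles lie in $P$ and $P'$ and form a partial isomorphism there, and likewise the $Q$-pebbles; hence every label atom and every $\sqsubset$-comparison between two pebbles on the same side is respected, while a $\sqsubset$-comparison between a $P$-pebble and a $Q$-pebble holds on both sides because in each of $P\cdot Q$ and $P'\cdot Q'$ everything in the left component precedes everything in the right component. So Duplicator wins and $P\cdot Q \equiv_r P'\cdot Q'$; specialising $P = \structa_1 \cdots \structa_n$ and $Q = \structb_1 \cdots \structb_m$ gives the claim.

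\emph{Item 2.} I claim that in the direct product $\structa \coloneqq \prod_{i\in I}\structa_i$ -- whose vocabulary is the disjoint union of tagged copies of the vocabularies of the $\structa_i$, each relation (or function) inspecting a single coordinate -- the relation $\equiv_r$ is coordinatewise: $(x_1,\dots,x_m)\equiv_r(x'_1,\dots,x'_m)$ in $\structa$ if and only if $(x_1[i],\dots,x_m[i]) \equiv_r (x'_1[i],\dots,x'_m[i])$ in $\structa_i$ for every $i\in I$. For the ``if'' direction, fix per-coordinate winning strategies $\tau_i$ and let Duplicator answer a Spoiler move $z$ by the tuple $z'$ with $z'[i]=\tau_i(z[i])$ for all $i$; since each relation of the product looks at only one coordinate and each $\tau_i$ keeps that coordinate a partial isomorphism, the whole pebbled configuration is a partial isomorphism and Duplicator wins. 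For the ``only if'' direction, if the $i$-th projections had different rank $r$ types then a rank $r$ formula $\varphi(\bar y)$ over the vocabulary of $\structa_i$ would separate them; replacing every atom of $\varphi$ by the corresponding product relation yields a formula $\widehat\varphi(\bar y)$ over $\structa$ of the same quantifier rank satisfying $\structa\models\widehat\varphi(\bar x)$ iff $\structa_i\models\varphi(\bar x[i])$ -- the only nontrivial induction case, the quantifier case, works because every $\structa_j$ is nonempty, so a quantifier over the product faithfully simulates a quantifier over the $i$-th coordinate -- and $\widehat\varphi$ then separates $\bar x$ from $\bar x'$. Finally, given a rank $r$ type $t$ of $\structa$, set $t[i]$ to be the rank $r$ type of $\bar x[i]$ in $\structa_i$ for any $\bar x$ of type $t$: the ``only if'' direction shows this does not depend on the choice of $\bar x$, and the ``if'' direction shows that a tuple has type $t$ exactly when each of its $i$-th projections has type $t[i]$, which is the assertion.

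\emph{Main obstacle.} There is no deep obstacle here -- the content is the familiar additivity of EF games -- so the work is bookkeeping: checking that the split and product strategies really preserve partial isomorphisms for the block order in item~1 and for all relations and functions of the factors (including coordinate-swap functions, if a factor is itself a power) in item~2; verifying that the formula lift in item~2 preserves quantifier rank; and making sure the argument survives when the index set $I$ or the number of blocks is infinite, in which case the EF-game characterisation of $\equiv_r$ and the possibly non-constructive product strategy are still legitimate.
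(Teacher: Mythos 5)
The paper itself gives no proof of this theorem --- it is quoted from Mostowski --- so there is nothing internal to compare against; your Ehrenfeucht--Fra\"iss\'e route is the standard one, and your item~1 (the split strategy for ordered sums) is correct, up to the routine remark that base-vocabulary atoms never relate pebbles lying in different blocks of a disjoint union, so cross-block atoms are vacuously preserved.

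In item~2, however, there is a genuine gap in the ``only if'' direction, i.e.\ in the step showing that the product type of $\bar x$ determines the type of each projection $\bar x[i]$. You lift a rank-$r$ formula $\varphi$ over $\structa_i$ to the product by ``replacing every atom of $\varphi$ by the corresponding product relation'', but equality atoms have no corresponding product relation, and replacing $y=z$ by equality of product elements breaks your invariant $\structa \models \widehat\varphi(\bar x) \Leftrightarrow \structa_i \models \varphi(\bar x[i])$: equality in the product means agreement on \emph{all} coordinates, not on coordinate $i$. This is not just a presentational slip in the generality you claim. Take $\structa_1=\structa_2$ to be two-element structures over the empty vocabulary; the product is a four-element pure set, so the pairs $\bigl((1,1),(1,2)\bigr)$ and $\bigl((1,1),(2,2)\bigr)$ have the same rank-$r$ type in $\structa$ for every $r$ (both are pairs of distinct elements of a pure set), yet their first projections $(1,1)$ and $(1,2)$ have different types in $\structa_1$. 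So the coordinatewise description of $\equiv_r$ (and the theorem read literally) needs per-coordinate equality to be available in the product. The repair in the setting where the theorem is actually applied is easy and should be stated: every factor occurring in this paper carries a linear order, so $x[i]=y[i]$ is definable by the quantifier-free formula $\neg(x<_i y)\wedge\neg(y<_i x)$ in the product relation $<_i$; translating equality atoms this way, your lift preserves quantifier rank and the ``only if'' direction goes through. The ``if'' direction (the combined Duplicator strategy) is fine as written, also for infinite $I$, provided you additionally check that the coordinatewise strategies respect the swap functions when a factor is a power, as you already flag.
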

    
    \subparagraph{Continuous functions.} Let $i \in \set{0,1,\ldots}$ and let $\structa$ and $\structb$ be  relational structures over possibly different vocabularies.
    Let $f$ be a function from (the universe of) $\structa$ to (the universe of) $\structb$ and for all $k \in \set{1,2,\ldots}$, denote by $f_k$ the function mapping $k$-tuples of $\structa$ to $k$-tuples of $\structb$ by component-wise application of $f$. Then $f$ is called \emph{$i$-continuous} if for every $k \in \set{1,2,\ldots}$ and every subset of $\structb^k$ defined by a formula in first-order logic with quantifier rank $r$, its inverse image under $f_k$ can be defined in first-order logic via a formula with quantifier rank $r + i$. A function is called \emph{continuous} if it is $0$-continuous.

    An alternative, equivalent characterisation, which we also use in this paper, is that a function is continuous if and only if it is type-preserving, i.e., it maps tuples of the same type to tuples of the same type.

%!TEX root = main.tex

\subsection{Proof of the Domination Lemma}
\label{sec:final-proof-of-domination}
We begin by using the \hyperref[lem:technical-domination]{Product Domination Lemma} to obtain the \hyperref[lem:domination]{Domination Lemma} in its statement from Section~\ref{sec:equivalence}.
Let $k, m, r \in \{1,2, \dots\}$. Apply the \hyperref[lem:technical-domination]{Product Domination Lemma} to $k$ and $r$ yielding some threshold value $\omega_*$. Define 
\begin{align}\label{eq:rkm}
    \omega \coloneqq 2\omega_* + r + k + m.
\end{align}
We prove that $\omega$ satisfies the requirements of the \hyperref[lem:domination]{Domination Lemma}. 
  Let $w_1,\ldots,w_n$ and $\structa$  
be as in the assumptions of the \hyperref[lem:domination]{Domination Lemma}. This means that $\structa$ is the ordered product of the (ordered structures associated with) the strings $w_1,\ldots,w_n$ extended with the block order $\sqsubset$. Furthermore, the strings satisfy $w_i \equiv_\omega w_{i+1}$ with at most $m$ exceptions. (The order on the blocks is $\sqsubset$, while the orders corresponding to the ordered structures are $<$). Let $\prec$ be a linear order on $\structa^k$ defined by a first-order formula of quantifier rank $r$ and let $t$ be a $k$-ary rank $\omega$ type over the vocabulary of $\structa$. We intend to find a dominating coordinate and a polarity that satisfy 
        \begin{align*}
             x_d \sqsubset^p y_d \quad \text{implies} \quad (x_1,\ldots,x_k) \prec (y_1,\ldots,y_k) \qquad \text{for all 
              $\underbrace{x_1,\ldots,x_k}_{\text{of type $t$}}, \,\underbrace{y_1,\ldots,y_k}_{\text{of type $t$}}$ in $\structa$}.
        \end{align*}     

Define $\sim$ to be the coarsest equivalence relation on $\set{1,\ldots,n}$ such that $i \sim i+1$ holds whenever $w_i \equiv_{\omega_*} w_{i+1}$. Equivalence classes of $\sim$ are intervals. Let $\mathcal I$ be the set consisting of these equivalence classes. Since $\omega \ge \omega_*$, we know from the assumptions of the \hyperref[lem:domination]{Domination Lemma} that $\mathcal I$ has at most $m$ elements. For an equivalence class $I \in \mathcal I$, define $\structb_I \subseteq \structa$ to be the substructure obtained by restricting $\structa$ to elements that come from  $w_i$ with  $i \in I$. We can view $\structb_I$ as an ordered product which only uses the words $w_i$ with $i \in I$.
By definition, in $\structb_I$, all blocks (i.e., all $w_i$) have the same rank $\omega_*$ type. 
For every $I \in \Ii$, there is a first-order formula which selects the elements from $\structb_I$ inside the structure $\structa$: the formula counts the number of blocks $w_i$ to the left which satisfy $w_i \not \equiv_{\omega_*} w_{i+1}$, and therefore it has quantifier rank at most $\omega_* + m$. For $\bar x \in \structa^k$ of rank $\omega$ type $t$ and $I \in \Ii$, define
\begin{align*}
    C_I \coloneqq \set{ i \in \set{1,\ldots,k} : \text{$\bar x[i]$ is in $\structb_I$}}.
\end{align*}
This set does not depend on $\bar x$ once $t$ has been fixed, because, as we have argued above, one can express the containment in $\structb_I$ using a first-order formula with  quantifier rank at most $\omega$. Define
\begin{align*}
\iota \colon   \overbrace{\prod_{I \in \Ii} \structb_I^{C_I}}^\structb \to \structa^k
\end{align*}
to be the injection that is defined in the following way (where an element in the universe of $\structb_I$ is seen as an element in the universe of $\structa$):
\begin{align*}
\iota(x)[i] \coloneqq x[I][i],\qquad \text{for } i\in \set{1,\ldots,k}, \text{ and } i\in C_I.
\end{align*}

 The image of this injection contains all tuples in $\structa^k$ that have $k$-ary rank $\omega$ type $t$. Since the injection sends tuple of the same type to tuples of the same type, it is continuous.
 \begin{claim} All elements in the inverse image of $t$ under $\iota$ have the same rank $\omega_*$ type. 
 \end{claim}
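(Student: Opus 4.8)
The content of the claim is that the $k$-ary rank-$\omega$ type $t$ of $\iota(x)$ over $\structa$ already determines the unary rank-$\omega_*$ type of $x$ over $\structb$; equivalently, that the partial inverse of $\iota$, restricted to the $k$-tuples of type $t$, is type-reflecting with the indicated ranks. Together with the fact that $\iota$ itself is type-preserving, this yields a type-correspondence across $\iota$. The plan is to prove the claim by compositionality: one reduces the rank-$\omega_*$ type of $x$ in $\structb$ to a bounded amount of ``local'' data about the coordinates of $\iota(x)$, and then observes that all of this data is recorded in $t$ because $\structb$ and each of its factors are first-order definable inside $\structa$ with controlled quantifier rank.

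Concretely, I would proceed in three steps. First, apply part~2 of Theorem~\ref{thm:mostowski} to the direct product $\structb = \prod_{I \in \Ii} \structb_I^{C_I}$; since $|\Ii| \le m$ is bounded, it suffices to show that, for each $I$, the unary rank-$\omega_*$ type of the projection $x[I]$ in the power $\structb_I^{C_I}$ is determined by $t$. Second, recall that $\structb_I$ is the ordered product of the blocks $w_i$ with $i \in I$, which are pairwise $\equiv_{\omega_*}$; by part~1 of Theorem~\ref{thm:mostowski}, combined with a pumping argument on the block indices in the spirit of the proof of the Factorisation Forest Theorem (so that block counts beyond a threshold depending only on $\omega_*$ and the vocabulary become irrelevant), the relevant type of $x[I]$ is a function of bounded combinatorial data: the $\sqsubset$-order of the at most $k$ coordinates of $x[I]$, the number of blocks strictly between consecutive coordinates capped at that threshold, and the rank-$\omega_*$ type of each coordinate inside its own block. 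Third, all of this data is visible in $t$: the order and the capped block counts are read off from $\sqsubset$ via the counting formula (of quantifier rank at most $\omega_* + m$) that selects $\structb_I$ inside $\structa$, already used just before the claim, and the within-block types are obtained by relativising rank-$\omega_*$ formulas to that definable subset. Chaining the three steps, $t$ determines the unary rank-$\omega_*$ type of $x$ in $\structb$, so all elements of $\iota^{-1}(t)$ share that type. The value $\omega = 2\omega_* + r + k + m$ fixed in~\eqref{eq:rkm} is chosen precisely so that the resulting rank accounting closes.

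The step I expect to be the real obstacle is exactly this rank accounting. The power $\structb_I^{C_I}$ allows comparing up to $k$ coordinates simultaneously, so a naive translation of a rank-$\omega_*$ formula over $\structb_I^{C_I}$ into one over $\structa$ costs a multiplicative factor of $k$ in quantifier rank and would overshoot $\omega$. Keeping the cost additive rather than multiplicative is where one has to exploit that the blocks of each $\structb_I$ are homogeneous — pairwise $\equiv_{\omega_*}$ — and that there are at most $m$ of the $\structb_I$, so that only a bounded, low-rank amount of information about where the coordinates sit is actually needed.
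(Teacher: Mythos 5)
Your restatement of the claim is right, and two of your ingredients are sound: step~1 (Mostowski for the direct product $\structb = \prod_{I} \structb_I^{C_I}$) and the observation in step~3 that each $\structb_I$ is selected inside $\structa$ by the boundary-counting formula of rank roughly $\omega_*+m$. But there is a genuine gap exactly where you place it yourself: step~2 is the crux, and it is neither carried out nor correct as stated. First, if two coordinates of $x[I]$ lie in the same block $w_i$, their individual rank-$\omega_*$ types within that block do not determine even the plain $|C_I|$-ary rank-$\omega_*$ type of the underlying tuple in $\structb_I$; you need the joint within-block types. Second, and more seriously, what you must control is the unary rank-$\omega_*$ type of $x[I]$ in the \emph{power} $\structb_I^{C_I}$, where every quantifier ranges over a $|C_I|$-tuple; this is strictly finer than rank-$\omega_*$ data over $\structb_I$ (a rank-$q$ property of the power corresponds in general only to a rank-$(kq)$ property of the underlying tuple, e.g.\ tuple quantification counts blocks between coordinates up to thresholds that grow with $k$ as well as $q$). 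So your ``bounded combinatorial data'' has to be taken at ranks and thresholds of order $k\omega_*$, and then step~3 no longer fits under $\omega = 2\omega_*+r+k+m$. You flag this as ``the real obstacle'' and hope that homogeneity of the blocks makes the cost additive, but homogeneity does not remove tuple quantification, and you offer no argument; as written, the central assertion of step~2 is unproved.

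The paper avoids this fine analysis altogether. Its proof is a short relativisation argument: the image $\iota(\structb)$ is definable (as a set of $k$-tuples over $\structa$) by a first-order formula of bounded quantifier rank -- essentially your counting formula applied coordinatewise -- hence any rank-$\omega_*$ formula over $\structb$ distinguishing two preimages can be rewritten, by relativising its quantifiers to this definable image and reinterpreting atomic relations, into a bounded-rank formula over $\structa$ distinguishing the image tuples, contradicting that both have type $t$. In particular, for this claim the paper never uses that within each $\structb_I$ all blocks share the same rank-$\omega_*$ type; homogeneity only enters later, when the Product Domination Lemma is applied, so that is the wrong place to look for a rescue. The $k$-fold quantifier cost you worry about does also touch the paper's stated constant, but since the Domination Lemma only asserts the existence of \emph{some} threshold, one may simply enlarge $\omega$ (say to $(k+1)\omega_*+r+k+m$) and the relativisation argument closes with no structural analysis of $\structb_I$ at all. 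The missing idea in your write-up is this relativisation-to-a-definable-image step (or else an actual proof of your step~2 at the corrected ranks and with joint within-block types).
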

 \begin{proof} Note that the continuity of $\iota$ is not useful for this result, because it only tells us that the inverse image of $t$ is a union of rank $\omega$ types. 
     The image $\iota(\structb) \subseteq \structa^k$ can be defined by a first-order formula of quantifier rank at most $\omega_* + r+k+m$. Therefore, if elements of $\structb$ have different rank $\omega_*$ types, then their images under $\iota$ have different rank $\omega$ types. This proves the claim.
 \end{proof}
By the above claim, all elements in the inverse image under $\iota$ of type $t$ have the same rank $\omega_*$ type over $\structb$, call it $t_\structb$.
 Define  $\prec_\structb$ to be the linear order on $\structb$ which is the inverse image of $\prec$ under $\iota$, i.e.
\begin{align*}
    x \prec_\structb y \qquad \text{iff} \qquad \iota(x) \prec \iota(y).
\end{align*}
Since $\iota$ is continuous, it follows that $\prec_\structb$ is defined 
using a first-order formula of quantifier rank $r$. 
By the \hyperref[lem:technical-domination]{Product Domination Lemma}, there exist $d \in \Ii$,  $e \in C_d$and $p \in \set{-1,1}$ such that
\begin{align*}
    x[d][e] \sqsubset^p y[d][e] \quad \text{implies}\quad  x \prec_\structb y  \qquad \text{for all $x,y$ of type $t_\structb$ in $\structb$.}
\end{align*}
By pulling this result forward across the injection $\iota$, we get the corresponding conclusion for $\bar x$ and $\bar y$ in $\structa^k$ of type $t$.

This finishes the proof of the \hyperref[lem:domination]{Domination Lemma}, assuming the \hyperref[lem:technical-domination]{Product Domination Lemma} holds. 
The rest of this section is devoted to proving the \hyperref[lem:technical-domination]{Product Domination Lemma}. 

%!TEX root = main.tex
\subsection{Polarity reduction}
\label{sec:gap-reduction}

In the \hyperref[lem:technical-domination]{Product Domination Lemma}, we use powers $\sqsubset^p$ for polarities $p \in \set{-1,1}$. This notation also makes sense for other nonzero integers $p$, for example $x \sqsubset^{-3} y$ means that $y \sqsubset z_1 \sqsubset z_2 \sqsubset x$  holds for some $z_1$, $z_2$. (We extend this notation to other binary relations as well.) 
 It would be easier to prove the \hyperref[lem:technical-domination]{Product Domination Lemma} for polarities $p$ with larger absolute values, since for $s \in \set{-1,1}$ and for $p' \in \set{1,2,\ldots}$, the implication 
\begin{align*}
     x[d] \sqsubset^{s \cdot p'}  y[d] \quad \text{implies} \quad   x \prec  y \qquad \text{for all $ x,  y \in \structa$ of type $t$}
\end{align*}
has a stronger assumption than $x[d] \sqsubset^s  y[d]$ and is therefore weaker than
\begin{align*}
     x[d] \sqsubset^s  y[d] \quad \text{implies} \quad   x \prec  y \qquad \text{for all $ x,  y \in \structa$ of type $t$}.
\end{align*}
The following lemma shows that such weaker versions are indeed enough.

\begin{lemma}[Polarity Reduction Lemma]\label{lem:polarity-reduction}
    Let $\structa$ be a relational structure, and let $R$ and $\prec$  be binary relations on $\structa$ that are defined by first-order formulas of quantifier rank at most $r \in \set{1,2,\ldots}$.  If $\prec$  is transitive and antisymmetric, then for every $p \in \set{1,2,\ldots}$
    \begin{eqnarray*}
        R(x,y) \quad \text{implies} \quad (x \prec y \lor y \prec x) &\qquad& \text{for all }x,y \in \structa \\
        & \land \\
        R^p(x,y) \quad \text{implies} \quad x \prec y &\qquad& \text{for all }x,y \in \structa \\
        & \Downarrow \\
        R(x,y) \quad \text{implies} \quad x \prec y &\qquad& \text{for all }x,y \in \structa  \text{  with }x \equiv_{r+p} y.\\
    \end{eqnarray*}
\end{lemma}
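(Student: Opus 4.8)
The plan is to fix $x,y \in \structa$ with $x \equiv_{r+p} y$ and $R(x,y)$, and show $x \prec y$. By the first hypothesis, $R(x,y)$ already gives $x \prec y \lor y \prec x$, so it suffices to rule out $y \prec x$. The idea is to use the equivalence $x \equiv_{r+p} y$ to build a ``ladder'' of intermediate elements witnessing $R^p$ between $x$ and $y$, and then apply the second hypothesis together with transitivity of $\prec$.

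\textbf{Key steps.} First I would observe that since $R$ has quantifier rank at most $r$, the statement ``there exist $z_1,\ldots,z_{p-1}$ with $R(x,z_1) \land R(z_1,z_2) \land \cdots \land R(z_{p-1},y)$'', i.e.\ $R^p(x,y)$ (in the sense of the iterated relation used in the section), can be written by a formula of quantifier rank at most $r + p$ in the free variables $x,y$. More carefully, one shows by induction on $j$ that the relation ``$R^j(x,y)$'' — meaning there is an $R$-chain of length $j$ from $x$ to $y$ — is definable with quantifier rank $r + (j-1) \le r+p$; this is the point where the ``$+p$'' in the statement comes from. Next, I would use $R(x,y)$ itself: this is an $R$-chain of length $1$ from $x$ to $y$. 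The second hypothesis only becomes usable once we have an $R$-chain of length exactly $p$ between two type-equivalent elements, so I need to extend the length-$1$ chain to a length-$p$ chain. Here is where $x \equiv_{r+p} y$ enters: since $x$ and $y$ satisfy the same rank-$(r+p)$ formulas, and ``there is an $R$-chain of length $p$ from $u$ to $y$'' is such a formula (with $u$ free, $y$ fixed as a parameter — more precisely one argues with the $2$-variable formula of rank $r+p$ defining $R^p$), the fact that $R^p(y,y)$ is witnessed exactly when $R^p(x,y)$ is, etc. The cleanest route: build the chain $x = z_0, z_1 = y$ and then, repeatedly, insert copies. Concretely, since $R(x,y)$ holds and $x \equiv_{r+p} y$, and ``$\exists u\, R(u,y)$'' is rank $\le r+1 \le r+p$, we get $\exists u\, R(u,x)$ as well; iterating, we assemble elements $x = y_0$ and $y_{-1}, y_{-2}, \ldots$ with $R(y_{-(i+1)}, y_{-i})$; after $p-1$ backward steps we have an $R$-chain $y_{-(p-1)} \to \cdots \to y_{-1} \to x \to y$ of length $p+1 \ge p$, hence (truncating or using the monotone version) an $R^p$-chain from $y_{-(p-1)}$ to $y$. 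But I actually want the chain to start at $x$, not at $y_{-(p-1)}$.

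\textbf{The right argument and the main obstacle.} The correct and more robust way is: by a ping-pong / back-and-forth extraction using $x \equiv_{r+p} y$, one produces an infinite $R$-chain $\cdots R\, u_2\, R\, u_1\, R\, u_0 = x\, R\, y$, or equivalently shows that $x$ has an $R$-predecessor-chain of length $p-1$; call its start $v$, so $R^p(v, y)$ holds (the chain $v \to \cdots \to x \to y$ has length $\ge p$). By the second hypothesis, $v \prec y$. Separately, again using type-equivalence of $x$ and $y$ (and that ``$\exists$ chain of length $p-1$ ending at $\_$'' has rank $\le r+p$), the element $x$ also has such an incoming chain, and moreover one can choose $v$ so that additionally $R^p(v,x)$ holds — giving $v \prec x$ too. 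Now suppose for contradiction $y \prec x$. Combining with $v \prec y$ by transitivity gives $v \prec x$, which is consistent, so that alone is not yet a contradiction; the actual contradiction comes from iterating the whole construction to get an infinite strictly $\prec$-descending (or ascending) sequence, which is impossible in a finite structure — or, without finiteness, by producing $z$ with $z \prec x$ and $z \prec y$ and $R^p(z,x)$, $R^p(z,y)$, and also (swapping roles, using the chain from the $y$ side) $y \prec x \Rightarrow$ shifting the chain yields $x' \equiv_{r+p} x$ with $R(x,x')$ and $y \prec x \prec \cdots$, ultimately a $\prec$-cycle, contradicting antisymmetry+transitivity. I expect the bookkeeping of \emph{which} endpoint the rank-$(r+p)$ formula is quantified over — and packaging the chain-extension so that both $R^p(v,x)$ and $R^p(v,y)$ are available — to be the main obstacle; the quantifier-rank accounting (each added link in the chain costs one quantifier, $p$ links cost $p$, matching $\equiv_{r+p}$) is the other delicate point, but it is routine once set up.
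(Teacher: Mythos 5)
There is a genuine gap, and it is exactly at the point you flag as ``the main obstacle''. Your chains are only $R$-chains: hypothesis~1 tells you each link is $\prec$-comparable, but in an unknown direction that may vary from link to link, so nothing forces the chain to be monotone. Consequently the attempted contradictions do not materialize: from $y \prec x$, $R^p(v,y)$ and hypothesis~2 you only get $v \prec y \prec x$, which (as you note) is consistent; iterating does not yield a strictly $\prec$-descending sequence or a $\prec$-cycle, because you never control the orientation of the individual links. Moreover, the fallback ``impossible in a finite structure'' is not available: the lemma is stated for an arbitrary relational structure $\structa$, with no finiteness assumption, and $\prec$ is only assumed transitive and antisymmetric.

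The missing idea in the paper's proof is to replace $R$ by the \emph{binary rank-$r$ type} $t$ of the pair $(x,y)$ as the edge relation of the chain. Since $R$ and $\prec$ are both defined by formulas of quantifier rank at most $r$, every pair of type $t$ lies in $R$, and all pairs of type $t$ are oriented the same way by $\prec$ (they lie entirely in $\prec$ or entirely in $\succ$); hence any chain whose consecutive pairs have type $t$ is automatically monotone and is in particular an $R$-chain. The quantifier-rank bookkeeping you already have (one quantifier per added link, ``a chain of length $i$ begins at $x$'' has rank $r+i$) is then used, together with $x \equiv_{r+p} y$ and the fact that $(x,y)$ has type $t$ so a chain beginning at $y$ can be prepended with $x$, to produce a type-$t$ chain of length at least $p$ beginning at $x$ (built \emph{forward} from $x$, not backward into $x$ as in your sketch). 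Its endpoints satisfy $R^p$, so hypothesis~2 gives $x \prec$ endpoint, which by antisymmetry rules out the decreasing orientation of $t$; hence all type-$t$ pairs, in particular $(x,y)$, lie in $\prec$. Without the passage to the type $t$, the decisive monotonicity step is unavailable and your argument cannot be closed.
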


\begin{proof}
Let $R$ and $\prec$ be as in the assumptions and let $p \in \set{1,2,\ldots,}$. Suppose the two conditions in the stated implication $\Downarrow$ hold. Let $x,y \in \structa$ be such that $R(x,y)$ and $x \equiv_{r+p} y$ hold. We need to show $x \prec y$. 
Let $t$ be the binary rank $r$ type that describes the pair $(x,y)$. Since $R$ is defined using quantifier rank at most $r$ and contains $(x,y)$, it follows that all pairs of type $t$ are contained in $R$. Because $\prec$ is defined by a formula of quantifier rank $r$, our assumptions imply that the set of pairs of type $t$ is contained in either $\prec$ or $\succ$. To prove the lemma, we need to show that it is contained in $\prec$. Define a \emph{chain} to be a sequence of elements 
\begin{align*}
    x_1,\ldots,x_i \in \structa \qquad \text{where }(x_1,x_2),\ldots,(x_{i-1},x_i) \text{ have type $t$},
\end{align*}
i.e.~a walk in the directed graph on the universe of $\structa$ where $t$ is the edge relation. Note that every chain is either growing or decreasing with respect to $\prec$. We need to rule out the ``decreasing'' case. 
The property ``there is a chain of length $i$ that begins in $x$'' can be defined by a first-order formula of quantifier rank $r+i$. It follows inductively from $t(x,y)$ and $x \equiv_{r+p} y$ that there is a chain which begins in $x$ and has length at least $p$. Indeed, if the maximal length of a chain beginning in $y$ was some value $p' < p$, there would be a chain of length $p' + 1$ beginning in $x$ since $(x,y)$ has type $t$. This would violate that $x \equiv_{r+p} y$.
\end{proof}

As discussed at the beginning of this section, a corollary of the \hyperref[lem:polarity-reduction]{Polarity Reduction Lemma} is that it is enough to prove a weaker version of the \hyperref[lem:technical-domination]{Product Domination Lemma}, where the polarity $p$ from the conclusion is in $\set{-\omega_*,\omega_*}$ instead of $\set{-1,1}$. To see why, suppose that we have proved the version of the \hyperref[lem:technical-domination]{Product Domination Lemma} with polarity $p' \in \set{-\omega,\omega}$, and we want to prove the version with polarity $p \in \set{-1,1}$. 
Let $t$ be a unary rank $w_*$ type. By the weaker version of the \hyperref[lem:technical-domination]{Product Domination Lemma}, there is some $p \in \set{-1,1}$ such that 
\begin{align*}
     x[d] \sqsubset^{p \cdot \omega_*}  y[d] \quad \text{implies} \quad   x \prec  y \qquad \text{for all $ x, y$ in $\structa$ of type $t$}.
\end{align*}
Apply the \hyperref[lem:polarity-reduction]{Polarity Reduction Lemma} for $p \coloneqq \omega_*$, the structure being $\structa$, and the relation $R$ defined by 
\begin{align*}
    R( x,  y)  \quad \text{iff} \qquad  x[d] \sqsubset^p  y[d].
\end{align*}
We obtain the conclusion of the \hyperref[lem:technical-domination]{Product Domination Lemma} in its original form. 

Thanks to the above reasoning, in the remaining sections it suffices to show variants of the \hyperref[lem:technical-domination]{Product Domination Lemma} where the polarity $p$ in the conclusion is some nonzero number with a fixed upper bound, not necessarily 1, on its absolute value.

%!TEX root = main.tex

\subsection{Direct products of linear orders}
\label{sec:direct-linear}
In this section, we show the special case of the \hyperref[lem:technical-domination]{Product Domination Lemma} for direct products of linear orders.  A corollary is going to be that every first-order definable ordering $\prec$ in a product of linear orders coincides with a lexicographic product of the underlying orders, at least when restricted to elements of the direct product that have the same type. The corollary is stated later in this section, but we begin with the underlying result about dominating coordinates.

\begin{lemma}\label{lem:lex-product}
Let $r \in \set{1,2,\ldots}$ and let  $\prec$ be a linear ordering on a direct product
    \begin{align*}
        \structa = \prod_{i \in I} (\set{1,\ldots,n_i},<) \qquad \text{where $n_i > 6 \cdot 2^r$ for every $i \in I$}
    \end{align*}
    such that $\prec$ is defined by a first-order formula of quantifier rank $r$. Then for every unary rank $r$ type $t$ in $\structa$, there is a dominating coordinate $d \in I$ and $p \in \set{-2 \cdot 2^{r}, 2 \cdot 2^{r}}$ such that
    \begin{align*}
        x[d] <^p y[d]  \quad \text{implies} \quad x \prec y  \qquad \text{for all $x,y \in \structa$ of type $t$.}
    \end{align*} 
\end{lemma}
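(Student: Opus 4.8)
The plan is to reduce the case of a $k$-fold direct product of linear orders to the binary case (two factors) by the same ``transitive tournament on coordinates'' argument used in the Rational Domination Lemma, and then to establish the binary case by a case analysis on how two tuples of the same type can sit relative to each other. Throughout I fix a unary rank $r$ type $t$ in $\structa$ and restrict attention to elements of type $t$; by Theorem~\ref{thm:mostowski}(2) such an element $x$ has, in each factor $(\set{1,\ldots,n_i},<)$, a prescribed rank $r$ type $t[i]$, which (since a rank $r$ type of a single point in a finite linear order records only its distance $\le 2^r$ from each end, or the fact that it is ``far'' from both ends) confines $x[i]$ either to a fixed position near an endpoint or to the ``middle zone'' of the factor. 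The hypothesis $n_i > 6\cdot 2^r$ guarantees the middle zone is long enough to move a coordinate freely by up to $2\cdot 2^r$ steps without leaving it, which is exactly what the polarity bound in the statement reflects.

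\textbf{Step 1: the two-factor case.} Suppose $|I| = 2$, say $I = \set{1,2}$. Whether $x \prec y$ holds depends only on the rank $r$ type of the pair $(x,y)$, hence only on the relative order of $x[i], y[i]$ in each factor together with the (bounded) end-distance information fixed by $t$. As in the $k=2$ step of the Rational Domination Lemma, I enumerate the possible configurations of two type-$t$ pairs and argue that one coordinate, say $d$, must ``dominate'': given two type-$t$ tuples $x,y$ with $x[d] < y[d]$, I will produce an intermediate type-$t$ tuple $z$ (and possibly a second one) such that $x \prec z \prec y$ follows from already-known instances, with all auxiliary tuples obtained by nudging coordinates within their middle zones. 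The only subtlety versus the rational case is that here a coordinate may be \emph{pinned} near an endpoint by $t$: if $x[i]$ and $y[i]$ are forced by $t[i]$ to occupy the same fixed position, then coordinate $i$ is constant on type-$t$ tuples and cannot be the dominating one, so it may simply be ignored; if both coordinates are pinned the order $\prec$ is trivial on type-$t$ tuples and any $d$ works. Because moving a middle-zone coordinate past another middle-zone coordinate might require several unit steps, the witnessing argument only yields the implication ``$x[d] <^p y[d]$ implies $x\prec y$'' for some $|p|\le 2\cdot 2^r$, not for $p=\pm 1$ — which is why the statement allows $p\in\set{-2\cdot2^r,\,2\cdot2^r}$.

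\textbf{Step 2: from two factors to $k$ factors.} For general $I$, fix any type-$t$ ``base'' tuple $\bar z$. For each pair $i\ne j$ in $I$ with both coordinates non-pinned, restrict $\prec$ to type-$t$ tuples agreeing with $\bar z$ outside $\set{i,j}$; this is a first-order definable linear order of rank $r$ on (essentially) a direct product of two linear orders, so Step~1 gives a dominating coordinate $\delta(i,j)\in\set{i,j}$ and polarity, and one checks that $\delta(i,j)$ and the polarity do not depend on $\bar z$ (a pair of type-$t$ tuples differing in coordinates $i,j$ has a rank $r$ type independent of the frozen entries). Write $i\to j$ when $\delta(i,j)=j$. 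The transitivity of $\to$ is proved exactly as in the picture in the Rational Domination Lemma: from $i\to j$ and $j\to m$ one derives $i\to m$ by interpolating through tuples that move $i$, $j$, $m$ one at a time. Hence $\to$ is a tournament that is transitive, i.e.\ a linear order on the non-pinned coordinates; let $d$ be its maximum (if every coordinate is pinned, $\prec$ is trivial and any $d$ works). Then $d$ is the dominating coordinate for $\prec$ on all type-$t$ tuples: given $x,y$ of type $t$ with $x[d] <^p y[d]$ for the appropriate $p$, first move every coordinate $i\ne d$ (within its middle zone, possible since $n_i>6\cdot 2^r$) to an extreme value of the allowed zone to obtain type-$t$ tuples $x',y'$ with $x\prec x'$ and $y'\prec y$ (choosing which extreme according to the known local polarities), then use the relations $i\to d$ repeatedly to slide the non-$d$ coordinates back to their targets one at a time, each step increasing the $d$-coordinate slightly inside the interval strictly between $x[d]$ and $y[d]$ — this is possible because $|p|\le 2\cdot2^r$ leaves enough room. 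Chaining these $\prec$-comparisons through transitivity of $\prec$ yields $x\prec y$.

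\textbf{Main obstacle.} The genuinely new work compared to the rational case is bookkeeping the rank $r$ type constraints: I must verify that the end-distance data frozen by $t[i]$ is compatible with all the auxiliary ``nudged'' tuples I introduce (so that they still have type $t$, hence are comparable via already-established instances), and I must pin down the exact arithmetic — that $n_i > 6\cdot 2^r$ is the right bound and $2\cdot 2^r$ the right polarity — by tracking how far a middle-zone coordinate may have to travel in the interpolation arguments. This is where care is needed; the combinatorial skeleton (binary case plus transitive tournament) is otherwise identical to the Rational Domination Lemma.
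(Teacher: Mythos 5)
Your skeleton (binary case plus a transitive tournament on coordinates, then a one-at-a-time interpolation chain) is transplanted from the rational case, and the transplant breaks exactly at the point where the lemma's bound is produced. Over the rationals you may raise the dominating coordinate by an arbitrarily small amount at every interpolation step, because quantifier-free order formulas cannot measure distances. Over $\set{1,\ldots,n_i}$ a rank $r$ formula does measure distances up to $2^r$ (the \hyperref[lem:threshold]{Threshold Lemma}), so the pairwise domination you establish in Step 1 only licenses a comparison when the $d$-coordinate advances by at least (roughly) $2^r$; it says nothing about advances by one position, and for differences below the threshold the order may genuinely behave differently. Your final chain re-places each of the other coordinates one at a time and must pay this toll at every step, so it needs $y[d]-x[d]$ of order $|I|\cdot 2^r$; the assertion that ``$|p|\le 2\cdot 2^r$ leaves enough room'' is precisely what fails. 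At best your argument proves the lemma with a polarity growing linearly in $|I|$, which is weaker than the statement and, more importantly, insufficient downstream: Lemma~\ref{lem:lex-product} is applied inside Lemma~\ref{lem:products-of-powers-of-linear-orders} and the \hyperref[lem:technical-domination]{Product Domination Lemma}, whose thresholds and polarities must depend only on $k$ and $r$ while the index set $I$ is unbounded. (A smaller point: your Step 1 cannot literally follow the ``consecutive/nested'' analysis of the rational lemma, since in a direct product coordinates from different factors are incomparable; the correct, simpler binary analysis is by the sign pattern of small simultaneous shifts, and your explanation of the constant $2\cdot 2^r$ as coming from coordinates having to pass one another does not make sense here.)

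The paper avoids the tournament altogether. After eliminating coordinates pinned near an endpoint by the type (an induction on $|I|$ that matches your treatment of pinned coordinates), it fixes one tuple $x$ at distance at least $3\cdot 2^r$ from all endpoints and considers simultaneous shifts $\delta$ in which \emph{every} coordinate moves by an amount of absolute value between $2^r$ and $2\cdot 2^r$; by the Threshold Lemma, whether $x \prec x+\delta$ depends only on the sign vector of $\delta$. The family of sign vectors that increase $x$, and its complement, are both closed under union, and a short combinatorial claim shows that any partition of $2^I$ into two union-closed families is $\set{J \subseteq I : d \in J}$ versus its complement for a single $d$. Performing two such simultaneous shifts raises coordinate $d$ by at least $2\cdot 2^r$ while giving every other coordinate an arbitrary net displacement in $\set{-2^r,\ldots,2^r}$, and the Threshold Lemma then transfers the comparison to arbitrary pairs of type $t$ --- two steps in total, hence a polarity of $2\cdot 2^r$ independent of $|I|$. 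If you wish to keep the pairwise/tournament framework, you still need an argument of this simultaneous kind to obtain an $|I|$-free bound; the one-coordinate-at-a-time chain cannot provide it.
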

Note that in the above lemma, the polarity $p$ can have a value not contained in $\set{-1,1}$. As explained in Section~\ref{sec:gap-reduction}, at the cost of increasing the quantifier rank of the type $t$, the polarity can be reduced to values in $\set{-1,1}$.

To prove Lemma \ref{lem:lex-product}, we use the following observation, which expresses that first-order logic formulas with quantifier rank $r$ can only measure distances up to $2^r$ in a linear order. Its proof is the same as for~\cite[Theorem 3.6]{libkin2013elements}. 
\begin{lemma}[Threshold Lemma]\label{lem:threshold}
Let $r \in \set{1,2,\ldots}$ and consider two $k$-tuples $ x$ and $ y$ in a linearly ordered set
\begin{align*}
    (\set{1,\ldots,n},<)
\end{align*}
     Then $ x$ and $ y$ have the same rank $r$ type if and only if they have the same quantifier-free type in the structure extended with relations $<^i$ for $i \in \set{1,\ldots,2^r}$and unary relations $min$ and $max$.
\end{lemma}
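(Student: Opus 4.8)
The plan is to reduce the statement to a standard Ehrenfeucht–Fra\"iss\'e (EF) argument, exactly along the lines of the proof of~\cite[Theorem 3.6]{libkin2013elements}, but carrying along the $k$ free variables explicitly as constants. First I would observe that the ``only if'' direction is the easy one: if $\bar x$ and $\bar y$ have the same rank $r$ type in the plain linear order, then in particular every quantifier-free formula over the enriched vocabulary $\{<\} \cup \{<^i : 1 \le i \le 2^r\} \cup \{\mathit{min}, \mathit{max}\}$ that holds of $\bar x$ also holds of $\bar y$, because each such atomic formula $x_a <^i x_b$ with $i \le 2^r$, and each $\mathit{min}(x_a)$, $\mathit{max}(x_a)$, is expressible by a first-order formula of quantifier rank at most $r$ over $\{<\}$ (e.g.\ ``there exist exactly $i-1$ elements strictly between $x_a$ and $x_b$'' needs $i-1 \le 2^r - 1$ quantifiers, and $\mathit{min}, \mathit{max}$ need one quantifier each). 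Hence equal rank $r$ type implies equal quantifier-free type over the enriched structure.

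For the substantive ``if'' direction, I would argue contrapositively via the EF game: suppose $\bar x$ and $\bar y$ have the same quantifier-free type in the enriched structure; I claim the Duplicator wins the $r$-round EF game on $(\set{1,\ldots,n},<,\bar x)$ versus $(\set{1,\ldots,n},<,\bar y)$, which gives them the same rank $r$ type. The key bookkeeping notion is the standard one: after $j$ rounds, with $j' = r - j$ rounds remaining, the partial play (consisting of the $k$ constants together with the $j$ pebbled pairs) should satisfy the invariant that for any two consecutive chosen positions $u < u'$ on the left, the distance $u' - u$ is ``$(2^{j'})$-equivalent'' to the corresponding distance on the right — meaning both distances are equal, or both are $> 2^{j'}$; and the same at the two ends (distance to $\mathit{min}$, distance to $\mathit{max}$). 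The hypothesis that $\bar x, \bar y$ agree on all $<^i$ for $i \le 2^r$ and on $\mathit{min}, \mathit{max}$ is exactly the statement that this invariant holds at round $0$ with $j' = r$. The Duplicator-strategy step is the routine ``halving'' argument: if Spoiler plays a point $v$ inside a gap of left-distances $(v - u, u' - v)$, then since that gap had total $(2^{j'})$-equivalent size, at least one of the two subgaps is ``small'' (exactly determined) and the other is ``large enough'' ($> 2^{j'-1}$ on the side where the original was $> 2^{j'}$), so Duplicator can mirror $v$ preserving the invariant for $j' - 1$; the end-gaps are handled identically using $\mathit{min}, \mathit{max}$.

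The main obstacle — though it is really just careful bookkeeping rather than a genuine difficulty — is getting the role of the $k$ free variables right: unlike in the unadorned version of~\cite[Theorem 3.6]{libkin2013elements}, the relevant ``gaps'' are not just the gaps cut out by the pebbles played during the game, but the gaps cut out by the $x_a$'s (respectively $y_a$'s) *together with* those pebbles, and the invariant must hold for all of these from the start. This is precisely why the enriched vocabulary must include all the $<^i$ with $i$ up to $2^r$ (not $2^{r-k}$): the constants consume no rounds but do fragment the line, and the distances between them and between them and the endpoints must be controlled at the full threshold $2^r$. Once the invariant is set up to range over the gaps determined by constants-plus-pebbles, the induction on remaining rounds goes through verbatim as in the cited proof, and at round $0$ the surviving partial isomorphism on $\bar x \mapsto \bar y$ witnesses equality of rank $r$ types. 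I would therefore present the proof as: (i) the enriched quantifier-free type is first-order definable at rank $\le r$ (easy direction); (ii) state the gap-invariant; (iii) observe the hypothesis is the invariant at round $0$; (iv) give the one-paragraph halving argument for the Duplicator's move; (v) conclude. Given the explicit reference to~\cite[Theorem 3.6]{libkin2013elements}, it is entirely legitimate to compress (iv)–(v) to ``the argument is identical to that of~\cite[Theorem 3.6]{libkin2013elements}, with the gaps taken relative to the constants $\bar x$, $\bar y$ as well as the pebbled positions.''
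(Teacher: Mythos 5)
Your proposal follows the same route the paper intends: the paper gives no proof of its own, saying only that the argument is that of~\cite[Theorem 3.6]{libkin2013elements}, and your plan is exactly that EF-game argument with the gap invariant taken over the gaps cut out by the constants $\bar x$ (resp.\ $\bar y$) together with the pebbled positions. Two details need repair, though. First, your justification of the easy direction is off: expressing $x_a <^i x_b$ as ``there are exactly $i-1$ elements strictly between'' yields a formula of quantifier rank about $i-1$, i.e.\ up to $2^r-1$, which does \emph{not} show that these atoms are definable at rank $\le r$, as the direction requires. The standard fix is the doubling trick: ``distance at least $2^j$'' is definable with rank $j$ (existentially split the gap in half), and hence ``distance at least $i$'' for $i \le 2^r$ has rank $\lceil \log_2 i \rceil \le r$. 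Second, your step (iii) --- that agreement on the enriched quantifier-free type is exactly the round-$0$ invariant, including the end gaps --- only holds if $min$ and $max$ are read as two extra virtual coordinates that participate in the $<^i$ comparisons (as the paper does later via the convention $x[0]=min$, $x[k+1]=max$). If they are taken literally as unary predicates, the quantifier-free type does not control the distances of the $x_a$ to the endpoints, and the statement itself fails already for $k=1$ (e.g.\ positions $3$ and $5$ in $\{1,\dots,10\}$ with $r=2$ have the same quantifier-free type in that reading but different rank-$2$ types). You silently adopt the correct reading, which is the right call, but the ``hypothesis $=$ invariant at round $0$'' claim should be stated under it explicitly.
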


\begin{proof}[Proof of Lemma~\ref{lem:lex-product}] The proof proceeds by induction on the size of the set $I$, i.e.~on the dimension of the product.  Let $t$ be a unary rank $r$ type over the vocabulary of $\structa$. Consider its projections $t[i]$ for $i \in I$ as in Theorem \ref{thm:mostowski}. 
    By the \hyperref[lem:threshold]{Threshold Lemma}, if an $|I|$-tuple $x \in \structa$ has type $t$, then $t[i]$ determines the distance of $x[i]$ from the first and last positions in $\set{1,\ldots,n_i}$, measured up to threshold $2^r$. If the distance from either the first or last position is $<2^r$, then the value of $x[i]$ is fixed by the type $t$.
    For such types, we can eliminate one coordinate, and obtain the result by using the induction assumption.
    We are left with the case when  $t$ expresses that all coordinates are at least $2^r$ positions away from both the first and last positions. 

    Choose a tuple $x \in \structa$ such that for every $i \in I$, coordinate $x[i]$ is at least $3 \cdot 2^r$ positions away from the first and last positions in $\set{1,\ldots,n_i}$, which can be achieved by the assumption that  $n_i \ge 6 \cdot 2^r$ for all $i \in I$.
     We say that $\delta \in  \mathbb Z^I$ is  \emph{small} if for every $i \in I$, the absolute value of $\delta[i]$ is between $2^r$ and $2 \cdot 2^{r}$. 
    By the choice of $x$, we know that if $\delta$ is small, then $x + \delta$ has the same type as $x$.  Define the \emph{sign vector} of $\delta$ to be the subset of $I$ which contains the coordinates on which $\delta$ is positive, and define 
    \begin{align*}
        \mathcal I \coloneqq \set{\text{sign vector of $\delta$} : \text{$\delta$ is small and $x \prec x  +\delta$}} \subseteq 2^I.
    \end{align*}
   It is not hard to see that the family $\Ii$ is closed under set union, and the same is true for its complement. 

    \begin{claim}
         Consider a partition of the powerset $2^I$ into two families of sets, both of which are closed under union. Then there is a $d \in I$ such that one of the families is $\set{J \subseteq I : d \in J}$.
    \end{claim}
    \begin{proof}
        Since both families are closed under union, it follows from De Morgan's law that both families are closed under intersection. One of the families does not contain the empty set, call this family $\Ii$. Since $\Ii$ is closed under intersection, it follows that the intersection $\cap \Ii$ of all sets in $\Ii$ is nonempty. The interesction $\cap \Ii$ cannot have more than one element, because otherwise it could be decomposed as a union of two sets outside $\Ii$, and therefore it would be outside $\Ii$. Hence, the intersection of all sets in $\Ii$ is a singleton $\set d$ for some $d \in I$. This means that all sets in $\Ii$ contain $d$. It follows that for every $i \neq d$, the singleton $\set{i}$ must belong to the complement of $\Ii$. Since this complement is closed under taking unions, every set that does not contain $d$ belongs to the complement of $\Ii$. 
    \end{proof}

    An application of the claim yields a coordinate $d \in I$ such that either $\Ii$ or its complement consists in exactly the sets that contain $d$. By symmetry, we may assume the first case. By unfolding the definition of $\Ii$, it follows that incrementing $x[d]$ by at least $2^r$ and at most $2 \cdot 2^r$ and modifying all other coordinates by any number with absolute value between $2^r$ and $2 \cdot 2^{r}$ yields a bigger tuple. Performing this procedure twice allows us to modify the coordinates other than $d$ by any value in $\set{-2^r,\ldots,2^r}$, and therefore the result follows using the \hyperref[lem:threshold]{Threshold Lemma}.
\end{proof}

We end this section with an interesting consequence of Lemma~\ref{lem:lex-product}. Define \emph{a lexicographic ordering} on 
\begin{align*}
    \structa = \prod_{i \in I} (\set{1,\ldots,n_i},<) 
\end{align*}
to be an ordering that is the lexicographic product, under some ordering of $I$, of orderings in the coordinates that are either $<$ or $>$. If $I$ has $n$ elements, then there are $n! \cdot 2^n$ lexicographic orderings, since the coordinates can be ordered in $n!$ ways and for each ordering one can use $<$ or $>$ for each of the $n$ coordinates. 
By iteratively applying Lemma \ref{lem:lex-product} and then using the \hyperref[lem:polarity-reduction]{Polarity Reduction Lemma}, we can infer the following result. 

\begin{corollary}\label{lem:lex-product-general}
    For every $r \in \set{1,2,\ldots}$, there is a threshold $\omega \in \set{1,2,\ldots}$ with the following property. 
Let  $\prec$ be a linear ordering on the product
        \begin{align*}
            \structa = \prod_{i \in I} (\set{1,\ldots,n_i},<) 
        \end{align*}
        such that $\prec$ is defined by a first-order formula of quantifier rank $r$. For every unary rank $r$ type $t$ in $\structa$, the order $\prec$ restricted to tuples of type $t$ coincides with one of the lexicographic orderings on $\structa$.
    \end{corollary}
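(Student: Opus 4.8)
The plan is to prove Corollary~\ref{lem:lex-product-general} by a careful iteration of Lemma~\ref{lem:lex-product}, peeling off one dominating coordinate at a time, while controlling the growth of quantifier ranks via the \hyperref[lem:polarity-reduction]{Polarity Reduction Lemma}. First I would fix $r$ and set up a descending recursion on $|I|$. Given the order $\prec$ of quantifier rank $r$ on $\structa = \prod_{i\in I}(\set{1,\ldots,n_i},<)$, and a unary rank $r$ type $t$, I apply Lemma~\ref{lem:lex-product} to obtain a dominating coordinate $d\in I$ and a polarity $p\in\set{-2\cdot2^r,2\cdot2^r}$; then the \hyperref[lem:polarity-reduction]{Polarity Reduction Lemma} (applied to the relation $R(x,y)\iff x[d]<^{p}y[d]$, with parameter $|p|=2\cdot2^r$) upgrades this to: $x[d]<^{\mathrm{sgn}(p)}y[d]$ implies $x\prec y$, for all $x,y$ of the same rank $r+2\cdot2^r$ type. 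Note a subtlety here that I must address: after applying polarity reduction the conclusion only holds within a finer type than $t$, so the statement of the corollary should be read with a threshold $\omega$ that is large enough to absorb all the type-refinements accumulated along the recursion — this is exactly why the corollary only asserts agreement with a lexicographic order on tuples of type $t$ for a suitable rank $\omega$, not rank $r$. Careful bookkeeping of which rank is "current" at each level of the recursion is the main thing to get right.

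The recursive step then is: having identified the first (most significant, say) coordinate $d$ and the direction $s_d:=\mathrm{sgn}(p)\in\{<,>\}$ that the order uses on coordinate $d$, I restrict attention to the "slices" where coordinate $d$ is held fixed to some value $c$ compatible with $t$. On each such slice, $\prec$ induces a linear order on the smaller product $\prod_{i\in I\setminus\{d\}}(\set{1,\ldots,n_i},<)$, still defined by a formula of bounded quantifier rank (substituting the constant $c$, which a rank $r+2\cdot2^r$-type can pin down only if $c$ is within threshold of an endpoint — otherwise I argue the induced order on the slice does not depend on the choice of $c$ among values of the same $t[d]$, using the \hyperref[lem:threshold]{Threshold Lemma}, exactly as in the proof of Lemma~\ref{lem:lex-product}). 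Apply the induction hypothesis to this $(|I|-1)$-dimensional product to get that the induced order on each slice, restricted to the relevant type, is one of the lexicographic orders on the remaining coordinates. Because $d$ dominates globally with direction $s_d$, comparing two tuples with different $d$-coordinates is decided by $d$ alone, and comparing two tuples with equal $d$-coordinate is decided by the slice order; prepending $(d,s_d)$ to the lexicographic order obtained on the slice yields a lexicographic order on all of $\structa$ that agrees with $\prec$ on tuples of type $t$. The base case $|I|=1$ is immediate since $\prec$ is then $<$ or $>$.

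The quantitative heart of the argument is the choice of $\omega$. At recursion depth $j$ (after peeling $j$ coordinates) the "current" quantifier rank has grown by an additive amount at each level — each level adds at most $2\cdot2^{(\text{current rank})}$ from polarity reduction, plus a constant from the slice-substitution. Since $|I|$ is not bounded a priori, I would phrase the corollary, as it is stated, with an $\omega$ depending only on $r$ — which forces me to be more clever: rather than literally iterating and watching the rank blow up as a tower of height $|I|$, I should first fix $\omega$ large relative to $r$, then observe that within the proof all the relevant types are rank-$r$ types and all the applications of Lemma~\ref{lem:lex-product} are at rank $r$ (the order $\prec$ never changes and stays rank $r$); the polarity-reduction step is only invoked once, at the very end, to convert a single bounded polarity into polarity $\pm1$, which is a rank increase of $2\cdot2^r$ applied once, independent of $|I|$. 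Concretely: iterate Lemma~\ref{lem:lex-product} purely combinatorially to extract, for the fixed rank-$r$ type $t$, an ordering of $I$ and directions $p_i\in\set{-2\cdot2^r,2\cdot2^r}$, giving that $\prec$ restricted to $t$ equals the $(p_i)$-lexicographic order with bounded polarities; then a single application of the \hyperref[lem:polarity-reduction]{Polarity Reduction Lemma} (to the lexicographic comparison relation, which is still rank-$r$ definable) reduces each $p_i$ to $\pm1$ at the uniform cost of passing to rank $\omega:=r+2\cdot2^r$ types. This is the step I expect to be the main obstacle: justifying that the lexicographic order with the reduced $\pm1$ polarities actually coincides with $\prec$ on all tuples of rank $\omega$ type $t$, not merely on some sub-type, which requires checking that the polarity-reduction hypothesis — every $R$-comparable pair is $\prec$-comparable, and $R^{p}$ forces $\prec$ — holds for $R$ taken to be the full lexicographic-with-bounded-polarity relation, using transitivity of $\prec$ to chain the per-coordinate dominations together.
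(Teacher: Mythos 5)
Your two ingredients --- iterating Lemma~\ref{lem:lex-product} to extract a significance order on the coordinates together with directions and bounded polarities, and then invoking the \hyperref[lem:polarity-reduction]{Polarity Reduction Lemma} to bring the polarities down to $\pm1$ at the cost of refining the type --- are exactly the route the paper indicates, and your resolution of the rank-bookkeeping worry is sound: the slice orders obtained by freezing already-extracted coordinates are again definable with quantifier rank $r$ by compositionality (Theorem~\ref{thm:mostowski} together with the \hyperref[lem:threshold]{Threshold Lemma}), so every application of Lemma~\ref{lem:lex-product} happens at rank $r$ and no tower of ranks in $|I|$ arises.

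The gap is in the final step, the single global application of polarity reduction. If you take $R$ to be the ``lexicographic-with-bounded-polarity'' relation, then $R$ is already contained in $\prec$ on type $t$ by your first stage, so the lemma's conclusion only reproduces what you have and reduces no polarity: the \hyperref[lem:polarity-reduction]{Polarity Reduction Lemma} weakens a hypothesis about $R^p$ into a conclusion about $R$, so to obtain polarity-$1$ comparisons you must take $R$ itself to be a polarity-$1$ relation. But if you take $R$ to be the genuine lexicographic order (restricted to type $t$), its hypothesis that every $R$-chain of length $p$ forces $\prec$ is not verified by the bounded-polarity dominations, and cannot be verified from them: a lexicographic chain of length $p$ may advance the most significant differing coordinate by only $1$ and spend all remaining steps on less significant coordinates, so no gap of size $2\cdot 2^r$ is created anywhere, and the endpoints of such a chain form exactly a pair you do not yet know how to compare; transitivity of $\prec$ does not help, since the consecutive pairs of the chain are themselves polarity-$1$ comparisons. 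The repair is to apply the Polarity Reduction Lemma once per level $j$ rather than once globally: take $R_j(x,y)$ to mean ``$x$ and $y$ have type $t$, agree on the more significant coordinates $d_1,\ldots,d_{j-1}$, and $x[d_j]<^{s_j}y[d_j]$''. This $R_j$ has quantifier rank $r$, and an $R_j$-chain of length $p=2\cdot 2^r$ keeps $d_1,\ldots,d_{j-1}$ constant while accumulating a gap of at least $p$ at coordinate $d_j$, so the hypothesis on $R_j^{\,p}$ follows from the level-$j$ bounded-polarity domination. The conclusions for all $j$ then assemble into coincidence of $\prec$ with the lexicographic order (significance $d_1,\ldots$, signs $s_j$) on tuples of a common rank-$(r+2\cdot 2^r)$ type refining $t$; coordinates whose value is pinned by the type never differ and may be inserted anywhere in the significance order, which also disposes of the coordinates with $n_i$ below the threshold of Lemma~\ref{lem:lex-product}.
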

%!TEX root = main.tex
\subsection{Powers of linear orders}
\label{sec:linear-domination}

In this section, we prove a  version of the \hyperref[lem:domination]{Domination Lemma} that considers powers of finite linear orders. The difference to the scenario treated in Section \ref{sec:direct-linear} is that here we can compare different coordinates.
\begin{lemma}[Linear Domination Lemma]
\label{lem:lin-dom}
    For all $k, r \in \set{1,2,\ldots}$, there exists a threshold $\omega \in \set{1,2,\ldots}$ such that the following holds. If $\prec$ is a linear ordering on $\structa^k$ with
    \begin{align*}
        \structa=(\set{1,\ldots,n}, <) \quad \text{and} \quad n>\omega 
    \end{align*}
    such that $\prec$ is defined by a first-order formula of rank $r$, then for every $k$-ary rank $\omega$ type $t$, there are $d \in \set{1,\ldots,k}$ and $p \in \set{-\omega,\omega}$ such that  
    \begin{align*}
         x[d] <^p y[d] \quad \text{implies} \quad   x \prec y \qquad \text{for all }  x,  y \in \structa^k \text{ of type }t.
    \end{align*}
    \end{lemma}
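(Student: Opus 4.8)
The plan is to reduce the Linear Domination Lemma to the already-established results about direct products of linear orders (Lemma~\ref{lem:lex-product} / Corollary~\ref{lem:lex-product-general}), by exploiting the fact that within a single rank-$\omega$ type $t$ the relative order of the $k$ coordinates is fixed. More precisely, fix $k,r$ and let $t$ be a $k$-ary rank $\omega$ type in $\structa^k$, where $\omega$ will be chosen large compared to $r$ and $k$ at the end. By the Threshold Lemma (Lemma~\ref{lem:threshold}), the type $t$ determines, among other things, a linear preorder on the coordinate indices $\set{1,\ldots,k}$ describing which coordinate sits to the left of which; after permuting indices (a harmless bijection that preserves definability and rank) we may assume $x[1] \le x[2] \le \cdots \le x[k]$ for every $x$ of type $t$, and moreover $t$ fixes which consecutive coordinates are actually equal and which are separated by a gap exceeding $2^r$ (coordinates separated by less than the threshold are at a fixed distance, so such blocks collapse to a single free coordinate). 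Thus, up to renaming, every $x$ of type $t$ is obtained from a genuinely ``spread out'' tuple of $k' \le k$ coordinates, each far from the endpoints of $\set{1,\ldots,n}$ and far from each other, and the map $y \mapsto x$ from such spread-out $k'$-tuples is a type-respecting bijection onto the tuples of type $t$.

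The key point is then the following: on the set of spread-out, strictly increasing $k'$-tuples, the power structure $\structa^{k'}$ and the direct product $\prod_{i=1}^{k'}(\set{1,\ldots,n_i},<)$ are indistinguishable by first-order formulas of bounded rank — comparisons $x[i] \lessgtr y[j]$ across coordinates are already determined by the fact that the tuples are increasing and well-separated, so the extra swapping functions of the power add nothing. Formally, I would argue that for tuples restricted to this ``staircase region'' there is a continuous (type-preserving) correspondence with tuples in a direct product of linear orders of suitable sizes $n_i$, so that the linear order $\prec$ restricted to tuples of type $t$ transports to a first-order definable (rank $r$, up to a bounded additive constant) linear order on the corresponding direct product. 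Applying Lemma~\ref{lem:lex-product} to that direct product yields a dominating coordinate $d \in \set{1,\ldots,k'}$ and a polarity $p$ with $|p| \le 2\cdot 2^{r}$ such that $x[d] <^p y[d]$ implies $x \prec y$ for all spread-out tuples; pulling this back along the correspondence and undoing the coordinate renaming gives the same statement for tuples of type $t$ in $\structa^k$, after possibly enlarging $|p|$ to absorb the identification of collapsed coordinate-blocks. This is exactly the conclusion of the Linear Domination Lemma, with $\omega$ chosen large enough (say $\omega \ge 6\cdot 2^{r} + r + k$) both to make the Threshold-Lemma classification of $t$ meaningful and to guarantee that $n > \omega$ forces each $n_i$ in the direct product to exceed $6\cdot 2^{r}$ as Lemma~\ref{lem:lex-product} requires; the final polarity bound $p \in \set{-\omega,\omega}$ is comfortably met.

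The main obstacle I anticipate is making precise and correct the claim that ``the power structure looks like a direct product on the staircase region.'' One has to be careful that the correspondence with a direct product is genuinely first-order definable with only a bounded loss in quantifier rank, and that it is type-preserving in the right direction — this is where the notion of continuous function from the preliminaries is meant to be used, and where the bookkeeping (which coordinates are pinned down by $t$, which are free, how gaps between consecutive coordinates interact with the threshold $2^r$) must be handled cleanly. A secondary subtlety is that, strictly speaking, $\prec$ need only be linear and first-order definable on all of $\structa^k$, so when we restrict to the staircase region we should check that the restricted relation is still total on tuples of type $t$ (it is, since $\prec$ was total) and remains definable — which it does, by relativising to the definable staircase region at the cost of a bounded rank increase. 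Once these points are nailed down, the rest is a direct invocation of Lemma~\ref{lem:lex-product}, so I do not expect to need any fundamentally new combinatorial idea beyond what Section~\ref{sec:direct-linear} already provides.
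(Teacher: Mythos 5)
There is a genuine gap, and it sits exactly at the point you flag as ``the main obstacle'': the claim that on the staircase region the power $\structa^{k}$ is indistinguishable from a direct product of linear orders is false, and with it the whole reduction to Lemma~\ref{lem:lex-product} collapses. The issue is not about a single tuple (where, indeed, the internal comparisons $x[i]<x[j]$ are fixed by the type $t$), but about \emph{pairs} of tuples: the formula defining $\prec$ has rank $r$ and may compare $x[i]$ with $y[j]$ for $i\neq j$, and these cross-comparisons are not determined by the unary type $t$ together with the coordinatewise comparisons $x[i]$ vs.\ $y[i]$. Already for $k=2$, take $x=(10,20)$ and compare it with $y=(15,25)$ versus $y'=(100,200)$: in both cases $x[1]<y[1]$ and $x[2]<y[2]$ and all three tuples can have the same spread-out unary type, yet in the first case the pairs are ``nested'' ($x[1]<y[1]<x[2]<y[2]$) and in the second ``consecutive'' ($x[1]<x[2]<y'[1]<y'[2]$), which a rank-$r$ formula on the power distinguishes. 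Consequently $\prec$ restricted to type $t$ need not factor through any correspondence with a direct product: the binary type of a pair in $\structa^k$ is strictly finer than in $\prod_i(\set{1,\ldots,n_i},<)$, so the ``transported'' relation on the product side is not even well defined, and there is no continuity argument that fixes this. (The paper itself stresses this as the difference between Section~\ref{sec:direct-linear} and Section~\ref{sec:linear-domination}: ``here we can compare different coordinates.'') Assuming the cross-comparisons are irrelevant is essentially assuming the conclusion of the lemma.

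What your proposal is missing is precisely the technical core of the paper's argument for this lemma: one must show that transitivity of the globally defined linear order $\prec$ forces the interleaving-dependent behaviour to be consistent. The paper does this by (i) reducing to separated types (your collapsing of pinned-down coordinates matches this part and is fine), (ii) a hand-made case analysis at arity two, where the ``consecutive'' and ``nested'' configurations are compared by exhibiting intermediate witness tuples and invoking transitivity, (iii) deriving from the arity-two case a domination relation $i\to j$ on coordinates, proving it transitive, and (iv) proving domination of the maximal coordinate $d$ by an explicit $\prec$-ascending chain in which at most two coordinates change per step (parking the other coordinates near the ends of the order while nudging coordinate $d$). Some argument of this kind—using transitivity to bridge the distinct binary types created by cross-coordinate comparisons—cannot be avoided, so the step ``the rest is a direct invocation of Lemma~\ref{lem:lex-product}'' does not go through as written. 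Lemma~\ref{lem:lex-product} is legitimately useful here, but only as an ingredient for comparing tuples that differ in few, non-adjacent coordinates, not as the whole proof.
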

In the proof, we do a detailed case analysis of the expressive power of first-order logic on 
    linear orderings, which relies on the \hyperref[lem:threshold]{Threshold Lemma}.
    
For $m \in \set{1,2, \ldots}$, a tuple $x \in \structa^k$ is called \emph{$m$-separated} if for all $i\neq j$ in $\set{0,\ldots,k+1}$ it holds that $x[i]<^{m}x[j]$ or $x[i]<^{-m}x[j]$, with the convention that $x[0]= min$ and $x[k+1]= max$.
 We can extend the notion of being separated to types. A rank $r$ type $t$ is called $m$-\emph{separated} if every tuple of type $t$ is $m$-separated, and \emph{separated} if every tuple of type $t$ is $2^r$-separated.

\begin{claim}[Linear Domination Lemma - Separated Case]
\label{lem:sep-dom}
    For all $k, r \in \set{1,2,\ldots}$, there exists a threshold $\omega \in \set{1,2,\ldots}$ such that the following holds. If $\prec$ is a linear order on $\structa^k$, where
    \begin{align*}
        \structa=(\set{1,\ldots,n}, <) \qquad \text{and} \qquad n> \omega
    \end{align*}
    such that $\prec$ is defined by a first-order formula of rank $r$, then for every separated rank $\omega$ type $t$, there are $d \in \set{1,\ldots,k}$ and $p \in \set{-\omega,\omega}$ such that  
    \begin{align*}
     x[d] <^p y[d] \quad \text{implies} \quad   x \prec y \qquad \text{for all }  x,  y \in \structa^k \text{ of type }t.
    \end{align*}
    \end{claim}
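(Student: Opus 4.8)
The plan is to reduce this claim to Lemma~\ref{lem:lex-product} (domination for direct products of linear orders) by observing that, once we fix a separated type $t$, the structure $\structa^k$ restricted to tuples of type $t$ behaves essentially like a \emph{direct} product of linear orders rather than a power. The point of separation is that for a tuple $x$ of separated type $t$, the relative order of the coordinates $x[1],\dots,x[k]$ (together with their distances up to the threshold $2^r$ from $min$, $max$, and from each other, capped at the threshold) is completely determined by $t$ via the \hyperref[lem:threshold]{Threshold Lemma}. Since $t$ is $2^r$-separated, any two coordinates are more than $2^r$ apart, so no quantifier-free formula of rank $r$ with the relations $<^i$ ($i \le 2^r$) can ``see'' across two different coordinates once each coordinate is allowed to vary only within the region dictated by $t$. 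In other words, the comparisons between different coordinates that a power provides are, for type-$t$ tuples, frozen by $t$ and carry no further information.

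Concretely, I would argue as follows. Fix the permutation $\pi$ of $\set{1,\dots,k}$ such that a tuple of type $t$ has $x[\pi(1)] < x[\pi(2)] < \cdots < x[\pi(k)]$; this permutation is forced by $t$ since $t$ is separated. Also let $t$ fix which coordinates are within distance $2^r$ of $min$ or $max$ (those have fixed values, as in the proof of Lemma~\ref{lem:lex-product}, and can be eliminated), so assume all coordinates range freely in large ``gap'' intervals. Then the set of type-$t$ tuples is in type-preserving bijection (via the identity, just reindexing) with a set of the form $\prod_{i=1}^{k'} (\set{1,\dots,m_i},<)$ for suitable $m_i$ that are still large (bigger than $6 \cdot 2^r$ if $\omega$ is chosen large enough), where each factor is the interval in which one free coordinate lives; one shows, again via the \hyperref[lem:threshold]{Threshold Lemma}, that this reindexing is continuous, i.e.\ preserves rank-$r$ types in both directions, because the cross-coordinate comparisons are constants of $t$. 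Pulling $\prec$ back along this bijection gives a linear order on a direct product of linear orders, definable in rank $r$ (up to a bounded additive constant in the rank from the reindexing, which we absorb into the choice of $\omega$). Apply Lemma~\ref{lem:lex-product} to get a dominating coordinate $d'$ in the product and polarity $p' \in \set{-2 \cdot 2^r, 2\cdot 2^r}$; translate $d'$ back to the original coordinate $d = \pi(d')$, and after possibly enlarging $\omega$ we obtain $p \in \set{-\omega,\omega}$ with $x[d] <^p y[d] \Rightarrow x \prec y$ for all $x,y$ of type $t$.

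The main obstacle I anticipate is the bookkeeping around continuity of the reindexing map and the precise claim that cross-coordinate comparisons are ``frozen.'' One has to be careful that when we vary one free coordinate $x[i]$ within its large gap interval while keeping the others fixed, it never comes within threshold distance of another coordinate — this is exactly guaranteed by $2^r$-separation, but it must be checked that the intervals can be taken large enough (this is where the hypothesis $n > \omega$ with $\omega$ large is used) and that moving several coordinates simultaneously does not create spurious short distances; separation of the type $t$ handles the case analysis, but it is the delicate point. A secondary, routine obstacle is tracking the additive constants in quantifier rank (from eliminating fixed coordinates and from the reindexing) so that the final polarity bound $\set{-\omega,\omega}$ and the threshold $\omega$ in the statement are chosen consistently; this is exactly the kind of ``increase the threshold'' manoeuvre already used in Section~\ref{sec:gap-reduction}, so it should go through smoothly.
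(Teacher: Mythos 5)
Your reduction to Lemma~\ref{lem:lex-product} rests on the claim that, for tuples of a fixed separated type $t$, the cross-coordinate comparisons are ``frozen'' by $t$, so that the type-$t$ tuples can be reindexed, type-preservingly, as a direct product $\prod_i (\set{1,\ldots,m_i},<)$. That claim is false, and it is exactly where the difficulty of the separated case lives. Separation constrains the coordinates of a \emph{single} tuple: they are pairwise more than $2^r$ apart and their relative order is fixed by $t$. It does not confine each coordinate to its own fixed interval: already for $k=2$ the set of type-$t$ pairs is (roughly) $\set{(a,b) : b-a>2^r,\ \text{both far from the endpoints}}$, which is not a product of two intervals -- the whole tuple can slide. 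More importantly, the formula defining $\prec$ has $2k$ free variables and may compare $x[i]$ with $y[j]$ for $i \neq j$; whether the two pairs are consecutive ($x[1]<x[2]<y[1]<y[2]$), interleaved ($x[1]<y[1]<x[2]<y[2]$) or nested, and the distances between coordinates of $x$ and coordinates of $y$ (which can be below the threshold $2^r$ even though each tuple is separated), are \emph{not} determined by the unary type $t$. So the binary type of the pair $(x,y)$ in the power is strictly richer than the tuple of coordinatewise types in a direct product, your reindexing map is not continuous, and the pullback of $\prec$ to $\prod_i(\set{1,\ldots,m_i},<)$ is not well-defined. In a direct product those cross comparisons are not even expressible, which is precisely why Lemma~\ref{lem:lex-product} is the easy case.

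The paper's proof has to confront these configurations directly: for arity two it first reduces to $\omega$-distant pairs (a Polarity-Reduction-style chain argument), then performs a case analysis on the consecutive versus interleaved placements of $x$ and $y$, using transitivity of $\prec$ and auxiliary intermediate tuples to transfer the comparison from one configuration to another; for general arity it restricts to tuples differing in at most two coordinates, obtains a transitive domination relation on coordinates, and proves domination for the maximal coordinate via an explicit $\prec$-ascending chain that first parks all other coordinates near the endpoints. None of this content is recoverable from Lemma~\ref{lem:lex-product} alone, so the plan as stated does not go through; the quantifier-rank bookkeeping you flag as the delicate point is not the real obstacle.
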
   

We first show that the separated version of the lemma implies the general version.

\begin{proof}[Proof of the Linear Domination Lemma]
The proof proceeds by induction on the dimension $k$. For $k=1$, the statement holds by the \hyperref[lem:threshold]{Threshold Lemma}. Let $k \geq 2$ and $r \in \set{1,\ldots}$ and assume that the Linear Domination Lemma holds for dimension $k-1$.
Let $\omega_1$ be the value obtained using Claim \ref{lem:sep-dom} for dimension $k$ and quantifier rank $r$.
Let $\omega_2$ be the value obtained using the induction hypothesis for arity $k-1$ and quantifier rank $\omega_1+2$.
Let $\omega \coloneqq \max\set{\omega_1+2,\omega_2}$.

Let $\structa \coloneqq (\set{1,\ldots,n}, <)$ for an $n > \omega$ and consider a rank $\omega$ type $t$ of arity $k$.
Let $s$ be the rank $\omega_1$ type associated with $t$. Note that all tuples of type $t$ have type $s$ but the converse does not necessarily hold.
By the \hyperref[lem:threshold]{Threshold Lemma}, the type $s$ is entirely given by quantifier-free formulas using $<^i$ for $i \in \set{1,\ldots,2^{\omega_1}}$. If $s$ is separated, by Claim~\ref{lem:sep-dom}, the result holds for tuples of type $s$, and thus in particular for tuples of type $t$.
Otherwise, if there is a $\delta\in \set{0,\ldots, 2^{\omega_1}-1}$ such that $x[0] = min$ and $x[k+1] = max$ are exactly $\delta$ apart, then all coordinates are pairwise at most $\delta$ apart and thus, $s$ (and therefore also $t$) fixes all coordinates, such that the conclusion from the lemma trivially holds. Thus, assume that there are $i\in \set{1,\ldots, k}$, $j\in \set{0,\ldots,k+1}\setminus \set{i}$ and $\delta\in \set{0,\ldots, 2^{\omega_1}-1}$ such that $x[i]$ and $x[j]$ are exactly $\delta$ apart. Without loss of generality, we assume $i=k$, $x[j]<^{\delta} x[k]$ and $x[j]\not<^{\delta + 1} x[k]$.

Let $\pi \colon \structa^k \rightarrow\structa^{k-1}$ be the projection to the first $k-1$ coordinates. Define a linear ordering $\prec_{{k-1}}$ over the tuples of $\structa^{k-1}$ which are images of tuples of type $s$ with respect to $\pi$, and such that $x\prec y\Leftrightarrow \pi(x) \prec_{k-1} \pi(y)$ for all $x$ and $y$ of type $s$. This order can be defined using a formula of quantifier rank $\omega_1+2$, simply by existentially quantifying over the missing coordinates.

Moreover, by continuity of $\pi$, all tuples of $\omega$-type $t$ are mapped to tuples of $\omega$-type $t'$.
By the induction hypothesis, we know that there are $d\in \set{1,\ldots,k-1}$ and $p\in \set{-\omega_2,+\omega_2}$ such that:
$$x[d] <^p y[d] \quad \text{implies} \quad   x \prec_{{k-1}} y \qquad \text{for all }  x,  y \in \structa^{k-1} \text{ of type }t'.$$
Thus, we have in particular:
$$x[d] <^p y[d] \quad \text{implies} \quad   x \prec y \qquad \text{for all }  x,  y \in \structa^{k} \text{ of type }t,$$
which concludes the proof.
\end{proof}

The proof of Claim \ref{lem:sep-dom} has three stages, depending on whether the arity $k$ is $1$, $2$ or bigger. The case $k=1$ is actually trivial, and the most interesting case is arity $k=2$.

%!TEX root = main.tex
\subsubsection*{Arity two}
\label{sec:dominationk2}
We first prove Claim \ref{lem:sep-dom} for $k = 2$. We need to show that there are $d \in \set{1,2}$,  $\omega>0$ and $p \in \set{-\omega,\omega}$ such that for every separated type $t$ of arity two (we use the name \emph{binary} from now on) and rank $r$, we have:
\begin{align*}
     x[d] <^p  y[d] \quad \text{implies} \quad \bar x \prec \bar y \qquad \text{for all $\bar x, \bar y \in \set{1,\ldots,n}^2$ of type $t$.}
\end{align*} 

For two pairs $ x, y$, we say that they are $\omega$-\emph{distant} if for $i\in \set{1,2}$, it holds that $x[i]<^{\pm\omega}y[i]$ or $x[i]=y[i]$.
We first show the following claim.
\begin{claim}
    If the statement from Claim \ref{lem:sep-dom} holds in the case of $\omega$-distant tuples, then it holds for all tuples.
\end{claim}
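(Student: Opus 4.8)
The plan is to prove that separated (and in particular $\omega$-distant) tuples capture all the essential behaviour, so it suffices to establish domination for $\omega$-distant tuples of a fixed separated binary type $t$. First I would fix a separated binary type $t$ of rank $r$ and let $\bar x, \bar y$ be arbitrary tuples of type $t$. Using the \hyperref[lem:threshold]{Threshold Lemma}, the type $t$ determines, up to threshold $2^r$, all the pairwise gaps among the coordinates $x[0]=\textit{min}, x[1], x[2], x[3]=\textit{max}$; since $t$ is separated, every such gap that is not fixed by $t$ is genuinely large (at least $2^r$). The idea is to interpolate between $\bar x$ and $\bar y$ via a chain of tuples $\bar x = \bar z_0, \bar z_1, \ldots, \bar z_\ell = \bar y$, all of the same type $t$, where consecutive tuples $\bar z_i, \bar z_{i+1}$ are $\omega$-distant (here $\omega$ will be chosen suitably larger than $2^r$, e.g. $\omega = 2\cdot 2^r$ or similar, matching the constant used downstream). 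The order relationship of each consecutive pair is then governed by the $\omega$-distant case, and the ordering $\prec$ between $\bar x$ and $\bar y$ follows by transitivity of $\prec$, provided the dominating coordinate and polarity are consistent along the chain.

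The key step is the construction of such an interpolating chain that stays inside the type $t$ and moves coordinates in a controlled way. Concretely, I would move the coordinates one at a time: to go from $\bar x$ to $\bar y$, first adjust $x[1]$ towards $y[1]$ in steps whose size lies strictly between $2^r$ and $\omega$ (so each intermediate tuple is $\omega$-distant from its neighbour and, since all gaps involved remain above threshold $2^r$ by separatedness, every intermediate tuple still has type $t$), then do the same for $x[2]$. A small amount of care is needed at the endpoints of the range $\set{1,\ldots,n}$ and when $x[i]$ and $y[i]$ are already equal or differ by less than $2\cdot 2^r$ — in the latter case one can first overshoot and then come back — but since $n$ is taken larger than the threshold $\omega$, there is always enough room. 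The point is that each elementary move changes exactly one coordinate by an amount in the ``large but bounded'' window, so consecutive tuples in the chain are $\omega$-distant, and applying the $\omega$-distant case to each consecutive pair yields a dominating coordinate $d$ and polarity $p$; one then checks that $d$ and $p$ can be taken uniform along the chain (this uses that moving a non-dominating coordinate does not change the $\prec$-order, together with the fact that the global order $\prec$ is linear, so a single coordinate must be responsible).

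The main obstacle I anticipate is precisely this last consistency/uniformity argument: a priori, the $\omega$-distant case gives a dominating coordinate for each consecutive pair in the chain, but these could in principle differ from pair to pair, and one must rule that out to glue the local conclusions into a global one for $\bar x$ versus $\bar y$. I expect this is handled by observing that if coordinate $d$ dominates for $\omega$-distant tuples of type $t$, then moving only the coordinate $\neq d$ (while keeping type $t$) cannot affect the $\prec$-comparison — because otherwise one could build two $\omega$-distant tuples of type $t$ violating the domination by $d$ — so along a chain that first fixes coordinate $1$ and then coordinate $2$, the $\prec$-order is constant on the second leg and monotone (in the $d$-direction) on the first, or vice versa, pinning down a single dominating coordinate. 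Combined with transitivity and antisymmetry of $\prec$, this gives the desired implication $x[d] <^p y[d] \Rightarrow \bar x \prec \bar y$ for all tuples of type $t$, completing the reduction.
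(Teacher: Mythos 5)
Your interpolation strategy has a genuine gap. The assumed statement for $\omega$-distant tuples only orders a pair $\bar z,\bar z'$ when its premise holds, i.e.\ when the dominating coordinate satisfies $z[d] <^p z'[d]$; it says nothing at all about pairs whose $d$-th coordinates coincide. But any chain from $\bar x$ to $\bar y$ that moves one coordinate at a time must contain legs in which only the non-dominating coordinate changes, and for those consecutive pairs the $\omega$-distant case gives no information. Your proposed fix --- that moving the non-dominating coordinate ``cannot affect the $\prec$-comparison, because otherwise one could build two $\omega$-distant tuples of type $t$ violating the domination by $d$'' --- is false: two tuples differing only in the non-dominating coordinate have equal $d$-th coordinates, so no instance of the domination statement is violated whichever way $\prec$ orders them (think of a lexicographic order with $d$ most significant: there the non-dominating coordinate genuinely decides the order of such pairs). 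For the same reason the ``overshoot and come back'' device cannot be implemented: every comparison you can extract from the $\omega$-distant case is monotone in the $d$-th coordinate, so a chain of such comparisons can never return to a coordinate value it has passed. Two smaller points: with the paper's definition, consecutive tuples whose coordinates differ by an amount strictly between $2^r$ and $\omega$ are \emph{not} $\omega$-distant; and the per-pair ``consistency of $d$ and $p$'' you worry about is not actually an issue, since the $\omega$-distant statement already provides a single $d$ and $p$ for the whole type $t$.

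The paper closes the gap differently, and this is the idea you are missing: instead of interpolating between $\bar x$ and $\bar y$ by steps each of which is individually resolved by the $\omega$-distant case, it reruns the argument of the Polarity Reduction Lemma. Because $\prec$ is defined by a formula of quantifier rank $r$, all pairs with the same binary rank-$r$ type as $(\bar x,\bar y)$ are ordered by $\prec$ in the same direction; so one builds a chain $\bar x_0,\ldots,\bar x_\ell$ in which every consecutive pair has that same binary type (possible when $\bar x$ and $\bar y$ have the same type of sufficiently high rank). Such a chain is automatically $\prec$-monotone, and only its two \emph{endpoints} need to be $\omega$-distant --- which they are once the chain is long enough, since each coordinate either stays fixed or moves in a fixed direction at every step. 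The assumed $\omega$-distant case then fixes the direction of $\prec$ at the endpoints, hence along the whole chain, hence for the pair $(\bar x,\bar y)$ itself. In other words, the chain is used to determine the sign of the type of $(\bar x,\bar y)$, not to connect $\bar x$ to $\bar y$ by individually ordered steps; this removes exactly the legs that your construction cannot order.
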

\begin{proof}
    This is shown in exactly the same way as the \hyperref[lem:polarity-reduction]{Polarity Reduction Lemma}. If we have two tuples $ x, y$ of the same type of sufficiently high rank, then we can ensure that there is a sufficiently long sequence $ x_0,\ldots, x_{\ell}$, such that the $r$-type of $(x,y)$ is the same as the one of  $( x_{i-1}, x_{i})$, for $i\in \set{1,\ldots, \ell}$. If the sequence is long enough, then $x_0, x_{\ell}$ are $\omega$-distant.
\end{proof}

Let $\omega \coloneqq 2^r$.
      Consider tuples $x$ (first row) and $y$ (second row) where the first coordinate of $ y$ is at least $2^r$ larger than the second coordinate of $ x$, as in the following picture.
        \mypicdomi{9}
        By the \hyperref[lem:threshold]{Threshold Lemma}, the order relationship $x  \prec  y$ does not depend on the choice of $x$ and $y$ (subject to the requirements in the picture above).  
        There are two cases, namely 
        \begin{align*}
            \underbrace{ x \prec  y}_{\text{A1}} \qquad    \underbrace{ x \succ  y}_{\text{A2}}.
        \end{align*}
        The cases are symmetric, we assume A1 without loss of generality.

        Consider the case as above with $x[1]<^\omega x[2]\leq y[1]<^\omega y[2]$, but the distance between $x[2]$ and $y[1]$ is $\delta<2^r$. We use a simplified version of the \hyperref[lem:polarity-reduction]{Polarity Reduction Lemma} in the case of a linear order. Let $ z\coloneqq (y[2]+\delta,y[2]+\delta+\omega)$, hence the $r$-type of $( x, y)$ is equal to the one of $( y,z)$. Then $x$ and $z$ are in situation A1, which means by the transitivity of $\prec$ that $ x\prec  y$.

        Consider the case with $ x[1]<^\omega y[1]< x[2]<^\omega y[2]$. By the \hyperref[lem:threshold]{Threshold Lemma}, we can assume that the distance $\delta$ between $x[2]$ and $ y[1]$ is at most $2^r$.
        Let $ z \coloneqq (y[2]-\delta,y[2]-\delta+\omega)$, hence the $r$-type of $( x, y)$ is equal to the one of $(y,z)$. By transitivity of $\prec$, we have $ x \prec  y \Leftrightarrow  x\prec  z$.
        Now we have that $ z[1]- x[2]=y[2]-\delta-x[2]\geq 0$ since $ x[2]<^\omega y[2]$. Using the \hyperref[lem:threshold]{Threshold Lemma}, we can assume that $ x[1]<^\omega x[2]\leq z[1]<^\omega z[2]$, which means, according to the previous paragraph, that $ x\prec  y$.

        Since we only compare distant separated tuples, the only remaining case is the one illustrated below.
        Consider now two tuples $ x$ and $ y$  that are related as follows:
        \mypicdomi{17}
        Again there are  two cases, namely 
        \begin{align*}
            \underbrace{ x \prec y}_{\text{B1}} \qquad \text{ and } \qquad  \underbrace{ x \succ  y}_{\text{B2}}.
        \end{align*}  
        Case B1 implies that the dominating coordinate is the first one and Case B2 the second.   
%!TEX root = main.tex
\subsubsection*{General arity}
In this section we complete the proof of the Separated Linear Domination  Lemma. The main idea is that it suffices to compare tuples which differ in at most two coordinates.

Let $\prec$ be a linear order on $k$-tuples in $\set{1,\ldots,n}$ that is defined by a first-order formula of quantifier rank $r$. Let $t$ be a separated $k$-ary rank $r$ type and let $\omega=2^{r}$.
 We prove that there is a dominating coordinate $d \in \set{1,\ldots,k}$ and a $p \in \set{-\omega,\omega}$ such that 
\begin{align}\label{eq:target-linear-domination}
   x[d] <^p  y[d] \quad\text{implies}\quad  x \prec  y \qquad 
    \text{for all $\bar x, \bar y \in \set{1,\ldots,n}^k$ of type $t$.}
\end{align}

Choose distinct coordinates $i,j \in \set{1,\ldots,k}$ and let  $ z$ be a tuple of  type $t$. Define 
\begin{align*}
    T^{ z}_{ij} \coloneqq \{x \in \set{1,\ldots,n}^k : \ &\text{$ x$ has type $t$ and agrees with $z$ on all coordinates except for} \\&\text{possibly $i,j$}\}.
\end{align*}
By the case of arity two, we have the following result:

\begin{claim}
For every $ z$ and distinct coordinates $i,j$ there is a dominating coordinate $d \in \set{i,j}$ and a polarity $p \in \set{-\omega,\omega}$ such that 
\begin{align*}
     x[d] <^p  y[d] \quad\text{implies}\quad  x \prec  y \qquad 
    \text{for all $ x,  y \in T^{ z}_{ij}$ of type $t$.}
\end{align*}

\end{claim}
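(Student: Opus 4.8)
The plan is to recognise the restriction of $\prec$ to $T^{\bar z}_{ij}$ as an instance of the arity-two case of Claim~\ref{lem:sep-dom}. Fix $\bar z$, the distinct coordinates $i,j$, and the separated $k$-ary rank $r$ type $t$. Since every tuple in $T^{\bar z}_{ij}$ agrees with $\bar z$ outside $\set{i,j}$, the map $\rho \colon T^{\bar z}_{ij} \to \set{1,\ldots,n}^2$ given by $\rho(x) = (x[i], x[j])$ is injective; write $U$ for its image and let $\prec'$ be the linear order on $U$ obtained by transporting $\prec$ along $\rho$.

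First I would check that $\prec'$ and $U$ are first-order definable of quantifier rank $r$ over $(\set{1,\ldots,n},<)$, using the $k-2$ positions $\bar z[l]$ with $l \notin \set{i,j}$ as parameters: for $\prec'$ this is immediate by substituting these positions for the pinned coordinates in the rank $r$ formula defining $\prec$, and for $U$ it follows from separatedness of $t$, since membership of $(a,b)$ in $U$ amounts to saying that $a$ and $b$ lie at distance at least $2^r$ from each other, from the two endpoints, and from every $\bar z[l]$, and on the sides of all of these prescribed by $t$. In particular, $U$ is a single \emph{separated} binary rank $r$ type over the structure $(\set{1,\ldots,n},<)$ augmented with constants for the positions $\bar z[l]$.

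I would then invoke the arity-two case of Claim~\ref{lem:sep-dom}. Its proof uses only the \hyperref[lem:threshold]{Threshold Lemma} together with transitivity and antisymmetry of $\prec$, and both ingredients survive the adjunction of finitely many constants to $(\set{1,\ldots,n},<)$ as long as the binary type under consideration is separated also from those constants, which is exactly what the separatedness of $t$ provides for $U$. Applying the arity-two analysis to $\prec'$ and the type $U$ thus yields a dominating coordinate $d' \in \set{1,2}$ and a polarity $p \in \set{-\omega,\omega}$ with $a[d'] <^p a'[d'] \Rightarrow (a,b) \prec' (a',b')$ for all $(a,b),(a',b') \in U$; pulling this back along $\rho$ and setting $d \coloneqq i$ if $d'=1$ and $d \coloneqq j$ if $d'=2$ gives the claim.

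The one point that needs care, and the main obstacle, is precisely the presence of the parameters $\bar z[l]$: one must make sure that re-running the arity-two case analysis over $(\set{1,\ldots,n},<)$ with these extra constants causes no trouble. The argument above locates the only sensitive spot (separatedness must hold from the constants as well), but an alternative route is to first show that over $U$ the parameter-carrying rank $r$ formula defining $\prec'$ is equivalent to a parameter-free one: separatedness pushes every $\bar z[l]$ beyond the $2^r$ reach of the points of $U$, so that in an $r$-round Ehrenfeucht--Fra\"iss\'e game the parameter pebbles play rigidly and can be ignored. Either way, no other step presents difficulty, and the rest of the proof of the Separated Linear Domination Lemma (the transitivity of the relation $i \to j$ and the choice of the maximal coordinate, exactly as in the rational case) then proceeds as before.
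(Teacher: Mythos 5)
Your reduction to a two-dimensional problem (project onto coordinates $i,j$, transport $\prec$ along this projection, note definability via the pinned positions as parameters) is exactly the paper's first move. But the assertion that one can then ``re-run the arity-two case analysis'' over $(\set{1,\ldots,n},<)$ with the constants $z[l]$ adjoined, with separatedness from the constants as the only sensitive spot, hides a genuine gap. After sorting $z$, there are two qualitatively different situations. If $j=i+1$, the two free coordinates range over a common window between $z[i-1]$ and $z[j+1]$, and your argument is essentially the paper's: it restricts to $\structb = \set{z[i-1]+1,\ldots,z[j+1]-1}$ and applies the arity-two case of Claim~\ref{lem:sep-dom}. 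If however some pinned coordinate $z[l]$ with $i<l<j$ lies between them, then the binary type you call $U$ forces $a < z[l] < b$ for every pair $(a,b)$, so for any two pairs of this type one always has $a,a' < z[l] < b,b'$. The ``consecutive'' configurations on which the whole arity-two analysis is anchored (cases A1/A2, where one entire pair lies at least $2^r$ below the other) are then unrealizable, and the intermediate tuples that the analysis constructs by shifting, such as $(y[2]+\delta,\,y[2]+\delta+\omega)$, would have to cross the constant and hence leave the type. So the arity-two proof does not survive the adjunction of a constant that separates the two coordinates; it is not merely a matter of the \hyperref[lem:threshold]{Threshold Lemma} and transitivity being robust. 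Your alternative route (EF-game argument that the parameters can be ignored) does not repair this, since the obstruction is about which configurations of pairs exist within the type, not about definability.

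In that separating-constant case the statement is still true, but for a different reason: the pairs then live in a direct product of two disjoint intervals, and domination there is supplied by Lemma~\ref{lem:lex-product}, whose proof (small shift vectors and the union-closed families of sign vectors) is a genuinely different mechanism from the interleaving analysis of the arity-two case. This is precisely the case split the paper makes ($i+1=j$ versus $i+1<j$), invoking Claim~\ref{lem:sep-dom} in dimension two in the first case and Lemma~\ref{lem:lex-product} in the second. To complete your proof you would need to add this dichotomy and the appeal to Lemma~\ref{lem:lex-product}; as written, the uniform reduction to the arity-two analysis does not go through.
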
 
\begin{proof}
Without loss of generality, we assume $z[1]\leq \ldots \leq z[k]$. Let $i<j$. Since $T^{ z}_{ij} = T^{ z}_{ji}$, we can assume without loss of generality that $i < j$. We make a case analysis depending on whether $i+1=j$ holds or not. Assume $i+1=j$, and let $\structb \coloneqq \set{z[i-1]+1,\ldots,z[j+1]-1}$. We define the partial function $\pi$ from tuples of type $t$ of $\structa^k$ to pairs of elements in the universe of $\structb$ by keeping only the coordinates $i$ and $j$. We also consider the function $\sigma \colon \structb^2\rightarrow \structa^k$, which just fills in the missing coordinates by the coordinates of $z$. Note that the function $\sigma$ is continuous, hence $\prec_\structb=\sigma^{-1}(\prec)$ can be defined by a formula of quantifier rank $r$. Moreover, $\pi$ is also continuous, thus the tuples in the image of $\pi$ have the same $r$-type $t'$.
Therefore, applying the \hyperref[lem:lin-dom]{Linear Domination Lemma} to the case of dimension 2, we obtain that there are $d\in \set{1,2}$, $\omega\in \set{1,2,\ldots}$, and $p\in \set{-\omega,+\omega}$ such that:
$$x[d+1] <^p y[d+1] \quad \text{implies} \quad   x \prec_{{k-1}} y \qquad \text{for all }  x,  y \in \structb^2 \text{ of type }t'.$$
Thus, we have in particular:
$$x[d+i] <^p y[d+i] \quad \text{implies} \quad   x \prec y \qquad \text{for all }  x,  y \in T^{ z}_{ij} \text{ of type } t.$$

Similarly, if $i+1<j$, we define the structure $\structb \coloneqq \set{z[i-1]+1,\ldots,z[i+1]-1}\times \set{z[j-1]+1,\ldots,z[j+1]-1}$. Using the same arguments and Lemma \ref{lem:lex-product}, we obtain the result.
\end{proof}

It is not hard to see that there is exactly one possibility for $d$ and $p$ -- once $z$, $i$ and $j$ have been fixed -- since otherwise we would get a cycle for the order $\prec$. 
By the \hyperref[lem:threshold]{Threshold Lemma}, the dominating coordinate $d$ depends only on $i,j$ and not on the choice of $ z$, and therefore we can write $d_{ij}$ for the dominating coordinate that is appropriate to coordinates $i,j$. Also the polarity $p$ depends only on $i$ and $j$. Let us write $i \stackrel p \to j$ if, whenever the values at coordinates $i,j$ are distinct, then $j$ is the  dominating coordinate for $i,j$ and the associated polarity is $p$.

\begin{claim}\label{lem:except-two-coordinates} If $i,j,\ell$ are distinct, then $i \stackrel p \to j \stackrel q \to \ell$ implies $i \stackrel q \to \ell$. 
\end{claim}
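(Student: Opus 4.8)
The plan is to prove the statement by exhibiting, for arbitrary tuples $\bar x, \bar y$ of type $t$ that differ only in coordinates $i$ and $\ell$ with (say) $x[i] <^q y[i]$ on the $\ell$-side polarity, an intermediate tuple $\bar z$ of type $t$ through which we can chain the two known dominations. Concretely, fix $\bar x$ and $\bar y$ of type $t$ with $\bar x$ and $\bar y$ agreeing off $\set{i,\ell}$ and with $x[\ell] <^q y[\ell]$ (the version of the conclusion we must establish; the other direction is symmetric). Since the type $t$ is separated with threshold $\omega = 2^r$, there is enough room to pick a value $c$ for coordinate $j$ that keeps all the relevant distances above $2^r$, hence preserves the type, when we plug it into either $\bar x$ or $\bar y$ in coordinate $j$. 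Define $\bar z$ to agree with $\bar y$ everywhere except that $z[j] := c$ is chosen so that, relative to $\bar x$, the pair $\bar x, \bar z$ differs only in coordinates $i$ and $j$, and relative to $\bar y$, the pair $\bar z, \bar y$ differs only in coordinate $j$ (and the coordinate-$\ell$ difference is handled on the way). More carefully, one introduces two intermediates: $\bar z_1$ obtained from $\bar x$ by moving coordinate $\ell$ to its $\bar y$-value (this is a move within $T^{\bar x}_{i\ell}$? no — it changes only $\ell$, so it is governed by the pair $\set{i,\ell}$ with the other coordinate held fixed), and then moving coordinate $i$.

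Let me restate the chaining more cleanly. Given $i \stackrel{p}{\to} j$ and $j \stackrel{q}{\to} \ell$, we must show $i \stackrel{q}{\to} \ell$. Take $\bar x, \bar y$ of type $t$ agreeing outside $\set{i,\ell}$ with $x[\ell] <^q y[\ell]$; our goal is $\bar x \prec \bar y$. Using separatedness we may further perturb: the \hyperref[lem:threshold]{Threshold Lemma} lets us assume the $i$-coordinates $x[i], y[i]$ are each $2^r$-far from everything, so we may shift $y[i]$ freely while keeping type $t$. First use $j \stackrel{q}{\to} \ell$: build $\bar z$ agreeing with $\bar x$ except $z[j]$ is moved (in whatever direction, by at least $2^r$) and $z[\ell] = y[\ell]$; since $z[\ell] <^q$ (or $>^q$) matches the polarity and $\ell$ dominates the pair $\set{j,\ell}$, we get $\bar x$ and $\bar z$ comparable in the correct direction, i.e. $\bar x \prec \bar z$ after also arranging the $j$-shift not to interfere (it can't, because $\ell$ dominates). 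Then use $i \stackrel{p}{\to} j$: compare $\bar z$ with $\bar y$, which differ only in coordinates $i$ and $j$ (since $\bar z$ already has the $\bar y$-value at $\ell$); because $j$ dominates $\set{i,j}$, and we have freedom to place $z[j]$ on the correct side of $y[j]$, we get $\bar z \prec \bar y$. Transitivity of $\prec$ yields $\bar x \prec \bar y$, as desired. The polarity in the conclusion is $q$ because the decisive step was the $\set{j,\ell}$-domination, whose polarity is $q$.

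The main obstacle is the bookkeeping of polarities and of which coordinate "wins" at each intermediate step: one must check that the auxiliary moves in coordinates $j$ (when invoking $\ell$'s dominance) and $i$ (when invoking $j$'s dominance) can always be oriented so that the dominating coordinate's verdict is not overridden, and that all intermediate tuples genuinely have type $t$. Both of these follow from separatedness (there is always $\ge 2^r$ room to move, so the \hyperref[lem:threshold]{Threshold Lemma} guarantees the type is preserved) and from the definition of domination (the dominating coordinate's relation alone forces $\prec$, regardless of the other coordinates), but the argument must be laid out with a picture, exactly as in the quantifier-free $\mathbb{Q}$ case of the \hyperref[lem:domination]{Rational Domination Lemma} item 3 — indeed this claim is the finite, separated analogue of the transitivity of $\to$ established there, and the same diagram (inserting a "yellow" intermediate tuple) carries over verbatim once distances are measured up to threshold $2^r$ instead of exactly.
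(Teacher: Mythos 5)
Your overall plan -- chaining the two pairwise dominations through an intermediate tuple, as in the transitivity step of the rational case -- is the right one and is also the paper's. But there is a genuine gap in the execution: your intermediate tuple $\bar z$ need not have type $t$, so neither hypothesis applies to it. You build $\bar z$ from $\bar x$ by copying $y[\ell]$ into coordinate $\ell$ (and perturbing coordinate $j$) while keeping $z[i]=x[i]$. For arbitrary $\bar x,\bar y$ of the same separated type agreeing off $\set{i,\ell}$ this hybrid can scramble the order of the coordinates: e.g.\ if $t$ orders coordinate $i$ below coordinate $\ell$ with no other coordinate in between, take $x[i]=20\cdot 2^r$, $x[\ell]=30\cdot 2^r$ and $y[i]=12\cdot 2^r$, $y[\ell]=14\cdot 2^r$ (all other coordinates equal and everything $2^r$-separated); both tuples have type $t$, but the hybrid has its $\ell$-value below its $i$-value, hence a different type, and the dominations $i \stackrel p \to j$, $j \stackrel q \to \ell$ say nothing about comparisons involving it. The auxiliary ``we may shift $y[i]$ freely'' / ``assume the coordinates are far from everything'' normalizations are likewise not justified as stated: replacing $\bar y$ by a shifted tuple changes the pair you must compare, and transferring the conclusion back is exactly the kind of independence you have not yet invoked.

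The paper avoids this by not quantifying over all pairs at all. It has already established, just before the claim, that for each pair of coordinates the dominating coordinate and its polarity are unique and independent of the values of the remaining coordinates (via the Threshold Lemma). Hence it suffices to exhibit a single witness: take one $(2^{r+1})$-separated tuple $\bar x$ of type $t$, let $\bar y$ be $\bar x$ with coordinate $i$ shifted by $s$ and coordinate $j$ shifted by $p$, and let $\bar z$ be $\bar x$ with coordinate $i$ shifted by $s$ and coordinate $\ell$ shifted by $q$, where $s\in\set{-\omega,\omega}$ is arbitrary; all three tuples stay of type $t$ because the shifts have magnitude at most $\omega=2^r$. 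Then $i \stackrel p \to j$ gives $\bar x \prec \bar y$ and $j \stackrel q \to \ell$ gives $\bar y \prec \bar z$, and since this holds for both signs of $s$, coordinate $i$ cannot dominate the pair $\set{i,\ell}$ and the polarity at $\ell$ must be $q$. Your argument can be repaired by switching to this ``one well-separated witness plus uniqueness'' form; as written, the universal version fails on the configurations above.
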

\begin{proof}
        Choose $s \in \set{-\omega,\omega}$ arbitrarily. 
        Consider a tuple $ x$ which is $(2^{r+1})$-separated. This tuple has type $t$, and shifting any coordinate by offset in $\set{-\omega,\omega}$ still leads to a tuple that has type $t$, because the quantifier rank of $t$ is $r$. Define $ y$ to be the tuple obtained from $ x$ by adding $s$ to coordinate $i$ and adding $p$ to coordinate $i$, and define $ z$ to be the tuple obtained from $ x$ by adding  $s$ to coordinate $i$ and adding  $q$ to coordinate $\ell$. Here is a picture where $p=q=\omega$ and $s=-\omega$.

        \mypicdomi{25}

        From the assumption of the claim it follows that $ x \prec  y \prec  z$, and this holds regardless of the choice of $s$. It follows that $i \stackrel q \to \ell$. 
\end{proof}

From the above lemma it follows that the relation $i \to j$ defined by 
\begin{align*}
    i = j \qquad \text{or} \qquad i \stackrel p \to j \text{ for some }p \in \set{-\omega,\omega}
\end{align*}
is a linear order on the coordinates $\set{1,\ldots,k}$. Let $d$ be the maximal element according to this total order. Let $d'$ be the second-to-maximal element in the total order, and let $p$ be such that $d' \stackrel p \to d$ holds. From Claim~\ref{lem:except-two-coordinates} it follows that $i \stackrel p \to d$ holds for all $i \in \set{1,\ldots,k} \setminus \set{d}$. 
We show that coordinate $d$ dominates when comparing tuples where the values of coordinate $d$ are sufficiently far apart.

To finish the proof of the \hyperref[lem:lin-dom]{Linear Domination Lemma}, we prove below~\eqref{eq:target-linear-domination} for polarity $2pk$, i.e., we show
    \begin{align*}
         x[d] <^{2pk}  y[d] \quad\text{implies}\quad  x \prec  y \qquad \text{for $ x,  y \in \structa^k$  of type $t$.}
    \end{align*}
    Assume without loss of generality that the coordinates in some (equivalently, every) tuple of type $t$ are ordered so that they are strictly increasing. 
    Let $x$ and $ y$ be tuples as in the assumptions of the claim. 

    Let $x$ and $ y$ be such that $ x[d] <^{2pk}  y[d]$. We need to show $ x \prec  y$. 
    By the \hyperref[lem:threshold]{Threshold Lemma}, we can assume that all coordinates in the tuples  $ x$ and $ y$ avoid the first and last $k \cdot 2^\omega$ positions.  To prove $ x \prec  y$, we will  find  a chain of $2k$ tuples that begins in $ x$, ends in $ y$, and is growing with respect to $\prec$.  We do the proof in the case where $k=7$ and $d=4$, but the general case works the same. Define 
    \begin{align*}
        \red{z_1}= 2^\omega \quad \red{z_2} = 2\cdot 2^\omega \quad \red{z_3}= 3 \cdot 2^\omega \qquad \red{z_5} = n - 3 \cdot 2^\omega  \quad \red{z_6} = n - 2 \cdot 2^\omega  \quad \red{z_7} = n - 2^\omega.
    \end{align*}
In general,  $\red{z_i}$ is defined as $i \cdot 2^\omega$  when $i \neq d$ and otherwise it is defined as $n - (k-i+1)\cdot 2^\omega$.
The choice of coordinates $\red{z_i}$ is made so that they are far apart, and furthermore $\red{z_i}$ is to the left/right of the tuples $ x,  y$, depending on whether $i<d$ or $i>d$.   The chain that witnesses $ x \prec  z$ is given below:
    \begin{align*}
        \begin{array}{ccccccccc}
     x_1 = &        (x_1 ,& x_2 ,& x_3 ,& x_4 ,& x_5 ,& x_6 ,& x_7 ) &  \\
     x_2 = &       (\red{z_1} ,& x_2 ,& x_3 ,& x_4+p ,& x_5 ,& x_6 ,& x_7 ) &  \\         
     x_3 = & (\red{z_1} ,& \red{z_2} ,& x_3 ,& x_4+2p ,& x_5 ,& x_6 ,& x_7 ) &  \\            
     x_4 = & (\red{z_1} ,& \red{z_2} ,& \red{z_3} ,& x_4+3p ,& x_5 ,& x_6 ,& x_7 ) &  \\            
     x_5 = & (\red{z_1} ,& \red{z_2} ,& \red{z_3} ,& x_4+4p ,& \red{z_5} ,& x_6 ,& x_7 ) &  \\
     x_6 = &          ( \red{z_1} ,& \red{z_2} ,& \red{z_3} ,& x_4+5p ,& \red{z_5} ,& \red{z_6} ,& x_7 ) &  \\
     x_7 = &
     (\red{z_1} ,& \red{z_2} ,& \red{z_3} ,& x_4+6p ,& \red{z_5} ,& \red{z_6} ,& \red{z_7} ) &  \\
      y_7 = & (\red{z_1} ,& \red{z_2} ,& \red{z_3} ,& y_4-6p ,& \red{z_5} ,& \red{z_6} ,& \red{z_7} ) &  \\
      y_6 = & (\red{z_1} ,& \red{z_2} ,& y_3 ,& y_4-5p ,& \red{z_5} ,& \red{z_6} ,& \red{z_7} ) &  \\
      y_5 = & (\red{z_1} ,& y_2 ,& y_3 ,& y_4-4p ,& \red{z_5} ,& \red{z_6} ,& \red{z_7} ) &  \\
      y_4 = & (y_1 ,& y_2 ,& y_3 ,& y_4-3p ,& \red{z_5} ,& \red{z_6} ,& \red{z_7} ) &  \\ 
      y_3 = & (y_1 ,& y_2 ,& y_3 ,& y_4-2p ,& y_5 ,& \red{z_6} ,& \red{z_7} ) &  \\ 
      y_2 = & (y_1 ,& y_2 ,& y_3 ,& y_4-p ,& y_5 ,& y_6 ,& \red{z_7} ) &  \\ 
      y_1 = & (y_1 ,& y_2 ,& y_3 ,& y_4 ,& y_5 ,& y_6 ,& y_7 ) &
        \end{array}
    \end{align*}
    All tuples in the above chain have type $t$, by the assumption that coordinates $\red{z_i}$ are far apart and to the left/right of the tuples $ x,  y$. Since the dominating coordinate is incremented as the chain progresses, and at most two coordinates change in each step, we can use the assumption that coordinate $d$ dominates when only two coordinates change to conclude that each consecutive step yields a tuple that is bigger with respect to $\prec$. By transitivity, it follows that $ x =  x_1 \prec  y_1 =  y$.

%!TEX root = main.tex
\subsection{Direct products of powers of linear orders}
\label{sec:products-of-powers-of-linear-orders}
In this section we prove the most general version of the \hyperref[lem:technical-domination]{Product Domination Lemma} for linear orderings, namely the case of direct products of powers of linear orderings. 
\begin{lemma}
    \label{lem:products-of-powers-of-linear-orders}
        For all $k, r \in \set{1,2,\ldots}$, there exists a threshold $\omega \in \set{1,2,\ldots}$ such that the following holds. If $\prec$ is a linear order on 
        \begin{align*}
            \structa= \prod_{i \in I}(\set{1,\ldots,n_i}, <)^{k_i}
        \end{align*}
        such that $\prec$ is defined by a first-order formula of rank $r$, then for every unary rank $\omega$ type $t$ over $\structa$, there are $d \in I$, $e \in \set{1,\ldots,k_i}$ and  $p \in \set{-\omega,\omega}$ such that  
        \begin{align*}
            x[d][e] <^p y[d][e] \quad \text{implies} \quad   x \prec  y \qquad \text{for all $x,  y \in \structa$ of type $t$}.
        \end{align*}
        \end{lemma}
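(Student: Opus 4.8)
The plan is to combine the two special cases already established --- Lemma~\ref{lem:lex-product} (direct products of one-dimensional linear orders) and the \hyperref[lem:lin-dom]{Linear Domination Lemma} (a single power of a linear order) --- via a double application of the compositionality machinery of Theorem~\ref{thm:mostowski}. The structure $\structa = \prod_{i\in I}(\set{1,\ldots,n_i},<)^{k_i}$ is a direct product, indexed by $I$, of the powers $\structa_i \coloneqq (\set{1,\ldots,n_i},<)^{k_i}$. So I would first \emph{freeze the direct-product layer}: by part~2 of Theorem~\ref{thm:mostowski}, a unary rank-$\omega$ type $t$ over $\structa$ decomposes into projected types $t[i]$ over each factor $\structa_i$, and a tuple has type $t$ iff each of its projections has type $t[i]$. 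The idea is to treat each factor $\structa_i$ as a single "coordinate" carrying the type $t[i]$, apply Lemma~\ref{lem:lex-product} (really its underlying reasoning, the sign-vector/union-closed-family argument) to obtain a dominating factor $d\in I$ together with a polarity at the level of factors, and then \emph{zoom into} that dominating factor $\structa_d$ and apply the \hyperref[lem:lin-dom]{Linear Domination Lemma} to locate the dominating coordinate $e\in\set{1,\ldots,k_d}$ inside the power $(\set{1,\ldots,n_d},<)^{k_d}$.

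Concretely, I would choose thresholds from the inside out. Given $k,r$, let $\omega_1$ be the threshold produced by the \hyperref[lem:lin-dom]{Linear Domination Lemma} for dimension $k$ and rank $r$. The order $\prec$ on $\structa$, restricted to a fixed type $t$, should be shown to be "lexicographic across the factors $I$" in the following sense: using the \hyperref[lem:threshold]{Threshold Lemma} to handle boundary cases (if $t[i]$ pins $x[i]$ to lie within threshold distance of $\min$ or $\max$ in some coordinate of $\structa_i$, that coordinate is fixed and can be eliminated, reducing $k_i$), and otherwise choosing a base tuple whose coordinates are all far from the endpoints, one considers the family $\mathcal I\subseteq 2^I$ of sign vectors of "small" perturbations $\delta$ (now $\delta$ perturbs entire blocks $\structa_i$, shifting all $k_i$ internal coordinates of block $i$ together in one direction by an amount in a controlled range) for which $x\prec x+\delta$. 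Exactly as in the proof of Lemma~\ref{lem:lex-product}, $\mathcal I$ and its complement are closed under union, so the Claim inside that proof yields a single $d\in I$ such that $\mathcal I$ (up to symmetry) is $\set{J\subseteq I: d\in J}$. This says: once we move block $d$ "forward" by a controlled amount, no matter how we move the other blocks, the tuple increases --- i.e.\ block $d$ dominates the other blocks. Then within block $d$, fixing all other blocks to a base value and restricting $\prec$ to tuples of type $t$ that vary only inside $\structa_d$ gives (by continuity, a rank-$r$-definable linear order on $(\set{1,\ldots,n_d},<)^{k_d}$ and a single rank-$\omega_1$ type there) a dominating internal coordinate $e$ and polarity $p\in\set{-\omega_1,\omega_1}$ from the \hyperref[lem:lin-dom]{Linear Domination Lemma}. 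Setting $\omega\coloneqq\omega_1$ (adjusted upward by the constant slack needed for the boundary-elimination induction and for the rank of the projected formulas) and using the \hyperref[lem:polarity-reduction]{Polarity Reduction Lemma} as before to bring the polarity into $\set{-\omega,\omega}$ if necessary, one then assembles the two facts: moving $x[d][e]$ forward increases $x$ inside block $d$, and moving block $d$ forward (which one can arrange while only changing coordinate $e$ of block $d$, using threshold room) increases $x$ relative to all other blocks. Chaining these, $x[d][e]<^p y[d][e]$ implies $x\prec y$ for all $x,y$ of type $t$.

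The main obstacle I expect is the \emph{interaction between the two layers}: in Lemma~\ref{lem:lex-product} one perturbs each one-dimensional coordinate independently, but here a "block perturbation" moves $k_i$ internal coordinates simultaneously, and one must verify that the union-closure argument still goes through when the "coordinates" are blocks carrying nontrivial internal structure --- in particular that shifting an entire block uniformly by a small offset preserves its rank-$\omega$ type (this is where the separation/far-from-endpoints hypothesis on $t$, guaranteed after the boundary elimination, is essential) and that the resulting sign-vector family is genuinely a partition of $2^I$ into two union-closed halves. A secondary technical point is bookkeeping the rank losses: each projection to a factor and each "fill in the missing blocks" map must be checked to be continuous (type-preserving) so that the induced orders remain definable with controlled quantifier rank; this is routine given Theorem~\ref{thm:mostowski} and the notion of $i$-continuity, but it dictates the final choice of $\omega$. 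Once these points are handled, the proof is essentially the composition of the two already-proved special cases plus one application of the \hyperref[lem:polarity-reduction]{Polarity Reduction Lemma}.
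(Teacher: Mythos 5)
There is a genuine gap in the assembly step. Your two ingredients are (i) a dominating factor $d \in I$ obtained by perturbing whole blocks \emph{uniformly} (shifting all $k_i$ internal coordinates of block $i$ by the same small offset), and (ii) a dominating coordinate $e$ inside $\structa_d$ obtained by freezing all other blocks at a common base value $z$. Neither fact, nor their chaining, covers a general pair $x,y$ of type $t$: such $x$ and $y$ typically differ \emph{non-uniformly} in the blocks $i \neq d$ (e.g.\ $y[i][1]<x[i][1]$ but $y[i][2]>x[i][2]$), and also differ on the non-$e$ coordinates of block $d$. Fact (i) only compares pairs whose blocks are uniform shifts of one another, and fact (ii) only compares pairs agreeing with $z$ outside block $d$. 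If you try to chain through an intermediate $z'$, then $(x,z')$ is controlled only when $z'$'s blocks are uniform shifts of $x$'s, and $(z',y)$ only when they are uniform shifts of $y$'s, which would force $x[i]$ and $y[i]$ to be uniform shifts of each other for $i\neq d$ --- not true in general. So the final sentence ``chaining these, $x[d][e]<^p y[d][e]$ implies $x\prec y$'' does not go through as stated; the obstacle you flag (interaction between the two layers) is exactly where the proof breaks, and the sign-vector argument at block level cannot repair it by itself.

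The paper resolves this by running the two layers in the opposite order and, crucially, by doing the within-block step for \emph{every} factor, not just the dominating one. For each $i\in I$ it applies the Linear Domination Lemma to the pullback of $\prec$ along a section of the projection $\pi_i$ (filling the other blocks from a fixed tuple of type $t_*$), obtaining a dominating coordinate $d_i$ and polarity $p_i$ valid whenever two tuples agree outside block $i$. It then projects onto just these coordinates, $\pi\colon \structa\to\structb=\prod_{i\in I}(\set{1,\ldots,n_i},<)$, chooses the section $\sigma$ sending $b\in\structb$ to the $\prec$-least preimage of type $t_*$, checks $\sigma$ is continuous up to a bounded rank loss, and applies Lemma~\ref{lem:lex-product} to the pullback order on $\structb$ to get the dominating factor $d$ (and sets $e:=d_d$). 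The missing idea in your plan is the canonicalization step this enables: any $x$ of type $t$ satisfies $x\prec x'$ for some $x'$ in the image of $\sigma$ whose only changes are shifts of each $x[i][d_i]$ by $p_i$ (and symmetrically $y'\prec y$), which neutralizes all non-uniform differences in the blocks $i\neq d$ and in the non-$e$ coordinates of block $d$; the across-block comparison then only needs to be made between canonical representatives, where it follows from the $\structb$-level result with polarity $p:=2p_i+p_\structb$. Without within-block dominating coordinates for all $i$ and such a canonical section, the reduction from arbitrary pairs of type $t$ to pairs you can compare is missing.
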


        Let $\omega$ be a threshold that is large enough -- we will specify the required bounds during the proof. Fix for the rest of this proof a unary rank $\omega$ type $t$ in $\structa$. Our goal is to find a dominating coordinate $(d,e)$ and a polarity $p$ as in the statement of Lemma~\ref{lem:products-of-powers-of-linear-orders}.

        The general strategy is as follows. 
We begin in Section~\ref{section:section-i} by looking, for every $i \in I$, at the projection
\begin{align*}
    \pi_i \colon \structa \to {\overbrace{(\set{1,\ldots,n_i}, <)^{k_i}}^{\structa_i}}  \qquad x \mapsto x[i].
\end{align*}
We  show that there exists $\sigma_i \colon \structa_i \to \structa$ which is a \emph{section} of $\pi_i$ in the sense that $\pi_i \circ \sigma_i$ is the identity on $\structa_i$. By applying the results from Section~\ref{sec:linear-domination} about powers of linear orders, we show that there is a dominating coordinate $d_i$ which works for elements in the image of the section. 
Next, in Section~\ref{sec:diagonal-section}, we consider the projection
\begin{align*}
    \pi \colon \structa \to \overbrace{\prod_{i \in I} \set{1,\ldots,n_i}}^{\structb}  \qquad x \mapsto (x[d_i])_i,
\end{align*}
which is defined in terms of the dominating coordinates $\set{d_i}_{i \in I}$ that were found in Section~\ref{section:section-i}.   Again, we find a section $\sigma \colon \structb \to \structa$. By applying the results from Section~\ref{sec:direct-linear} about direct products of linear orders, we find  a dominating coordinate $d \in I$ which works for elements in the image of the section. 
Finally, in Section~\ref{sec:proof-of-products-of-powers-of-linear-orders}, we combine the  results about the sections $\sigma_i$ and $\sigma$ to prove the conclusion of Lemma~\ref{lem:products-of-powers-of-linear-orders} for $e = d_i$ where $i=d$. 

\paragraph{Sections of $\pi_i$}
\label{section:section-i}
For each $i \in I$, apply Lemma~\ref{lem:lin-dom} about domination for powers of linear orders to $r$ and $k_i$, leading to some threshold $\omega_i$. Define 
\begin{align*}
    \omega_* \coloneqq \max (\set{r} \cup \set{\omega_i}_{i \in I}).
\end{align*}
Our first condition on the threshold $\omega$ is that
\begin{align}\label{eq:trheshold-star}
    \omega \ge \omega_*.
\end{align}
Let $t_*$ be the information of rank $\omega_*$ that is stored in the rank $\omega$ type $t$, i.e.~$t_*$ is the unique rank $\omega_*$ type contained in type $t$.  

Let $i \in I$. The general idea in this part of the proof is to study the order $\prec$ when comparing elements of $\structa$ that agree on all coordinates other than $i$. Consider the projection
\begin{align*}
    \pi_i \colon \structa \to \structa_i.
\end{align*}
For $z \in \structa$ of type $t_*$ define 
\begin{align*}
    \sigma^z_i \colon \structa_i \to \structa
\end{align*}
to be the function which fills in the missing coordinates $j \in I - \set i$ by the values used in $z$. This function is a section of $\pi_i$ in the sense that $\pi_i \circ \sigma_z$ is the identity on $\structa_i$.    Consider the preimage of $\prec$ under this section, i.e.~the relation $\prec^z_i$ defined by
\begin{align*}
    x \prec^z_i y \quad \text{if} \quad \sigma^z_i(x) \prec \sigma^z_i(y).
\end{align*}
By Theorem \ref{thm:mostowski}, $\sigma^z_i$ is continuous, and therefore  $\prec_i$ is defined by a first-order formula of same quantifier rank as $\prec$, namely $r$. Therefore, since the type $t_*$ has rank at least $\omega_i$ as obtained from Lemma~\ref{lem:lin-dom}, it follows that there are a polarity $p_i \in \set{-\omega_i,\omega_i}$ and a dominating coordinate $d_i \in \set{1,\ldots,k_i}$  such that
    \begin{align*}
        x[d_i] <^{p_i} y[d_i] \quad \text{implies} \quad x \prec_i^z y \qquad \text{for  all $x,y \in \structa_i$ of type $t_*[i]$}.
    \end{align*}    
    By Theorem \ref{thm:mostowski}, if $z$ and $z'$ have the same type of rank $r$ then $\prec^z_i$ and $\prec^{z'}_i$ are the same order. Therefore, since the quantifier rank of $t_*$ is at least $r$, it follows that the dominating coordinate $d_i$ and polarity $d_i$ do not depend on $z$, as long as it has type $t_*$.
 Because $\sigma^z_i$ is a section of  $\pi_i$ and it preserves the appropriate orderings, it follows that 
\begin{align*}
    x[i][d_i] <^{p_i} y[i][d_i] \quad \text{implies} \quad x \prec y \qquad \text{for   all  $x,y$ in the image of $\sigma^z_i$ with type $t_*[i]$}.
\end{align*}  
By applying the above to  $z=x$, we see that coordinate $d_i$ dominates whenever $x,y$ agree on coordinates other than $i$:
\begin{align}\label{eq:section-i}
    x[i][d_i] <^{p_i} y[i][d_i] \quad \text{implies} \quad x \prec y \qquad 
    \begin{array}{l}
        \text{for  all  $x,y \in \structa$ of type $t_*$}\\
        \text{such that $x[j]=y[j]$ for $j \neq i$}.
    \end{array}
\end{align}

\paragraph{Section of $\pi$}
\label{sec:diagonal-section}
As announced in the proof strategy, we now consider the projection which uses only the dominating coordinates $d_i$ that were found in Section~\ref{section:section-i}:
\begin{align*}
    \pi \colon \structa \to \overbrace{\prod_{i \in I} (\set{1,\ldots,n_i},<)}^{\structb}  \qquad x \mapsto (x[d_i])_i.
\end{align*}
We  will find a suitable section $\sigma$ of $\pi$ and prove that there is a dominating coordinate for the image of that section. 

Recall the type $t_*$ discussed in Section~\ref{section:section-i}, which is obtained by keeping only the rank $\omega_*$ information from the type $t$. Define 
\begin{align*}
    \sigma \colon \structb \to \structa
\end{align*}
to be the section of $\pi$ that is defined by
\begin{align*}
    x \in \structb \quad \mapsto \quad \begin{cases}
        \text{$\prec$-least element of $\pi^{-1}(x)$ with type $t_*$} & \text{if there is such an element}\\
        \text{$\prec$-least element of $\pi^{-1}(x)$} & \text{otherwise.}
    \end{cases}
\end{align*}

We argue why the section $\sigma$ is $(\omega_*+2k+r)$-continuous. Consider a subset $S$ of the universe of structure $\structa$ defined by a unary formula of some quantifier rank $q$. We want to show that the set of elements of $\structb$ that are sent to $S$ satisfy some formula of rank $q+\omega_*+2k+r$. This amounts to showing that, for a unary rank $q$ type $t$ of $\structa$, there is a unary formula $\phi$ over $\structb$ which selects elements whose image with respect to $\sigma$ has type $t$. Using Theorem~\ref{thm:mostowski}, the formula $\phi$ can be defined by quantifying over the missing coordinates and then checking that the whole tuple satisfies $t$, $t_*$ if possible, and that it is the minimal one to do so. Thus we obtain a formula of quantifier rank $q+\omega_*+2k+r$ (the $2$ comes from the fact that we need to check minimality).

Define $\prec_\structb$ as the inverse image of $\prec$ with respect to $\sigma$.  From continuity, it follows that  $\prec_\structb$ is defined by a first-order formula  of quantifier rank at most $\omega_* + 2k +2r$.  Apply Lemma~\ref{lem:lex-product} to this quantifier rank, yielding  some threshold $\omega_\structb$. We assume that 
 \begin{align}\label{eq:threshold-b}
     \omega \ge \omega_\structb.
 \end{align}
 Since the  projection $\pi$ is continuous, it follows that  all elements of type $t$ in $\structa$ are mapped by $\pi$ to elements of the same rank $\omega$ type, call it $t_\structb$. By Lemma~\ref{lem:lex-product} and the assumption~\eqref{eq:threshold-b}, there are a dominating coordinate $d \in I$ and a polarity $p_{\structb} \in \set{-\omega_\structb, \omega_\structb}$ such that 
 \begin{align*}
    x[d] <^{p_\structb} y[d] \quad \text{implies} \quad x \prec_\structb y \qquad \text{for  all $x,y \in \structb$ of type $t_\structb$}.
\end{align*}  
Define $e$ to be $d_i$ for $i=d$. 
Because $\sigma$ is a section and it preserves the appropriate orderings, it follows that  
\begin{align}\label{eq:section-diag}
    x[d][e] <^{p_\structb} y[d][e] \quad \text{implies} \quad x \prec y \qquad \text{for  all $x,y$ of type $t$ in the image of $\sigma$}.
\end{align}

\paragraph{Proof of Lemma~\ref{lem:products-of-powers-of-linear-orders}}
\label{sec:proof-of-products-of-powers-of-linear-orders} 
We now complete the proof of Lemma~\ref{lem:products-of-powers-of-linear-orders}.  Define $p \coloneqq 2p_i + p_\structb$. 
Let $x,y \in \structa$ have type $t$ and assume that
\begin{align}\label{eq:assumption-two-p}
        x[d][e] <^{p} y[d][e].
\end{align}
To prove $x \prec y$, as required in the conclusion of Lemma~\ref{lem:products-of-powers-of-linear-orders}, we will find an $\prec$-ascending chain  which begins in $x$, ends in $y$, and such that  each step in the ascending chain is proved using the results from Sections~\ref{section:section-i} and~\ref{sec:diagonal-section}.  

\begin{claim}\label{cl:canonical}
    There is an $x'$ of type $t$ in the image of $\sigma$ such that $x \prec x'$ and 
    \begin{align*}
        x'[i][d_i] = x[i][d_i] + p_i \qquad \text{for all $i \in I$.}
    \end{align*}
\end{claim}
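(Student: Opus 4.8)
The statement to prove (Claim~\ref{cl:canonical}) asserts that from any $x$ of type $t$ we can reach an element $x'$ of type $t$ that lies in the image of the diagonal section $\sigma$, is $\prec$-above $x$, and is obtained from $x$ by incrementing, in each factor $i\in I$, the dominating coordinate $d_i$ by exactly $p_i$. My plan is to build $x'$ in two conceptual moves: first use the per-factor dominations from~\eqref{eq:section-i} to push $x$ up in each factor along the coordinate $d_i$; then use the fact that $\sigma$ picks out the $\prec$-least element of each $\pi$-fibre (of type $t_*$) to ``snap'' the resulting element onto the image of $\sigma$ without decreasing it.

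The first move is the core. I would fix an enumeration $i_1,\dots,i_\ell$ of $I$ and define a chain $x = x_0 \prec x_1 \prec \cdots \prec x_\ell$, where $x_m$ is obtained from $x_{m-1}$ by modifying only the $i_m$-th factor: replace its coordinate $d_{i_m}$ by $x[i_m][d_{i_m}] + p_{i_m}$, and shift the other coordinates of that factor by small offsets if necessary so that the factor $\structa_{i_m}$-type is preserved — here I use that $t$ has rank $\omega \geq \omega_*$ and that $\omega_*$ is large enough (in particular $\geq 2^r$) for the Threshold Lemma to guarantee that an offset of size $|p_{i_m}| = \omega_{i_m} \leq \omega_*$, applied away from the string endpoints, does not change the rank-$\omega_*$ type, hence keeps us inside type $t_*$; and by choosing $\omega$ large enough we stay inside type $t$ as well. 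Since at each step $x_{m-1}$ and $x_m$ differ only in factor $i_m$ and the dominating coordinate $d_{i_m}$ has increased by $p_{i_m}$ (which is $<^{p_{i_m}}$ in the factor order), \eqref{eq:section-i} gives $x_{m-1} \prec x_m$. By transitivity $x \prec x_\ell$, and $x_\ell$ has type $t$ and satisfies $x_\ell[i][d_i] = x[i][d_i] + p_i$ for all $i$.

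The second move: $x_\ell$ and the desired $x'$ have the same image under $\pi$, because $\pi$ only reads the coordinates $d_i$ and those are now exactly $x[i][d_i]+p_i$. Let $x' \coloneqq \sigma(\pi(x_\ell))$. Since $x_\ell$ has type $t_*$, the fibre $\pi^{-1}(\pi(x_\ell))$ contains an element of type $t_*$, so by definition $x' = \sigma(\pi(x_\ell))$ is the $\prec$-least element of that fibre having type $t_*$; in particular $x' \preceq x_\ell$, and $x'$ lies in the image of $\sigma$. It remains to see that $x'$ has type $t$ (not just $t_*$) and that $x \prec x'$. For the first point I would argue that among the elements of the fibre of type $t_*$, the type-$t$ ones are exactly those satisfying some first-order condition of rank $\omega$, and — provided $\omega$ was chosen large enough relative to all the continuity overheads already incurred for $\sigma$ (cf.\ \eqref{eq:threshold-b}) — the $\prec$-least type-$t_*$ element is forced to have type $t$ whenever the fibre contains a type-$t$ element; since $x_\ell$ is such an element, $x'$ has type $t$. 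For $x \prec x'$: if $x' = x_\ell$ we are done; otherwise $x' \prec x_\ell$, so we only get $x' \preceq x_\ell$, which does not immediately give $x \prec x'$.

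**Main obstacle.** This last point is where I expect the real difficulty, and I suspect the intended argument avoids it: one should instead define $x'$ directly so that $x \prec x'$ is forced. Concretely, I would not take $x' = \sigma(\pi(x_\ell))$ blindly, but observe that $\sigma(\pi(x))$, $\sigma(\pi(x_\ell))$ and $x_\ell$ all project to fibres related by $\pi$-coordinate shifts of size $p_i$, and chain $x \prec x_\ell$ with a comparison between $x_\ell$ and $\sigma(\pi(x_\ell))$ obtained from \eqref{eq:section-diag} applied to the section image — using that the factor of $2$ in $p = 2p_i + p_\structb$ (which reappears in the final proof of Lemma~\ref{lem:products-of-powers-of-linear-orders}) leaves ``slack'' to absorb the snap. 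That is, the honest route is to prove $x \prec x'$ by exhibiting $x'$ itself as a point reachable from $x$ by a $\prec$-ascending chain of steps each justified by \eqref{eq:section-i}, where the last step moves from $x_\ell$ to $\sigma(\pi(x_\ell))$ inside a single fibre — but a fibre of $\pi$ is not a fibre of a single $\pi_i$, so \eqref{eq:section-i} does not directly apply, and one must instead decompose this final adjustment into per-factor moves again and invoke that $\sigma$ returns the $\prec$-least element. Getting this bookkeeping exactly right — ensuring every intermediate tuple stays of type $t$, and that the only coordinates that ever change are the $d_i$'s (so that all intermediate points have the same $\pi$-image pattern one wants) — is the technical heart of the claim; everything else is an application of the Threshold Lemma and the already-established dominations \eqref{eq:section-i} and \eqref{eq:section-diag}.
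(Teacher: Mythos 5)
There is a genuine gap, and you have put your finger on it yourself: your construction only yields $x' \preceq x_\ell$, not $x \prec x'$. After the first phase you have $x \prec x_\ell$ with the correct $d_i$-entries, but the ``snap'' $x' \coloneqq \sigma(\pi(x_\ell))$ moves \emph{inside} a $\pi$-fibre, i.e.\ it changes only non-dominating coordinates, so \eqref{eq:section-i} gives you nothing, and the defining property of $\sigma$ (being the $\prec$-least type-$t_*$ element of the fibre) produces an inequality in the wrong direction. Your closing paragraph gestures at decomposing the adjustment into per-factor moves but still imagines a final within-fibre step justified by the minimality of $\sigma$; no such step can be made $\prec$-ascending, so the argument as written does not establish the claim.

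The paper's proof avoids the two-phase structure altogether. Say that $x$ is \emph{canonical on coordinate $i$} if $x[i]=z[i]$ for some $z$ in the image of $\sigma$; by compositionality (Theorem~\ref{thm:mostowski}), being canonical on every coordinate already implies membership in the image of $\sigma$, so no terminal snap is ever needed. One then inducts on the number of non-canonical coordinates: in each step one replaces the single component $x[i]$ by a canonical one whose $d_i$-entry is $x[i][d_i]+p_i$, leaving all other coordinates untouched. Because exactly one coordinate changes and its dominating entry increases by $p_i$, \eqref{eq:section-i} directly gives a $\prec$-increase at every step; the minimality in the definition of $\sigma$ never enters any comparison (it is only used earlier to make $\sigma$ definable and continuous). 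The fix for your proposal is therefore to fuse your two moves: perform the $d_i$-shift and the passage to a canonical $i$-th component simultaneously, one coordinate at a time, rather than shifting all dominating coordinates first and trying to land in the image of $\sigma$ afterwards.
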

\begin{proof}
    We say that $x$ is \emph{canonical on coordinate $i \in I$} if $x[i]=z[i]$ for some $z$ in the image of $\sigma$. By Theorem \ref{thm:mostowski}, if $x$ is canonical on all coordinates $i \in I$, then it is in the image of $\sigma$.   Therefore, we can prove the claim by induction on the number of coordinates $i \in I$ on which $x$ is canonical, and in the induction step we can make a single coordinate $i$ canonical, at the cost of shifting $x[i][d_i]$ by $p_i$ positions, thanks to~\eqref{eq:section-i}.
\end{proof}

Apply Claim \ref{cl:canonical} to $x$, yielding some $x'$ of type $t$ with $x \prec x'$. Apply a symmetric result to $y$, yielding some $y'$ of type $t$ with  $y' \prec y$ and 
\begin{align*}
    y'[i][d_i] = y[i][d_i] - p_i \qquad \text{for all $i \in I$}.
\end{align*}
By definition of $p$ and the assumption~\eqref{eq:assumption-two-p}, we see that 
\begin{align*}
    x'[i][d_i] <^{p_\structb} y'[i][d_i] \qquad \text{for $i=d$}
\end{align*}
and therefore~\eqref{eq:section-diag} can be applied to  conclude $x' \prec y'$, and thus also $x \prec y$. 
%!TEX root = main.tex
\subsection{Proof of the Product Domination Lemma}
\label{sec:pre-final-proof-of-domination}
In this section, we complete the proof of the \hyperref[lem:technical-domination]{Product Domination Lemma}, and therefore also of the \hyperref[lem:domination]{Domination Lemma}. 

Let $\omega$ be a threshold that is high enough, we will specify the lower bounds on $\omega$ throughout the proof. Let $t$ be a unary rank $\omega$ type in $\structa$.
Consider the projection 
\begin{align*}
    \pi \colon \structa \to \overbrace{\prod_{i \in I} \set{1,\ldots,n_i},<)^{k_i}}^{\structb}
\end{align*}
which maps each element of $\structa$ to the appropriate tuple of block numbers. This function is continuous.
Let $\omega_*$ be the threshold obtained by applying Lemma~\ref{lem:products-of-powers-of-linear-orders} to quantifier rank $r$ and the product $\structb$. We assume that 
\begin{align}
    \omega \ge \omega_*.
\end{align}
Let $t_*$ be the type of rank $\omega_*$ which stores the quantifier rank $\omega_*$ information of type $t$.

We say that $x,x' \in \structa$ \emph{overlap} if there is a block which intersects both $x$ and $x'$. More formally,
\begin{align*}
    (\pi(x))[i][j] = (\pi(x'))[i][j'] \quad \text{for some $i \in I$ and $j,j' \in \set{1,\ldots,k_i}$.}
\end{align*}
Note that overlapping is defined purely in terms of the image under $\pi$. The first step in the proof is the following claim, which shows that there is a dominating coordinate when only comparing non-overlapping elements. 

\begin{claim}\label{claim:overlap}
    There  exist $d \in I$, $e \in \set{1,\ldots,k_d}$ and $q \in \set{-\omega_*,\omega_*}$ such that
    \begin{align*}
        x[d][e] \sqsubset^{p_{\structb}} y[d][e] \quad \text{implies} \quad x \prec y \qquad \text{for  all non-overlapping $x,y \in \structa$ of type $t_*$}.
    \end{align*}  
\end{claim}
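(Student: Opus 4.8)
The plan is to transport the order $\prec$ along the projection $\pi\colon\structa\to\structb$, where $\structb=\prod_{i\in I}(\set{1,\ldots,n_i},<)^{k_i}$ records block numbers, to a first-order definable linear order $\prec_\structb$ on $\structb$, apply \hyperref[lem:products-of-powers-of-linear-orders]{Lemma~\ref{lem:products-of-powers-of-linear-orders}} to obtain a dominating coordinate $(d,e)$ and a polarity for $\prec_\structb$, and then pull this domination back to non-overlapping tuples of $\structa$ of type $t_*$. The reason the restriction to non-overlapping tuples is the right one is that for such tuples the relation $\prec$ depends only on the $\pi$-images, so that nothing is lost when passing to $\structb$.

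Concretely, I would first prove the following factorisation statement. Write $s_*$ for the rank $r$ type contained in $t$, so that every tuple of type $t_*$ has type $s_*$, since $\omega\ge\omega_*\ge r$. If $x,y$ and $x',y'$ are non-overlapping, $x\equiv_r x'$, $y\equiv_r y'$, $\pi(x)=\pi(x')$ and $\pi(y)=\pi(y')$, then $(x,y)\equiv_r(x',y')$ in $\structa$, and hence $x\prec y$ iff $x'\prec y'$. By compositionality for direct products (\hyperref[thm:mostowski]{Theorem~\ref{thm:mostowski}}, second item) it suffices to check $(x[i],y[i])\equiv_r(x'[i],y'[i])$ in each factor $\structa_i^{k_i}$; unfolding the power and applying compositionality for ordered products (\hyperref[thm:mostowski]{Theorem~\ref{thm:mostowski}}, first item, iterated over the blocks $\structa_{i,1},\ldots,\structa_{i,n_i}$), this reduces to matching, block by block, the rank $r$ type of the distinguished coordinates lying in that block. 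Because $x$ and $y$ never share a block, every block carries coordinates of at most one of the two tuples, so its contribution is determined by the rank $r$ type of $x$ (resp.\ $y$) together with $\pi(x)$ (resp.\ $\pi(y)$); the empty blocks all have the same rank $r$ type since $\structa_{i,1}\equiv_{r+k}\cdots\equiv_{r+k}\structa_{i,n_i}$, and the arrangement of occupied and empty blocks is read off from $\pi(x),\pi(y)$. Equal $\pi$-images and equal rank $r$ types of the endpoints therefore force identical block-by-block data.

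Next I would fix a \emph{factorwise} section $\sigma\colon\structb\to\structa$ of $\pi$: on each factor, $\sigma(b)[i]$ is the lexicographically least tuple of $\pi_i^{-1}(b[i])$ of rank $r$ type $s_*[i]$ if such a tuple exists, and the lexicographically least tuple of $\pi_i^{-1}(b[i])$ otherwise. Being factorwise, $\sigma$ is $c$-continuous for a constant $c$ depending only on $r$ and $k$ (the type and minimality checks only ever quantify over the at most $k$ coordinates of a single factor), so $\prec_\structb\coloneqq\sigma^{-1}(\prec)$ is a linear order on $\structb$ defined by a first-order formula of quantifier rank bounded in terms of $r$ and $k$; this is the rank to which \hyperref[lem:products-of-powers-of-linear-orders]{Lemma~\ref{lem:products-of-powers-of-linear-orders}} is applied to produce $\omega_*$. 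If $b=\pi(x)$ for some $x$ of type $t_*$, then each $\pi_i^{-1}(b[i])$ contains $x[i]$, which has rank $r$ type $s_*[i]$, so $\sigma(b)$ has rank $r$ type $s_*$ and occupies exactly the blocks that $x$ does; in particular $\sigma(\pi(x))$ and $\sigma(\pi(y))$ are non-overlapping whenever $x,y$ are.

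Finally, since $\pi$ is continuous, all tuples of type $t_*$ are sent to tuples of one and the same rank $\omega_*$ type $t_\structb$; let $d\in I$, $e\in\set{1,\ldots,k_d}$ and $p_\structb\in\set{-\omega_*,\omega_*}$ be the dominating coordinate and polarity delivered by \hyperref[lem:products-of-powers-of-linear-orders]{Lemma~\ref{lem:products-of-powers-of-linear-orders}} for $\prec_\structb$ and $t_\structb$. Given non-overlapping $x,y$ of type $t_*$ with $x[d][e]\sqsubset^{p_\structb}y[d][e]$, the block order on $\structa_d$ matches $<$ on block numbers, so $\pi(x)[d][e]<^{p_\structb}\pi(y)[d][e]$, whence $\pi(x)\prec_\structb\pi(y)$, i.e.\ $\sigma(\pi(x))\prec\sigma(\pi(y))$; since $\sigma(\pi(x)),\sigma(\pi(y))$ are non-overlapping, of rank $r$ type $s_*$, and have the same $\pi$-images as $x,y$, the factorisation statement gives $x\prec y$, which proves the claim with $q=p_\structb$. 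The main obstacle is the first step: making the block-by-block compositionality argument precise — in particular verifying that, for non-overlapping endpoints, the rank $r$ type of a pair is recoverable from the rank $r$ types and $\pi$-images of its two components with no increase in quantifier rank — and keeping the continuity overhead of $\sigma$ independent of the (unbounded) size of $I$ by insisting that the section act factor by factor.
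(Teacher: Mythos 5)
Your proposal follows essentially the same route as the paper's own proof: project along $\pi$ to $\structb$, pull $\prec$ back through a type-preserving section $\sigma$ so that Lemma~\ref{lem:products-of-powers-of-linear-orders} yields the dominating coordinate $(d,e)$, and then transfer the conclusion from the image of $\sigma$ to all non-overlapping tuples of type $t_*$ via the compositionality fact that, for non-overlapping pairs, the pair type is determined by the component types together with the $\pi$-data. The only differences are bookkeeping ones -- your explicit lex-least, factorwise section with its continuity overhead folded into the definition of $\omega_*$, versus the paper's block-wise ``canonical representative'' section -- and the rank-accounting point you flag as the main obstacle is exactly the step the paper itself dispatches with ``using compositionality, one can show\dots'' (with the $\equiv_{r+k}$ assumption on blocks supplying the needed slack), so your argument is sound at the same level of detail.
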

\begin{proof} One can show that there is a continuous section $\sigma$ of $\pi$ such that $\sigma \circ \pi$ preserves the type $t_*$, i.e.,~it maps  $t_*$ to a subset of itself. To define the section, one only needs to choose for each coordinate $(i,j)$ and each block of $\structa_i$ an appropriate representative such that tuples of type $t_*$ are mapped to tuples of type $t$. Using compositionality, we thus have that $\sigma$ maps tuples of the same type to tuples of the same type.

     Define  $\prec_\structb$ to be the pre-image of $\prec$ under this section. Since $\sigma$ is continuous, the order $\prec_\structb$ is  definable using quantifier rank $r$. By Lemma~\ref{lem:products-of-powers-of-linear-orders} and the definition of $\omega_*$,  there exist dominating coordinates $d \in I$, $e \in \set{1,\ldots,k_d}$ and a polarity $q \in \set{-\omega_*,\omega_*}$ such that
\begin{align}\label{eq:section-general}
    x[d][e] \sqsubset^q y[d][e] \quad \text{implies} \quad x \prec y \qquad \text{for  all $x,y \in \structa$ of type $t_*$ in the image of $\sigma$}.
\end{align}  
We now extend the above result to non-overlapping elements of type $t_*$, as required in the statement of the claim.
Using compositionality, one can show that if $x$ and $y$ are non-overlapping, then the binary rank $r$ type of the pair $(x,y)$ in $\structa$ is uniquely determined by the unary rank $r$ types of $x$ and $y$ as well as the binary rank $r$ type of $\pi(x,y)$. 
Since $\sigma \circ \pi$ does not change the value under $\pi$, it follows that $x$ and $y$ overlap if and only if their images with respect to $\sigma \circ \pi$ overlap. As we have argued at the beginning of this proof, $\sigma \circ \pi$ maps the set of tuples of type $t_*$ to a subset of itself, and therefore if $x$ and $y$ have type $t_*$, then also their images under $\sigma \circ \pi$ have type $t_*$. It follows that 
\begin{align*}
    (x,y) \equiv_{\omega_*} (\sigma \circ \pi)(x,y) \qquad \text{for all non-overlapping $x,y \in \structa$ of type $t_*$}
\end{align*}
By combining this observation with~\eqref{eq:section-general}, we obtain the conclusion of the claim.
\end{proof}

The general idea in the rest of the proof is to show that if $x,y \in \structa$ of type $t$ are possibly overlapping, then one can find a $z$ which overlaps neither with $x$ nor with $y$ and where $x \prec z \prec y$ can be shown using Claim~\ref{claim:overlap}. To find this $z$, we will shift $x$ (or $y$) by several blocks to the left or right, as explained below.

For $b \in \structb$ and  a possibly negative integer $\delta$, define $b+\delta$ to be the result of  adding $\delta$ to all coordinates of $b$. Note that $x+\delta$ might fall out of $\structb$, e.g.~because some coordinate might become negative. 
For $x \in \structa$, define $\Delta_x$ to be the set of integers $\delta$ such that
\begin{align*}
    \pi(y) = \pi(x) + \delta \qquad \text{for some $y \in \structa$ of type $t_*$}.
\end{align*}
The key observation is the following claim, which says that either $\Delta_x$ is big for all $x \in \structa$ of type $t$, or one can trivially find a dominating coordinate, because there is a choice of coordinates the values at which always lie in the same block.
\begin{claim}\label{claim:dichotomy}
    One of the following holds:
\begin{enumerate}
    \item There are $i \in I$ and $j \in \set{1,\ldots,k_i}$ such that
    \begin{align*}
        x[i][j] \not \sqsubset y[i][j] \qquad \text{for all $x,y \in \structa$ of type $t$}.
    \end{align*}
    \item For every $x \in \structa$ of type $t$, the set $\Delta_x$ contains $\set{-p,\ldots,p}$.
\end{enumerate} 
\end{claim}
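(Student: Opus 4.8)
The plan is to prove the dichotomy by assuming the first item fails and deducing the second. Failure of the first item means that for every coordinate $(i,j)$, with $i\in I$ and $j\in\set{1,\ldots,k_i}$, there are tuples $x,y$ of type $t$ with $x[i][j]\sqsubset y[i][j]$; equivalently, the block number of the $(i,j)$-th coordinate is not determined by the rank-$\omega$ type $t$. The first thing I would do is turn this qualitative statement into a quantitative one. Since the rank-$\omega$ type of a string is computed compositionally (Theorem~\ref{thm:mostowski}) and the resulting monoid of rank-$\omega$ types of strings is finite and aperiodic, a distinguished position whose block number is left open by a rank-$\omega$ type must lie deep in the ``stabilised interior'' of the block sequence $\structa_{i,1},\ldots,\structa_{i,n_i}$: after imposing a first lower bound on the global threshold $\omega$ (in terms of $p$, $k$ and the sizes of the relevant type monoids), the block number of $x[i][j]$ is at distance more than $p$ from both $1$ and $n_i$, for every $x$ of type $t$; in particular every $n_i$ in play is large.

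Then I would fix $x$ of type $t$ and an integer $\delta$ with $|\delta|\le p$ (the case $\delta=0$ being trivial, and $\delta<0$ symmetric to $\delta>0$) and construct a witness $y$ of type $t_*$ with $\pi(y)=\pi(x)+\delta$ by a block shift. Inside each $\structa_i$, group the positions $x[i][1],\ldots,x[i][k_i]$ by the block they occupy; at most $k$ of them sit in any single block $\structa_{i,m}$, and since $\structa_{i,m}\equiv_{r+k}\structa_{i,m+\delta}$, a routine Ehrenfeucht--Fra\"iss\'e argument (the rank-$(r+k)$ sentence ``there exist positions realising the internal type of this sub-tuple'' holds in $\structa_{i,m}$, hence in $\structa_{i,m+\delta}$) gives a matching sub-tuple in $\structa_{i,m+\delta}$, which I would use to define the corresponding entries of $y[i]$. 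Then $\pi(y)=\pi(x)+\delta$ by construction, and the shift stays in range by the interior-ness from the previous paragraph. The remaining and crucial point is to check that $y$ has type $t_*$, i.e.\ $y\equiv_{\omega_*}x$: cutting each $\structa_i$ at its distinguished positions, Theorem~\ref{thm:mostowski} expresses the rank-$\omega_*$ type of $x$ in $\structa$ as a fixed function of the rank-$\omega_*$ types of the at most $k+1$ resulting pieces of each $\structa_i$; the block shift sends each piece strictly between two consecutive distinguished positions to one with the same number of whole blocks, and each partial block carrying a distinguished position to an $\equiv_{r+k}$-equivalent one with a type-matched set of positions, and I would argue, using stabilisation in the aperiodic type monoid, that every such piece keeps its rank-$\omega_*$ type.

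I expect this last point to be the main obstacle. An isolated $\equiv_{r+k}$-equivalence of consecutive blocks is weaker than $\equiv_{\omega_*}$ (since $\omega_*$ is typically far larger than $r+k$), so by itself it does not guarantee that a long run of blocks keeps its rank-$\omega_*$ type under a shift by $\delta$; the argument must exploit the fact that once a coordinate is forced into the stabilised interior, the pieces of $\structa_i$ around it already have rank-$\omega_*$ types that are absorbing under block multiplication, which is exactly what makes the shift type-preserving. The rest is threshold bookkeeping -- choosing $\omega$ large enough relative to $\omega_*$, $p$, $k$ and the monoid sizes so that ``not determined by $t$'' genuinely implies ``far from both ends''. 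Once the second item is obtained this way, the dichotomy is proved; it is also worth noting that in the alternative given by the first item, the conclusion one is ultimately after -- namely ``$x[i][j]\sqsubset^{p} y[i][j]$ implies $x\prec y$'' -- holds vacuously, because its hypothesis is never satisfied.
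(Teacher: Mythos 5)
Your overall skeleton agrees with the paper's: the quantitative step ``if no coordinate has its block fixed by $t$, then every coordinate's block number is more than $p$ away from both ends'' is exactly the paper's anchored/unanchored case analysis, which the paper carries out in the quotient $\structb=\prod_{i\in I}(\set{1,\ldots,n_i},<)^{k_i}$ via the \hyperref[lem:threshold]{Threshold Lemma} (no aperiodicity is needed there). The genuine gap is the step you flag yourself: producing, for each $|\delta|\le p$, a witness $y\in\structa$ of type $t_*$ with $\pi(y)=\pi(x)+\delta$. Your plan -- shift $x$ block-by-block, transport each marked sub-tuple across the $\equiv_{r+k}$-equivalence of blocks, and argue piece-by-piece via Theorem~\ref{thm:mostowski} that the rank-$\omega_*$ type survives -- does not follow from the hypotheses you have available. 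The pieces between consecutive distinguished positions keep their number of whole blocks but now consist of \emph{different} blocks (shifted by $\delta$), and $\equiv_{r+k}$-equivalence of individual blocks says nothing about their rank-$\omega_*$ types: if, say, the blocks alternate between two structures that are $\equiv_{r+k}$-equivalent but distinguishable at rank $\omega_*$, then an odd shift changes the rank-$\omega_*$ type of essentially every piece, while item 1 still fails (the block number of a coordinate deep in the interior is not determined by its type). So your fallback hope that the blocks near an ``unfixed'' coordinate have absorbing rank-$\omega_*$ types is not implied by the failure of item 1; non-constancy of the block number constrains only the types of long prefixes and suffixes as wholes, not the types of the individual blocks. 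Even for the blocks carrying distinguished positions, an Ehrenfeucht--Fra\"iss\'e transfer across an $\equiv_{r+k}$-equivalence preserves the marked tuple's type only up to rank about $r$, far below $\omega_*$.

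The paper sidesteps exactly this difficulty by never trying to preserve the rank-$\omega_*$ type of a literal shift of $x$ inside $\structa$. It argues in $\structb$: by the \hyperref[lem:threshold]{Threshold Lemma}, an unanchored rank-$(s{+}1)$ type of block-number tuples is invariant under shifting all coordinates by $|\delta|\le 2^s$; the witness $y$ is then obtained by pulling the shifted block tuple back through the continuous section $\sigma$ of $\pi$ constructed for Claim~\ref{claim:overlap}, whose properties ($\sigma$ type-preserving, $\sigma\circ\pi$ preserving $t_*$) convert ``same rank-$\omega_*$ type in $\structb$'' into ``some preimage of type $t_*$ in $\structa$''. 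That section is where the homogeneity of the blocks is actually consumed (in the paper's application the blocks of each $\structb_I$ are $\equiv_{\omega_*}$-equivalent, not merely $\equiv_{r+k}$-equivalent, which is what makes the section type-preserving). What is missing from your proposal is precisely this pullback mechanism, or a substitute for it; without it the crucial type-preservation step remains unproven, and the piecewise-shift argument as you describe it would fail.
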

\begin{proof}
    For $s \in \set{0,1,\ldots}$ define $t^s$ to be rank $s$ information stored in type $t$, i.e.~this is the unique rank $s$ type contained in $t$. By continuity of $\pi$, the image of $t^s$ under $\pi$ is a rank $s$ type   in the structure $\structb$, call it $t^s_\structb$. By the \hyperref[lem:threshold]{Threshold Lemma}, every rank $s$ type in $\structb$ amounts to measuring distances between coordinates and minimal and maximal elements, up to threshold $2^s$. We say that $t^s_\structb$ is \emph{anchored} if it fixes the distance of some coordinate to either the minimal or maximal element, i.e.~some (equivalently, every) $x$ in $\structb$ of type $t^s_\structb$ is such that $x[i][j]$ has distance $<2^s$ from either $1$ or $n_i$ for some $i \in I$ and $j \in \set{1,\ldots,n_i}$. If $t^\omega_\structb$ is anchored, then item 1 in the claim holds. Suppose that $t^{s+1}_\structb$ is not anchored. It follows from the \hyperref[lem:threshold]{Threshold Lemma} that for every $b$ of type $t^{s+1}_\structb$ and every $\delta \in \set{-2^s,\ldots,2^s}$, 
      the shifted value $b+\delta$ has type $t^{s}_\structb$. In particular, if $s \ge \omega_*$ and $t^{s+1}_\structb$ is not anchored, then $\Delta_x$ has size at least $2^{s+1}$ for every $x$ of type $t^{s+1}_\structb$. Thus, with the assumption that
      \begin{align*}
          \omega > \omega_* + \log p,
      \end{align*}
      the result follows.
\end{proof}

Apply Claim \ref{claim:dichotomy}. If the first case holds, then $i$ and $j$ are dominating coordinates by vacuous truth. We are left with the other case. We will show that $d$ and $e$, as in Claim~\ref{claim:overlap}, are dominating coordinates. Assume that $x,y \in \structa$ have type $t$ and assume $x \sqsubset^p y$. We will show $x \prec y$. By the assumption on $\Delta_x$, we know that for every integer $\delta$ with $0 < \delta <p$, there is a $x_\delta$ of type $t_*$ such that
    \begin{align*}
        \pi(x_\delta) = \pi(x) + \delta.
    \end{align*}
    For every $i \in I$ and $j \in \set{1,\ldots,k_i}$, there are at most two choices of $\delta$ such that
    \begin{align*}
        (\pi(x))[i][j] = (\pi(x_\delta))[i][j] \quad \text{or} \quad (\pi(y))[i][j] = (\pi(x_\delta))[i][j].
    \end{align*}
    By a counting argument and thanks to the definition of $p$, it follows that there is some $p_\structb$ with 
    \begin{align*}
        p_\structb < \delta < p-p_\structb
    \end{align*}
    such that $x_\delta$ overlaps neither with $x$ nor with $y$. Therefore, we can use Claim~\ref{claim:overlap} to get $x \prec x_\delta \prec y$.

\end{document}